\title{Classifying grounded intersection graphs \\ via ordered forbidden patterns}
\author{Laurent Feuilloley$^{1,2}$ and Michel Habib$^2$ \\
$^1$ Univ. Lyon, Universit\'e Lyon 1, LIRIS, France \\
$^2$ IRIF, CNRS \& Paris University, France}
\theoremstyle{plain}
\newtheorem{Theo}{Theorem}
\theoremstyle{definition}
\newtheorem{Def}{Definition}
\newtheorem{lemma}{Lemma}
\theoremstyle{remark}
\newtheorem{Rk}{Remark}
\newtheorem{claim}{Claim}
\newtheorem{Obs}{Observation}
\newtheorem{open}{Open problem}
\newenvironment{claimproof}{\noindent\emph{Proof of the claim.}}{\hfill$\vartriangleleft$ \bigskip}
\newcommand{\CC}{\mathcal{C}}
\newcommand{\FF}{\mathcal{F}}
\definecolor{rouge1}{RGB}{170,  4,  0}
\definecolor{rouge2}{RGB}{218,  6,  0}
\definecolor{rouge3}{RGB}{255,  7,  0}
\definecolor{rouge4}{RGB}{255, 64, 59}
\definecolor{rouge5}{RGB}{255,148,146}
\definecolor{rouge6}{RGB}{255,170,170}
\definecolor{bleu1}{RGB}{14,  4,118}
\definecolor{bleu2}{RGB}{17,  3, 100}
\definecolor{bleu3}{RGB}{52, 51,250}
\definecolor{bleu4}{RGB}{93, 79,235}
\definecolor{bleu5}{RGB}{150,180,250}
\definecolor{bleu6}{RGB}{200,200,250}
\definecolor{bleu45}{RGB}{120, 120,240}
\definecolor{jaune1}{RGB}{170,152,  0}
\definecolor{jaune2}{RGB}{218,195,  0}
\definecolor{jaune3}{RGB}{255,228,  0}
\definecolor{jaune4}{RGB}{255,230, 20}
\definecolor{jaune5}{RGB}{255,243,146}
\definecolor{o6}{RGB}{255,237,158}
\definecolor{o5}{RGB}{255,231, 80}
\definecolor{o4}{RGB}{255,172, 50}
\definecolor{o3}{RGB}{255,120, 0}
\definecolor{o2}{RGB}{255,0, 0}
\definecolor{o1}{RGB}{150,0, 0}
\definecolor{g6}{RGB}{220,220,220}
\definecolor{g5}{RGB}{200,200,200}
\definecolor{g4}{RGB}{180,180,180}
\definecolor{g3}{RGB}{150,150,150}
\definecolor{g2}{RGB}{130,130,130}
\definecolor{g1}{RGB}{100,100,100}
\definecolor{gris_clair}{RGB}{220,220,220}
\definecolor{gris}{RGB}{180,180,180}
\definecolor{gris_fonce}{RGB}{100,100,100}
\definecolor{gg}{rgb}{0.35,0.35,0.35}
\definecolor{ggg}{rgb}{0.6,0.6,0.6}
\definecolor{gggg}{rgb}{0.7,0.7,0.7}
\definecolor{mon_orange}{RGB}{255,165,0}
\definecolor{mon_violet}{RGB}{200,90,200}
\definecolor{mon_violet2}{RGB}{190,150,220}
\definecolor{0-1}{RGB}{255,197, 57}
\definecolor{0-2}{RGB}{255,229,165}
\definecolor{0-3}{RGB}{255,211,104}
\definecolor{0-4}{RGB}{245,173,  0}
\definecolor{0-5}{RGB}{177,125,  0}
\definecolor{1-1}{RGB}{40,178,140}
\definecolor{1-2}{RGB}{146,225,203}
\definecolor{1-3}{RGB}{ 81,198,166}
\definecolor{1-3.5}{RGB}{40,180,136}
\definecolor{1-4}{RGB}{  0,161,116}
\definecolor{1-5}{RGB}{  0,116, 84}
\definecolor{2-1}{RGB}{255,128, 57}
\definecolor{2-1.8}{RGB}{255,210,185}
\definecolor{2-2}{RGB}{255,197,165}
\definecolor{2-2.5}{RGB}{255,170,125}
\definecolor{2-3}{RGB}{255,158,104}
\definecolor{2-3.5}{RGB}{255,133,52}
\definecolor{2-4}{RGB}{245, 88,  0}
\definecolor{2-5}{RGB}{177, 64,  0}
\definecolor{3-1}{RGB}{59, 81,184}
\definecolor{3-1.8}{RGB}{190,210,255}
\definecolor{3-2}{RGB}{157,169,227}
\definecolor{3-3}{RGB}{ 97,116,203}
\definecolor{3-3.5}{RGB}{55,80,180}
\definecolor{3-4}{RGB}{ 20, 46,167}
\definecolor{3-5}{RGB}{ 13, 32,121}
\definecolor{3-0}{RGB}{176,190,227}
\begin{document}

\maketitle{}

\begin{abstract}
It was noted already in the 90s that many classic graph classes, such as interval, chordal, and bipartite graphs, can be characterized by the existence of an ordering of the vertices avoiding some  ordered subgraphs, called \emph{patterns}. 
Very recently, all the classes corresponding to patterns on three vertices (including the ones mentioned above) have been listed, and proved to be efficiently recognizable. In contrast, very little is known about patterns on four vertices.

One of the few graph classes characterized by a pattern on four vertices is the class of intersection graphs of rectangles that are said to be \emph{grounded on a line}. 
This class appears naturally in the study of intersection graphs, and similar  grounded classes have recently attracted a lot of attention.

This paper contains three parts. First, we make a survey of grounded intersection graph classes, summarizing all the known inclusions between these various classes.
Second, we show that the correspondence  between a pattern on four vertices and grounded rectangle graphs is not an isolated phenomenon. We establish several other pattern characterizations for geometric classes, and show that the hierarchy of grounded intersection graph classes is tightly interleaved with the classes defined patterns on four vertices. We claim that forbidden patterns are a useful tool to classify grounded intersection graphs.
Finally, we give an overview of the 
complexity of the recognition of classes defined by forbidden patterns on four vertices and list several interesting open problems. 
\end{abstract}

\newpage{}


\section{Introduction}\label{sec:introduction}

There are several ways to approach well-structured graph classes, and two important ones are forbidden structures and intersection representations. 
In a nutshell, this paper is about establishing that there exists a strong relation between a type of forbidden structures and type of intersection representations. 

Some classic forbidden structures are  forbidden (induced) subgraphs and forbidden minors. 
In this paper, we will consider another type of forbidden structures, called \emph{patterns}. 
A pattern is basically an ordered graph, and we say that a graph avoids a pattern if there exists an ordering of its vertices such that the pattern does not appear~\cite{Damaschke90, FeuilloleyH21}.  
Various well-known classes have forbidden pattern characterizations, such as chordal, bipartite, comparability, permutation, split and threshold graphs. 
Characterizations by patterns are very useful. 
First, they can be more compact and manageable than forbidden induced subgraphs characterizations.
Second, they lead to efficient recognition algorithms, often based on graph traversals~\cite{HellMR14,FeuilloleyH21}.

Let us illustrate pattern characterization and their usefulness on the example of interval graphs.
A graph on~$n$~vertices is an interval graph if there exists a set of $n$~intervals that we can identify to the vertices, such that two intervals intersect if and only if the associated vertices are adjacent.  
It is well-known \cite{Olariu91, RamalingamR88} that interval graphs can also be characterized by a forbidden pattern. 
Namely, a graph is an interval graph, if and only if, there exists an ordering of its vertices, such that for any triplet of vertices, $a<b<c$, it is not the case that $(a,c)$ belongs to the graph, and $(a,b)$ does not. 
The forbidden pattern of this characterization is depicted in Figure~\ref{fig:pattern-drawing}. 

\begin{figure}[!h]
\begin{center}
\input{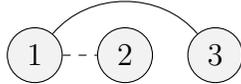}
\end{center}
\caption{\label{fig:pattern-drawing} Representation of the forbidden pattern for interval graphs. The general meaning of the edge styles is that in the forbidden pattern, the plain edges are present, the dashed edges are absent, and  if there is no edge between two vertices, there is no constraint.}
\end{figure}

This pattern characterization is very simple and compact, unlike the forbidden subgraph characterization of interval graphs, which consists in three graphs and two infinite families of graphs~\cite{LekkeikerkerB62}.
Also, thanks to this characterization, interval graphs can be recognized in linear time, because an ordering avoiding the pattern can be computed efficiently via graphs traversals \cite{CorneilOS09}.  

Interval graphs are also an example of intersection graphs.  
In a similar way as for interval graphs, given a set of geometric objects, one can define a graph by associating a node with every object and adding an edge between every two nodes whose geometric objects intersect. 
By allowing different types of geometric objects one can define different graph classes. 
Such geometric graphs have been well-studied, partly because they are related with practical algorithmic questions. 
The example of interval graphs illustrates how patterns and geometry can be related: the ordering of the left endpoints of the intervals, in the intersection representation, is an ordering that avoids the forbidden pattern. 

Another example of a known class that is characterized by both a pattern and an intersection model is the class of \emph{grounded rectangle graphs}. 
These graphs have been introduced a few years ago in several papers independently, under various names (diagonal-touched rectangle graphs~\cite{CorreaFPS15}, hook graphs~\cite{Hixon13}, max point-tolerance graphs~\cite{CatanzaroCFHHHS15}, and p-box(1) graphs~\cite{SotoC15}). 
Grounded rectangle graphs are the intersection graphs of rectangles grounded on a line, whose faces have angles $\pi/4$ and $3\pi/4$ with the grounding line. 
See Figure~\ref{fig:grounded-rectangles}. 
As proved independently in several of the papers cited above, this class is characterized by the forbidden pattern of Figure~\ref{fig:rectangle-patterns}. 
In other words, a graph is a grounded rectangle graph, if and only if, there exists an ordering of the nodes such that no subset of four nodes $a<b<c<d$, is such that: $(a,c)$ and $(b,d)$ are edges, and $(b,c)$ is not and edge.
Again, the ordering on the grounding line is an ordering that avoids the pattern. 
Note that the pattern of the interval graphs is included in the pattern of the grounded rectangle graphs. This is not a coincidence: grounded rectangle graphs generalize interval graphs, and this is witnessed by the pattern inclusion.

\begin{figure}[!h]
\centering
\begin{subfigure}{0.4 \textwidth}
	\centering
	\includegraphics[scale=1]%
	{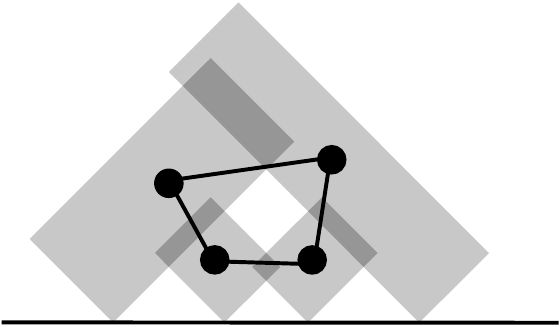}
	\vspace{0.2cm}
	\caption{\label{fig:grounded-rectangles} Grounded rectangles.}\end{subfigure}
\begin{subfigure}{0.4 \textwidth}
	\centering
	\vspace{1cm}
	\input{rectangle.tex}
	\vspace{1cm}
	\caption{\label{fig:rectangle-patterns} Associated pattern.}
\end{subfigure}
\caption{Grounded rectangle graphs and their forbidden pattern on 4 vertices.}
\end{figure}

In this paper, we investigate the relations between classes defined by intersecting objects grounded on a line, and classes defined by patterns of a specific shape. 
Beyond the fact that this connection is intriguing, we have two motivations for this study.

The first motivation comes from the study of patterns. 
A recent work \cite{FeuilloleyH21} established a complete characterization of the classes characterized by a set of forbidden patterns on three nodes. 
A remarkable fact is that basically all these classes have been studied before, and that they can all be recognized in polynomial time (and for most of them linear time). 
It is then natural to ask whether patterns on four nodes also define interesting classes that can be recognized efficiently. 
Answering this question seems very challenging. 
First, note that there are $3^6= 729$ different patterns on four vertices, thus a priori $2^{729}$ sets of patterns on 4 vertices. 
Even if many of these classes are equivalent because symmetries, it seems unrealistic to study all these classes exhaustively.
Second,  we know that the complexity landscape is going to be more complex, as there exists patterns on four vertices that define classes that are NP-hard to recognize.
Given this challenge, a reasonable place to start the study of patterns on four vertices is to generalize the geometric-pattern connection of grounded rectangle graphs.

The second motivation is on the geometry side. 
Establishing that two intersection graph classes are distinct can be a hard task, because it is not always easy to reason about geometric objects. 
In particular, it might be difficult to be exhaustive in a case analysis. 
On the other hand, considering an ordering of the nodes can be handy, because these are more combinatorial objects that can be easily enumerated. 
This also implies that it might be easier to find answers or conjectures by computer simulation.

\section{Overview}
\label{sec:overview}

Grounded rectangle graphs are just one example of grounded intersection graphs. By changing the shape, and the way the shapes are grounded, one can define and redefine many graph classes. 
For example, interval graphs correspond to shapes that are segments, and that are grounded on their two extremities. 
Other examples of well-known graph classes captured by this framework are permutation, cocomparability and interval filament graphs.
Many papers have defined new such classes, and established inclusions or incomparability between them. 
At this point, it is useful to have a common terminology (which we establish in Section~\ref{sec:definitions}), and a systematic survey of the classes and their inclusions (which we do in Section~\ref{sec:survey}).

In Sections~\ref{sec:touching} to \ref{sec:imcomparable}, we establish new equalities, inclusions and incomparabilities, between classes defined by forbidden patterns on four vertices and grounded intersection classes.
Let us give a simple example illustrating the type of results we get. 

Consider a configuration of unconstrained shapes in the plane grounded on a line, like in Figure~\ref{fig:grounded-intro}. 
We claim that the ordering of the shapes on the line avoids the pattern of Figure~\ref{fig:pattern-intro}. 

\begin{figure}[!h]
\centering
\begin{subfigure}{0.4 \textwidth}
\centering
\includegraphics[height=2 cm]%
{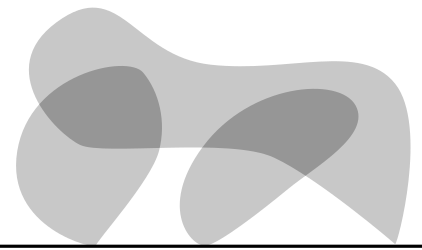}
\caption{\label{fig:grounded-intro}A grounded intersection model}
\end{subfigure}
\begin{subfigure}{0.4 \textwidth}
\centering
\input{grounded-pattern.tex}
\caption{\label{fig:pattern-intro} A pattern on four nodes}
\end{subfigure}
\caption{\label{fig:sample} A grounded intersection model and a related pattern on four nodes.}
\end{figure}

Let us sketch a proof (see Section~\ref{sec:convex-string}, for a proper proof). 
Suppose that the pattern appears when we consider the graph with the ordering of the shapes on the line. 
Consider the two shapes that correspond to position~1 and~3 in the pattern. 
They define two regions (above the line): the region below the shapes (that we could call the innerface) and the region above the shapes (the outerface). 
Since 2 is in the innerface, and 4 in the outerface, there should be an intersection between the shapes of numbers $\{1,3\}$ and $\{2,4\}$. which contradicts the fact that in the pattern of Figure~\ref{fig:pattern-intro}, no edge is allowed except $(1,3)$ and $(2,4)$.

Since we have not yet introduced all the notations, at the moment we cannot describe our results in details. Nevertheless, we will list them, as a roadmap for the paper that the reader can go back to, to get the big picture. 
Our results are enumerated below, and are illustrated (along with known results) in Figure~\ref{fig:diagram}. 
\begin{itemize}
\item Theorem~\ref{thm:touching-L-shapes} : The touching grounded L-shape graphs are the forests.
\item Theorem~\ref{thm:touching-strings}: The touching grounded string graphs are the outerplanar graphs.
\item Theorem~\ref{thm:bigrounded} (Edges 11 and 14 in Figure~\ref{fig:diagram}): Interval filament graphs (that we call bigrounded string graphs) are between $C_a$ and $C_{ab}$ in the inclusion hierarchy.
\item Theorem~\ref{thm:grounded-stairs} (Edge 19 and 22): Grounded stairs graph are between $C_{ab}$ and $C_{abc}$ in the inclusion hierarchy.
\item Theorem~\ref{thm:convex-Cabc} (Edge 24): Grounded convex graphs are included in $C_{abc}$.
\item Theorem~\ref{thm:grounded-strings} (Edge 27): String graphs (that we call grounded string graphs) are included in $C_{abcd}$.
\item Theorem~\ref{thm:pattern-strict} (Edge 13): $C_a$ is strictly included in $C_{ab}$, and $C_{abcd}$ does not contain all graphs.
\end{itemize}

The inclusions of grounded stairs and string graphs in, respectively, $C_{abc}$ and $C_{abcd}$ (Edges 22 and 24) can actually be derived from ~\cite{PachT19,PachT20}. 
We reprove them in our setting for completeness.

The rest of the inclusions described in Figure~\ref{fig:diagram} correspond to the following observations.

\begin{itemize}
\item Observation~\ref{obs:Cempty-Ca-Cb}: Edges 3, 9, 10.
\item Observation~\ref{obs:seg-stairs}: Edge 20.
\item Observation~\ref{obs:inclusion-shapes}: Edge 23 and 25.
\item Observation~\ref{obs:inclusion-patterns}: Edge 26.
\item Observation~\ref{obs:inequalities}: All the other edges.
\end{itemize} 

We can already make several observations about Figure~\ref{fig:diagram}.
We can see some classes we have mentioned: in the middle, grounded rectangle graphs and their pattern, and on the bottom interval graphs, and their pattern.
The inclusion we have just sketched is the edge labeled 27 (general shapes correspond to what we will call strings, and $C_{abcd}$ is the name we use for the class of the pattern of Figure~\ref{fig:pattern-intro}).
In general, we can see that pattern classes and geometric classes are intriguingly interleaved.

\begin{figure}[!h]
\begin{center}
\scalebox{0.85}{
\input{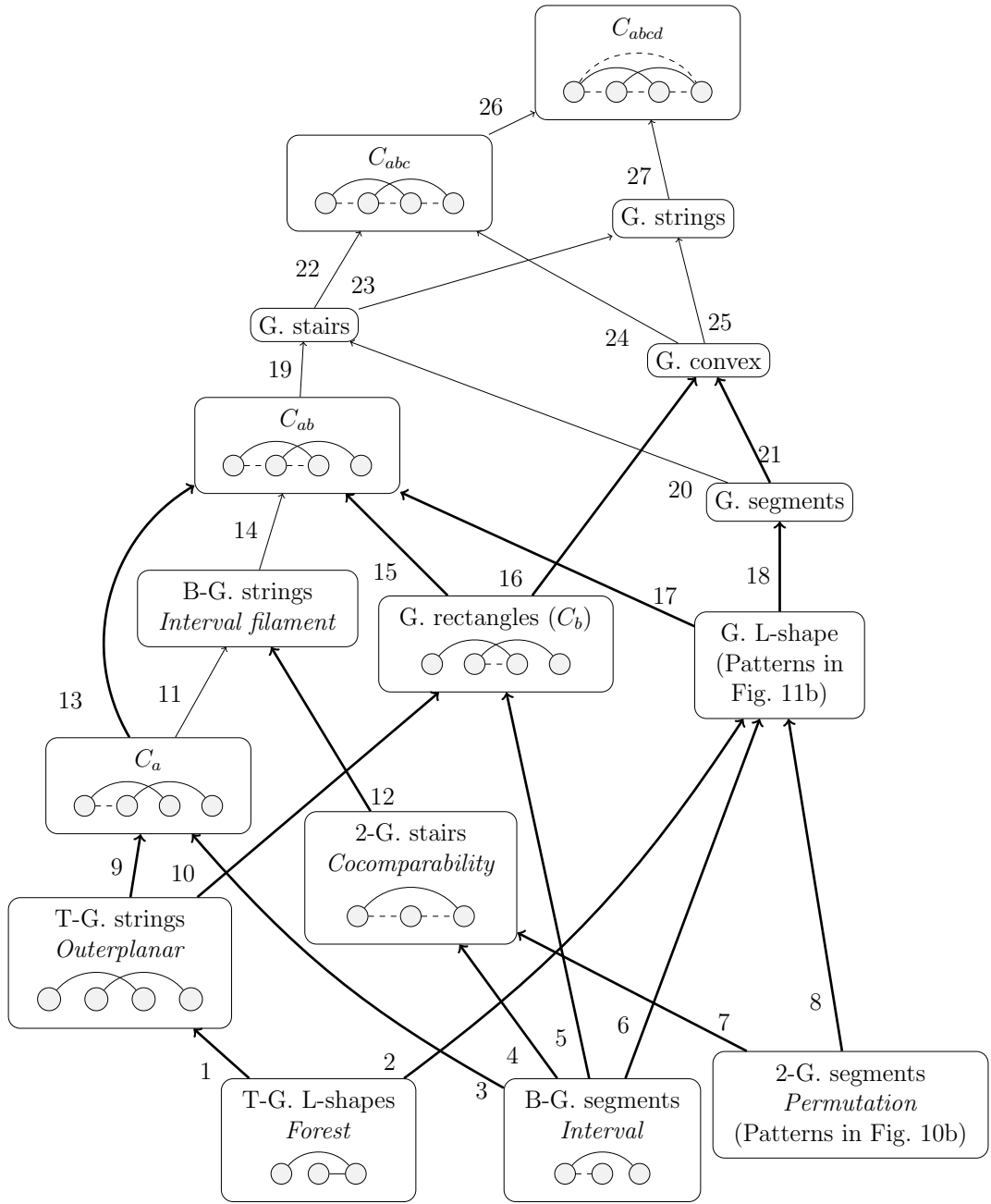}}
\end{center}
\caption{\label{fig:diagram} \textbf{Diagram of our results.}
The name of the classes in our vocabulary are written in normal font, the names in the literature are written in italic font.
Edges represent inclusions. Thick edges are known strict inclusions. 
G. stands for \emph{grounded}, T-G stands for \emph{touching grounded}, B-G. stands for Bigrounded, and 2-G. stands for 2-grounded.
}
\end{figure}

In the last section of the paper (Section~\ref{sec:recognition}), we review known results and open questions about the complexity of the recognition of the classes defined by forbidden patterns on four vertices.

\section{Preliminaries}
\label{sec:definitions}

\subsection{Patterns}
\label{subsec:preliminary-patterns}

We first define formally patterns, and the related concepts. We use the exact same vocabulary as in \cite{FeuilloleyH21}.

\begin{Def}
A \emph{trigraph} $T$ is a $4$-tuple $(V(T),E(T),N(T),U(T))$ where $V(T)$ is the vertex set and every unordered pair of vertices belongs to one of the three disjoint sets $E(T)$, $N(T)$, and $U(T)$ called respectively \emph{edges}, \emph{non-edges} and \emph{undecided edges}. A graph $G=(V(G),E(G))$ is a \emph{realization} of a trigraph $T$ if $V(G)=V(T)$ and $E(G)=E(T)\cup U'$, where $U'\subset U(T)$.
\end{Def}

\begin{Def}
An \emph{ordered graph} is a graph given with a total ordering of its vertices. We define the same for a trigraph, and call it a \emph{pattern}. 
We say that an ordered graph is a \emph{realization} of a pattern, the ordered graph and the pattern share the same set of vertices, the same linear order, and the graph is a realization of the trigraph.
\end{Def}

When drawing a pattern, we will use plain edges for edges, dashed edges for non-edges, and nothing for undecided edges (just like we did in the introduction). Also, in our drawings, the patterns are ordered from left to right.

\begin{Def}
In an ordered graph, if no subgraph is the realization of given pattern, we say that the ordered graph \emph{avoids} the pattern. 
Given a family of patterns $\mathcal{F}$, we define the class $\CC_{\FF}$ as the set of connected graphs that have an ordering avoiding the patterns in $\mathcal{F}$. 
\end{Def}

If $\mathcal{F}$ consists of only one pattern $P$, we abuse notation an simply write $\CC_{P}$, instead of $\CC_{\{P\}}$.
Note that one can equivalently consider forbidden ordered subgraphs instead of forbidden order trigraphs (which are the patterns). 
Like in \cite{FeuilloleyH21}, we prefer patterns, because they give more compact characterizations of the obstructions. 
Table~\ref{table:one-pattern-3} shows a list of graph classes, and their corresponding pattern.

\begin{table}[!h]
\begin{center}
\fbox{
\begin{tabular}{cc}
Graph Class & Forbidden Pattern  \\
\hline
\\[-0.15cm]
Linear forests & \input{path.tex} \vspace{0.27cm}\\
Stars & \input{star-3.tex} \\
Interval graphs & \input{interval-2.tex}  \vspace{0.27cm}\\
Split graphs& \input{split.tex}\\
Forest & \input{tree-2.tex} \vspace{0.27cm}\\
Bipartite graphs & \input{bipartite-3.tex}\\
Chordal graphs& \input{chordal-2.tex}\\
Comparability graphs & \input{comparability-2.tex}\\
Triangle-free graphs & \input{trianglefree-2.tex}\\[0.27cm]
At most two nodes & \input{null-2.tex} 
\end{tabular}
}
\vspace{0.5cm}
\caption{\label{table:one-pattern-3} The table from \cite{Damaschke90}. For each row the class on the first column is characterized by the pattern represented in the second column.}
\end{center}
\end{table}

As already noticed in \cite{Damaschke90, FeuilloleyH21, Wood06}, inclusions of patterns imply inclusions of classes.

\begin{Obs}
\label{obs:inclusion-patterns}
Consider two patterns $P$ and $P'$. 
We say that $P$ is included in $P'$ if $V(P)\subseteq V(P')$, $E(P)\subseteq E(P')$, and $N(P)\subseteq N(P')$. 
In this case, $\CC_P \subseteq \CC_{P'}$.  
\end{Obs}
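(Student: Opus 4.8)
The plan is to unfold the definition of pattern inclusion and show that any ordering witnessing membership in $\CC_P$ also witnesses membership in $\CC_{P'}$, via the contrapositive on the avoidance condition. Let $G$ be a connected graph in $\CC_P$, and fix a total ordering $<$ of $V(G)$ under which $G$ avoids $P$. I claim the same ordering makes $G$ avoid $P'$, which immediately gives $G \in \CC_{P'}$ and hence the desired inclusion $\CC_P \subseteq \CC_{P'}$.

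The core step is the contrapositive: I would suppose that $(G,<)$ does \emph{not} avoid $P'$, and derive that it does not avoid $P$ either. Concretely, if $(G,<)$ contains a subgraph that is a realization of $P'$, then there is an order-preserving injection $\phi$ from $V(P')$ into $V(G)$ realizing $P'$. Since $V(P) \subseteq V(P')$, I restrict $\phi$ to $V(P)$ and argue that $\phi|_{V(P)}$ realizes $P$ on the corresponding vertices of $G$. Here I must check each pair $\{u,v\}$ of vertices of $P$ separately, using the definition of \emph{realization} from the excerpt: a pair in $E(P)$ must map to an edge of $G$, a pair in $N(P)$ to a non-edge, while pairs in $U(P)$ impose no constraint. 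Because $E(P) \subseteq E(P')$, any pair that is an edge in $P$ is also an edge in $P'$, so $\phi$ sends it to an edge of $G$; symmetrically $N(P) \subseteq N(P')$ handles the non-edges; and undecided pairs of $P$ are unconstrained by definition. Thus every constraint of $P$ is satisfied by $\phi|_{V(P)}$, so $(G,<)$ contains a realization of $P$, contradicting avoidance of $P$.

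The only genuinely delicate point, and the place I would be most careful, is matching the three-way classification of pairs correctly: a pair undecided in $P$ might be an edge, a non-edge, or undecided in $P'$, but in all three cases it imposes no requirement on $G$ for $P$, so no contradiction can arise from it. Conversely I should note that the inclusion hypotheses $E(P)\subseteq E(P')$ and $N(P)\subseteq N(P')$ do \emph{not} force $U(P)\subseteq U(P')$ — indeed $U(P)$ typically shrinks as we pass to $P'$ — and it is precisely this asymmetry that makes the implication go in the stated direction (more constraints in $P'$ means it is harder to contain, so avoiding $P$ is the stronger property). Finally I would mention that the shared-vertex-set and shared-linear-order conditions in the definition of realization are automatically inherited by the restriction, since $V(P)\subseteq V(P')$ and the order on $V(P)$ is induced from that on $V(P')$.

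I expect no serious obstacle here: the statement is essentially a bookkeeping argument directly from the definitions, and the main risk is merely getting the direction of the inclusion and the role of the undecided edges straight. This matches the intuition already given for the interval/grounded-rectangle example in the introduction, where the interval pattern is literally a subpattern of the rectangle pattern and the class inclusion follows.
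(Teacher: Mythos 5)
Your proof is correct, and it is exactly the argument the paper has in mind: the paper states this as an Observation without proof (citing earlier work), and the intended justification is precisely yours --- the same vertex ordering that avoids $P$ also avoids $P'$, because restricting any realization of $P'$ to the vertices of $V(P)$ would yield a realization of $P$, using $E(P)\subseteq E(P')$ and $N(P)\subseteq N(P')$ pairwise. Your side remarks (that $U(P)$ need not be contained in $U(P')$, and that this asymmetry is what makes the inclusion go in the stated direction) are also accurate; there is nothing to add.
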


There is one type of pattern that will be important in this paper, and that we define now.

\begin{Def}
\label{def:Pabcd}
Consider patterns on four nodes $\{1,2,3,4\}$. We use the following names for some pairs of nodes: $a=(1,2)$, $b=(2,3)$, $c=(3,4)$, $d=(1,4)$ (see Figure~\ref{fig:abcd-patterns}).
Let $S \subseteq \{a,b,c,d\}$, the pattern $P_{S}$ is the pattern on four vertices, with edges $(1,3)$ and $(2,4)$, and non-edge set $S$. The class $\CC_{S}$ is the class where $P_{S}$ is forbidden.
\end{Def}

\begin{figure}[!h]
\centering
\scalebox{1}{
\input{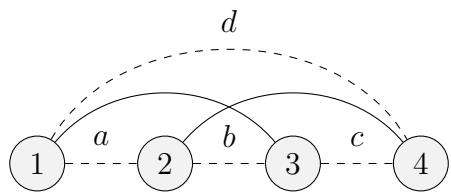}}
\caption{\label{fig:abcd-patterns}
Illustration of $P_{abcd}$ (Definition~\ref{def:Pabcd}).}
\end{figure}

\begin{Rk}
\label{Sym}
It is easy to see that a pattern and its mirror (the same pattern but with reversed order) define the same class.
Consequently $\CC_{a}=\CC_c$, $\CC_{a, d}=\CC_{c,d}$, $\CC_{a, b}=\CC_{b, c}$, $\CC_{b,c,d}=\CC_{a, b, d}$. 
Therefore, we can restrict attention to the following classes: $\CC_{\emptyset}$, $\CC_a$, $\CC_b$, $\CC_d$, $\CC_{a, b}$, $\CC_{a,c}$, $\CC_{a, d}$, $\CC_{b, d}$, $\CC_{a, b, c}$, $\CC_{a, b, d}$, $\CC_{a, c, d}$ and $\CC_{a, b, c, d}$.
\end{Rk}

Note that we have already mentioned that grounded rectangle graphs correspond to the pattern~$P_b$, and we discussed the pattern~$P_{abcd}$ at the beginning of Section~\ref{sec:overview}. 
It is folklore that outerplanar graphs are chacterized by $P_{\emptyset}$. 

\subsection{Grounded types}
\label{subsec:grounded-types}

\begin{Def}
We say that a configuration of shapes is grounded if it is in one of the two following situation. 
\begin{itemize}
\item There exists a horizontal straight line such that all the shapes are above the line, and are touching it in at least one point,
\item There exists a circle such that all the shapes are inside the circle, and are touching it in at least one point.
\end{itemize} 
We call the line (resp. circle) the \emph{grounding line} (resp. \emph{grounding circle}).
We also require that no two shapes touch the grounding line/circle on the same point (except for interval graphs, for which we require that the endpoints of the intervals are different). 
For the case of a grounding line, we can define the \emph{grounding ordering} as the order in which the shapes touch the line for the first time (from left to right). 
\end{Def}

We consider six grounding types, that are illustrated  Figure~\ref{fig:grounding-types}. 

\begin{Def}
\label{def:grounding-types}
A configuration of shapes is:
\begin{enumerate}[label=(\alph*), itemsep=-0.5ex]
\item \emph{Grounded}, if all the shapes are touching the \emph{grounding line} at least once.
\item \emph{Touching grounded}, if it is grounded, and the shapes only intersect on their borders, and not on their interiors. 
\item \emph{Outer}, if it has a grounding circle.
\item \emph{Bigrounded}, if every shape is touching the line on at least two points, and these two points are extremal in terms of abscissa.
\item \emph{2-Grounded}, if there is a second line above the grounding line, an all shapes are contained between the two lines, and touching the two lines.
\item \emph{Circle}, if it has a grounding circle that every shape touches twice.
\end{enumerate}
\end{Def}

\begin{Obs}
The touching grounded configuration is not well-defined for shapes that do not have interiors (\emph{e.g.} curves). 
For example, two segments that are crossing are intuitively not considered as touching, although they only intersect at their border.
In the following we will use a natural way to differentiate between configurations where curves are touching and where the curves are intersecting.
If, when we replace the lines by thick lines (with an interior), the drawing can be kept identical (that is only doing infinitesimal changes) then we will say that it is proper touching representation. 
If instead, replacing the lines by thick lines transforms the configuration into a drawing that cannot be fixed locally by minor modifications, then we will not consider the configuration as touching.  
\end{Obs}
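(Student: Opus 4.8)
The statement is best read as a \emph{well-definedness} claim: the thickening rule proposed at the end of the Observation singles out, for interior-free shapes such as curves, an unambiguous notion of \emph{touching} that restricts to Definition~\ref{def:grounding-types}(b) whenever the shapes already have an interior, and that classifies the motivating example of two crossing segments as non-touching. So the plan is to make the thickening operation precise and then to verify three properties: extension of the old definition, correctness on the motivating example, and independence of the auxiliary choices.

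First I would fix the thickening map. Given a configuration of curves $\{\gamma_i\}$ grounded on a line, replace each $\gamma_i$ by its closed $\varepsilon$-neighborhood $\gamma_i^{\varepsilon}=\{x:\mathrm{dist}(x,\gamma_i)\le\varepsilon\}$, which is a region with nonempty interior, keeping the same grounding line. This turns any curve configuration into a configuration of fattened shapes to which Definition~\ref{def:grounding-types}(b) applies verbatim. I would then \emph{define} the curve configuration to be touching when, for all sufficiently small $\varepsilon>0$, the shapes $\gamma_i^{\varepsilon}$ pairwise meet only along their boundaries (their interiors stay disjoint), and I would record that this is exactly the ``the $\varepsilon\to 0$ picture is combinatorially constant'' reading of the phrase \emph{``the drawing can be kept identical up to infinitesimal changes.''}

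The heart of the argument is a purely local analysis at each pairwise contact point $p$ of two curves $\gamma_i,\gamma_j$. In general position there are two local pictures. At a \emph{transversal crossing} (an ``X''), the two $\varepsilon$-neighborhoods overlap in a region of positive area for every small $\varepsilon$, so their interiors meet and the configuration is non-touching; since no infinitesimal move separates the two branches, this matches the ``cannot be fixed locally'' clause and in particular classifies two crossing segments as non-touching, as desired. At a one-sided contact (an endpoint resting on another arc, or a same-side tangency), $\gamma_i$ does not locally separate a neighborhood of $p$ into two parts both met by $\gamma_j$, the fattened regions meet only along a boundary arc, and this persists as $\varepsilon\to 0$: such a contact is touching. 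Making ``crossing'' precise through this local separation property is what I expect to give both the consistency with Definition~\ref{def:grounding-types}(b) (when interiors already exist, small fattening is a deformation retract that preserves every interior-versus-boundary contact type) and the independence of the outcome from the particular $\varepsilon$ and the particular neighborhood shape used.

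The main obstacle is the degenerate configurations: curves tangent to high order, curves sharing a whole sub-arc, or curves meeting at a point where one of them is not locally a simple arc. There the fattened interiors can overlap in a region whose area is $o(1)$ but nonzero, and whether this should count as touching may genuinely depend on the thickening. I expect to dispose of these by restricting, as the rest of the paper implicitly does, to configurations in general position (finitely many pairwise contacts, each a transversal crossing or a simple one-sided contact); the ``only infinitesimal changes'' hypothesis in the statement is precisely the clause that lets one perturb away the non-generic overlaps while leaving the intended combinatorial structure intact.
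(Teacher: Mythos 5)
The first thing to note is that the paper offers no proof of this Observation: it is a definitional convention (fixing how ``touching'' is to be interpreted for shapes without interiors), not a theorem. So the only thing one can check is whether a proposed formalization captures the intended notion, and your reading of the statement as a well-definedness claim is reasonable in spirit. However, your formalization has a genuine defect.

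You define the curve configuration to be touching when, for all sufficiently small $\varepsilon>0$, the closed $\varepsilon$-neighborhoods $\gamma_i^{\varepsilon}$ have pairwise disjoint interiors. This criterion is never satisfied by any configuration in which two curves share even a single point: if $p\in\gamma_i\cap\gamma_j$, then the open ball of radius $\varepsilon$ around $p$ lies in the interior of both $\gamma_i^{\varepsilon}$ and $\gamma_j^{\varepsilon}$, so the interiors always overlap in a set of positive area. Consequently your definition classifies \emph{every} contact as non-touching --- including the one-sided contacts the paper needs to count as touching, such as an L-shape's right-most point resting on another L-shape's vertical segment (Theorem~\ref{thm:touching-L-shapes}) or the string contacts realizing outerplanar graphs (Theorem~\ref{thm:touching-strings}) --- and ``touching grounded'' would collapse to ``pairwise disjoint'', contradicting both of those theorems. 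Your local analysis then asserts that at a one-sided contact ``the fattened regions meet only along a boundary arc''; this is false for literal $\varepsilon$-neighborhoods, by the same ball argument. What actually distinguishes a one-sided contact from a transversal crossing is not the intersection pattern of the literal neighborhoods but whether the thickened drawing can be \emph{repaired by an infinitesimal local modification}: at a one-sided contact one can retract the thickened shape slightly so that it merely abuts the boundary of the other while the contact is preserved, whereas at a crossing any thickening that keeps the two shapes meeting on both sides forces an interior overlap that no small perturbation removes. That repairability clause is precisely the content of the paper's phrase ``the drawing can be kept identical (that is only doing infinitesimal changes)'', and it --- not the $\varepsilon\to 0$ behaviour of literal neighborhoods --- is what a formal definition must be built on. Your general-position caveat in the last paragraph is sensible and consistent with the paper's implicit assumptions, but it does not rescue the core criterion.
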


\begin{figure}[!h]
\begin{center}
	\begin{subfigure}{0.3 \textwidth}
		\centering
		\includegraphics[height=2 cm]%
		{grounded-3.png}	
		{}%
		{\caption{\label{fig:grounded}
		Grounded.}}
	\end{subfigure}
	\hspace{0.5cm}
	\begin{subfigure}{0.3 \textwidth}
		\centering
		\includegraphics[height=2.5cm]
		{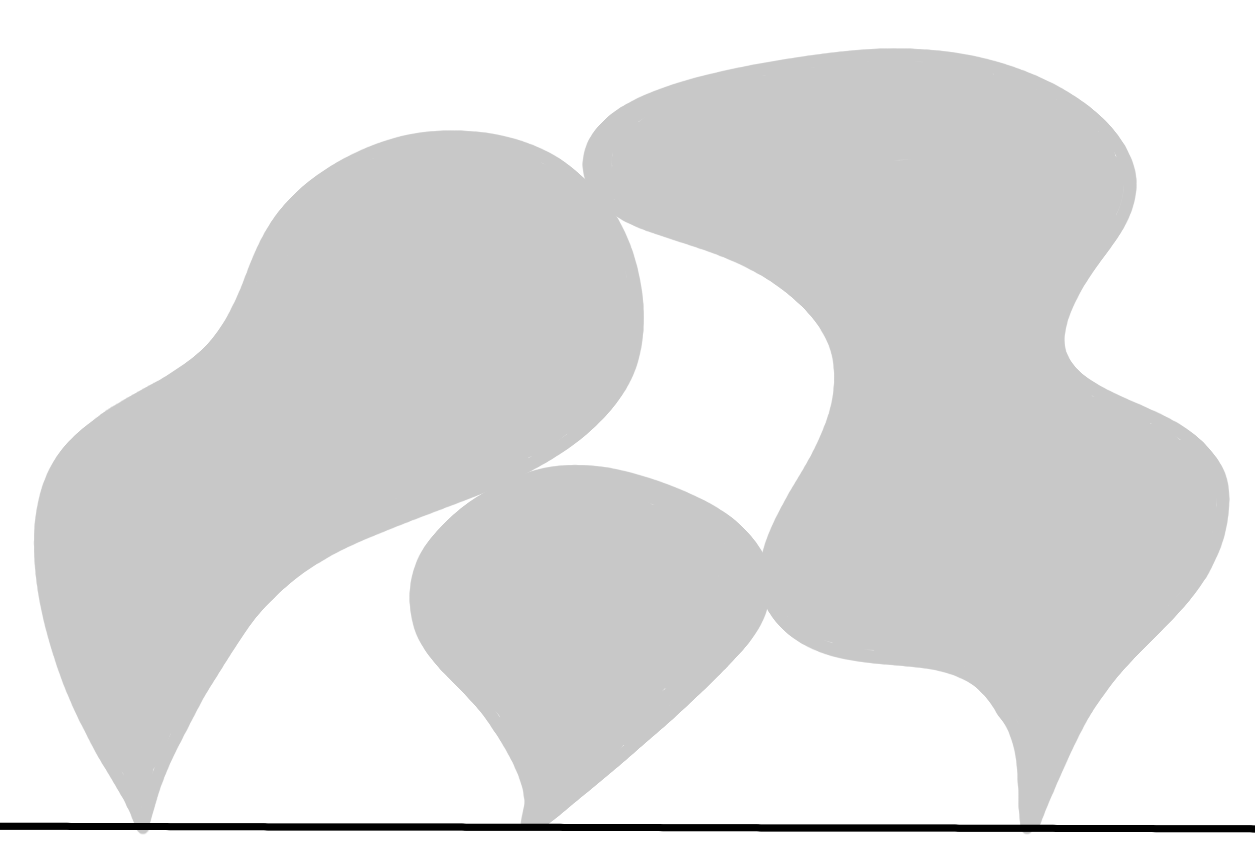}
		{}%
		{\caption{\label{fig:touching}
		Touching grounded.}}
	\end{subfigure}
	\hspace{0.5cm}
	\begin{subfigure}{0.3 \textwidth}
		\centering
		\includegraphics[height=2.5cm]
		{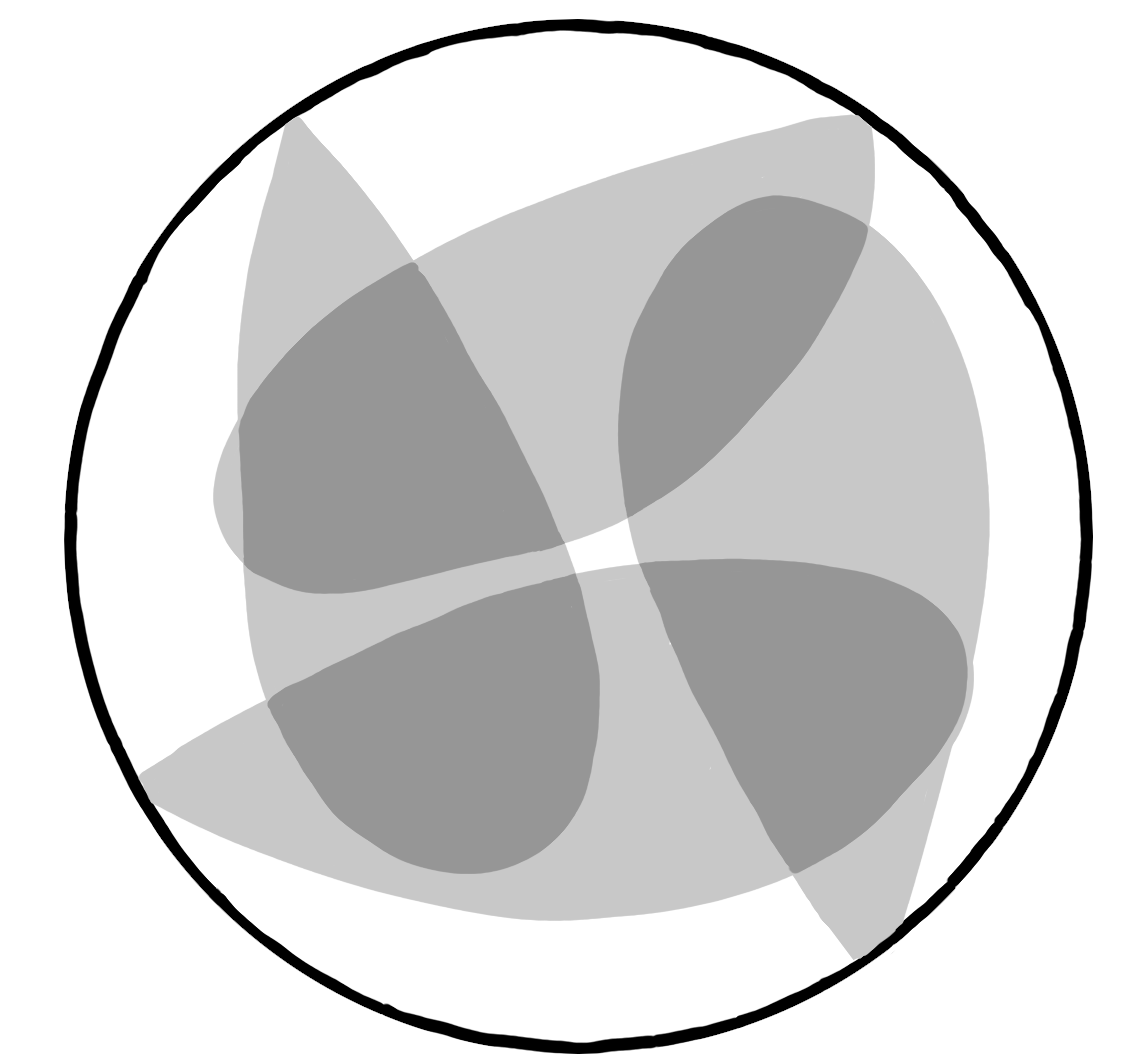}
		{}%
		{\caption{\label{fig:outer}
		Outer.}}
	\end{subfigure}
		\\[0.5cm]
		\begin{subfigure}{0.3 \textwidth}
		\centering	
		\includegraphics[height=2cm]
		{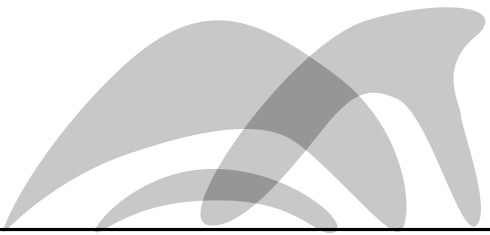}
		{}%
		{\caption{\label{fig:bigrounded}
		Bigrounded.}}
	\end{subfigure}	
	\hspace{0.5cm}
	\begin{subfigure}{0.3 \textwidth}
		\centering
		\includegraphics[height=2.5cm]
		{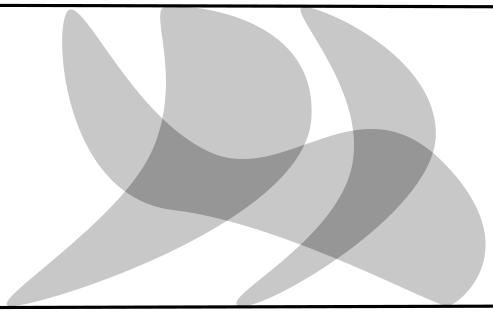}
		{}%
		{\caption{\label{fig:2-grounded}
		2-grounded.}}
	\end{subfigure}	
	\hspace{0.5cm}
	\begin{subfigure}{0.3 \textwidth}
		\centering
		\includegraphics[height= 2.5cm]
		{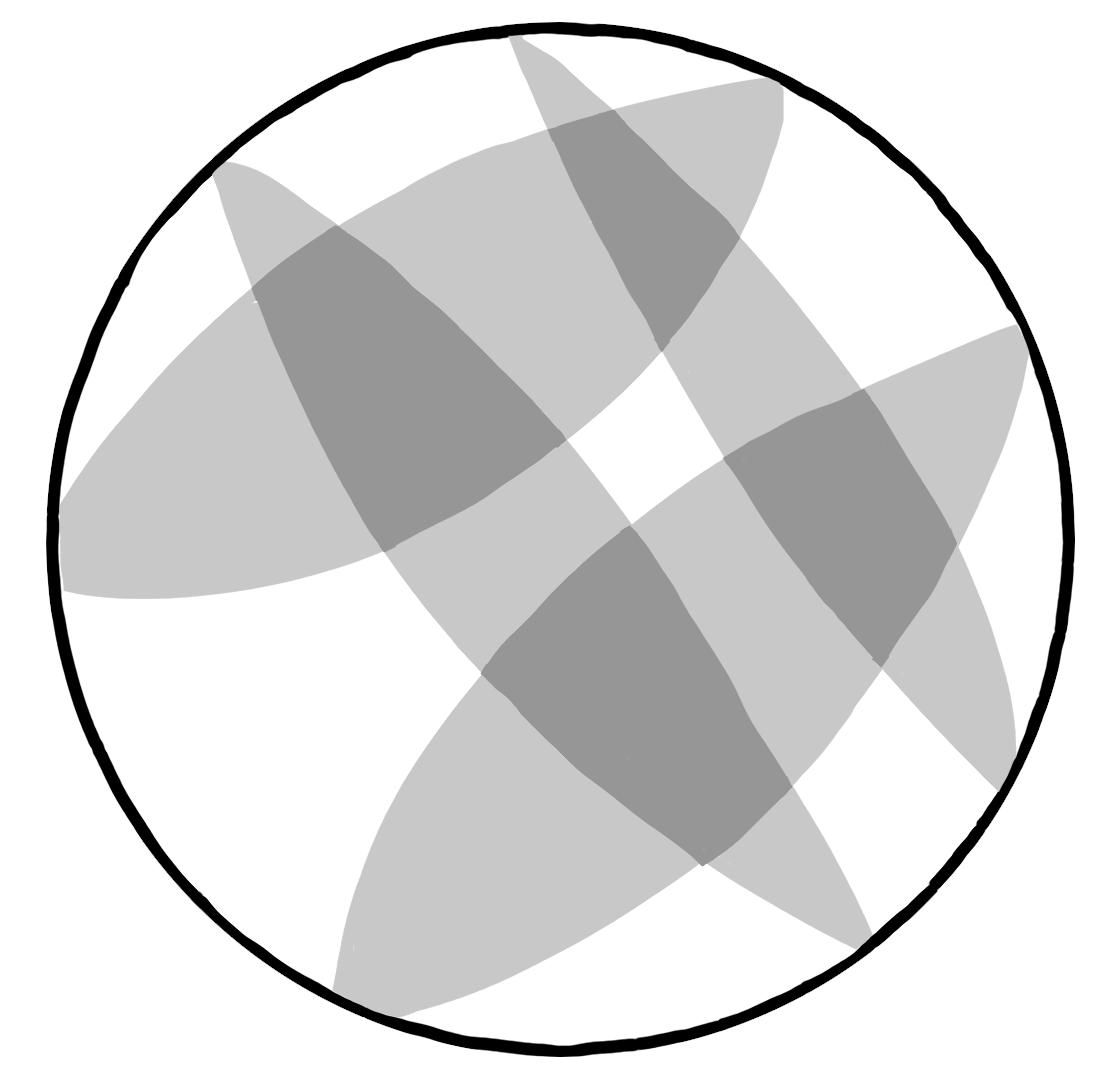}
		{}%
		{\caption{\label{fig:circle}
		Circle.}}
	\end{subfigure}
	\caption{\label{fig:grounding-types}  Grounding types of Definition~\ref{def:grounding-types}.}
\end{center} 
\end{figure}

\subsection{Relations between grounding types}
\label{subsec:groundings}

The different grounding types are related, and some inclusions follow automatically from these relations. Let us list these relations. 
We refer to a generic shape type $S$, that we are going to properly defined later, but one can think of $S$ as being segments, convex polygons, lines etc. 
We start by making a few informal observations, to show how one can transform one grounding type into another. Note that some shapes are more difficult to manipulate, and that some observation work only for some shapes.

\begin{Obs}\label{obs:inclusions-grounding}
For every shape $S$, touching-grounded-S graphs, bigrounded-S graphs and 2-grounded-S graphs are included in  grounded-S graphs, because the definitions of the former are more constrained than the definition of the later. 
For the same reason, circle-$S$ graphs are contained in outer-$S$ graphs.
\end{Obs}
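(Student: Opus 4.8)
The plan is to verify each of the two asserted inclusions directly from the definitions in Definition~\ref{def:grounding-types}, treating the statement as essentially a bookkeeping observation about which geometric constraints are strictly stronger. For a fixed shape type $S$, a representation witnessing membership in one of the constrained classes (touching-grounded, bigrounded, or $2$-grounded) is already a grounded representation, so membership transfers automatically; the core of the argument is just to check that every requirement appearing in the definition of ``grounded'' is implied by the requirements of each of the three special types.

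First I would handle the three line-based inclusions. In each case I take an arbitrary graph $G$ admitting a representation of the stronger type and exhibit the \emph{same} family of shapes as a grounded representation. For the touching-grounded case, the definition literally reads ``it is grounded, and \dots,'' so the grounding line and the property that every shape touches it are inherited verbatim; the extra condition (shapes meeting only on their borders) is simply dropped. For the bigrounded case, the hypothesis that each shape touches the grounding line in (at least) two abscissa-extremal points in particular implies it touches the line in at least one point, which together with the common horizontal line gives a grounded configuration. For the $2$-grounded case, the lower of the two parallel lines plays the role of the grounding line: every shape lies above it and touches it, so again the configuration is grounded. In all three cases the intersection pattern of the shapes is unchanged, hence the realized graph is identical, which is exactly what is needed for class inclusion.

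Next I would treat the circle/outer inclusion in the same spirit. A circle representation supplies a grounding circle that every shape touches twice; in particular every shape touches the grounding circle at least once, which is precisely the condition defining an outer representation. Taking the identical shapes and the identical circle therefore yields an outer representation of the same graph, giving circle-$S$ $\subseteq$ outer-$S$.

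There is essentially no hard part here: the statement is a direct consequence of the definitions, as the observation itself notes (``because the definitions of the former are more constrained''). The only point requiring a moment's care is making explicit, for the $2$-grounded case, that one of the two bounding lines serves as the grounding line, and for the bigrounded case, that touching in two extremal points entails touching in at least one point; both are immediate. Since the grounding-ordering and touching refinements are never used, nothing about the shape type $S$ enters the argument, so the observation holds uniformly for every $S$, as claimed.
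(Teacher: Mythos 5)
Your proposal is correct and follows exactly the paper's reasoning: the paper gives no separate proof, since the observation's justification is precisely that each constrained definition (touching-grounded, bigrounded, 2-grounded, circle) implies the corresponding weaker one (grounded, outer) with the same representation. Spelling out that the identical shapes and grounding line/circle serve as the weaker-type witness is exactly the intended argument.
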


\begin{Obs}\label{obs:line-to-circle}
For several shape types $S$, grounded-S graphs are included in outer-$S$ graphs, and bigrounded graphs are included into circle-$S$ graphs. 
This is because one can transform a straight line into a small arc of a large circle via a small deformation. If $S$ is stable by such deformation, then we can transfer the intersection representations. 
See Figure~\ref{fig:line-to-circle} for the case of grounded-S and outer-S configurations.
\end{Obs}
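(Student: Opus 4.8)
The plan is to realize the grounding line as the lower boundary of a very large disk and to carry the shapes along by a deformation that moves each of them only slightly. For the first inclusion, start from a grounded-$S$ representation with grounding line $\ell$. There are finitely many shapes, so they all lie in a bounded rectangle $[x_0,x_1]\times[0,y_1]$ sitting above $\ell$. For a radius $R$, let $C_R$ be the circle of radius $R$ tangent to $\ell$ from above at the midpoint of $[x_0,x_1]$; as $R\to\infty$ its lower arc over $[x_0,x_1]$ converges uniformly to $\ell$, the vertical displacement of a point of $\ell$ being of order $1/R$. The key step is then to define a homeomorphism $\phi_R$ of the plane that sends $\ell$ onto the lower arc of $C_R$ and the closed upper half-plane into the closed disk bounded by $C_R$, and that is within $O(1/R)$ of the identity on the bounding rectangle. (Note that merely keeping the shapes fixed would fail: a point of $\ell$ far from the tangency lies strictly outside $C_R$, so the deformation is essential.)

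Applying $\phi_R$ to every shape, I would observe that the intersection graph is unchanged, since a homeomorphism preserves both intersections and non-intersections of sets; moreover each image still meets $C_R$ (where its preimage met $\ell$) and now lies inside the disk. Hence, provided each $\phi_R(\text{shape})$ is again a shape of type $S$, the resulting configuration is an outer-$S$ representation of the same graph, establishing grounded-$S\subseteq$ outer-$S$. Taking $R$ large makes the restriction of $\phi_R$ to any single shape arbitrarily close to the identity, so only the local geometry of each object is perturbed (see Figure~\ref{fig:line-to-circle}).

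For the second inclusion I would argue analogously, but bend the line all the way into a closed circle. In a bigrounded-$S$ representation each shape meets $\ell$ at its two extreme-abscissa points and lies above $\ell$ between them. Wrapping $\ell$ onto a large grounding circle, so that ``above the line'' becomes ``inside the disk'', sends these two contact points to two points of the circle and places each shape inside, yielding a circle-$S$ representation. Although wrapping a line into a circle is a globally large deformation, choosing the circumference huge forces every shape to occupy only a tiny arc, on which the map is again nearly the identity; this is what makes the two required contact points survive and the shape stay inside.

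The main obstacle — and the reason the statement is phrased for \emph{several} shape types rather than all of them — is exactly the stability proviso ``if $S$ is stable by such deformation''. A homeomorphism automatically preserves purely topological families such as arbitrary strings, so those inclusions are immediate. However, it need not preserve metric or affine features: a segment bent over an arc is no longer a segment, and convexity or the prescribed angles of rectangles and L-shapes can be destroyed. For each concrete $S$ one must therefore verify separately that the class is closed under the (arbitrarily small, on each individual shape) deformation above; where this holds both inclusions follow from the construction, and where it fails the relation has to be settled by other means.
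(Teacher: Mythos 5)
Your proposal is correct and takes essentially the same approach as the paper: the paper's stated justification is precisely the informal version of your construction, namely deforming the grounding line into a small arc of a very large circle, carrying the shapes along by a near-identity deformation, and invoking the proviso that the shape type $S$ be stable under this deformation. Your formalization via the homeomorphism $\phi_R$ (including the remark that leaving the shapes fixed would fail, and that metric types like segments or rectangles need the stability check) simply makes explicit what the paper leaves at the level of Figure~\ref{fig:line-to-circle}.
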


\begin{figure}[!h]
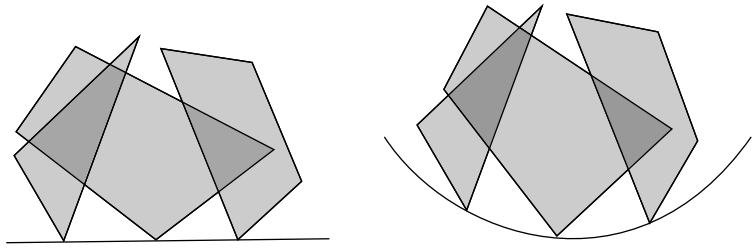

\begin{center}
\begin{tabular}{cc}
\begin{minipage}{0.3 \textwidth}
\input{line-to-circle-1.tex}
\vspace{0.4cm}
\end{minipage}
&
\begin{minipage}{0.3 \textwidth}
\input{line-to-circle-2.tex}
\end{minipage}
\end{tabular}
\vspace{-0.9cm}
\end{center}
\caption{\label{fig:line-to-circle} 
Illustration of Observation~\ref{obs:line-to-circle} with grounded-polygons and outer-polygons.}
\end{figure}

\begin{Obs}\label{obs:two-lines-to-circle}
For several shape types, 2-grounded-$S$ graphs are included into circle-$S$ graphs. 
This because two parallel lines can be transformed into a slice of a large circle with a small deformation. See Figure~\ref{fig:two-lines-to-circle}
\end{Obs}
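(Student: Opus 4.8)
The plan is to exhibit, for a 2-grounded-$S$ configuration, a deformation of the region between the two parallel lines into a disk that carries \emph{both} lines onto a single circle, and then to check that it preserves both the intersection pattern and, for the shape types we care about, the shape type. Concretely, I would start from a 2-grounded-$S$ representation as in Definition~\ref{def:grounding-types}: all shapes lie in the strip bounded below by the grounding line $\ell_1$ and above by the second line $\ell_2$, and each shape touches $\ell_1$ in a point and $\ell_2$ in a point. Truncating the strip to a large rectangle $R$ that contains every shape, the boundary $\partial R$ consists of the two horizontal sides (portions of $\ell_1$ and $\ell_2$) together with two short vertical sides at the far left and far right.

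First I would fix a circle $C$ and a homeomorphism $\Phi$ from $R$ onto the closed disk bounded by $C$ that sends the bottom side (on $\ell_1$) onto the lower arc of $C$, the top side (on $\ell_2$) onto the upper arc of $C$, and the two vertical sides onto the two points where these arcs meet (the leftmost and rightmost points of $C$), with the interior of $R$ mapped to the interior of the disk. This is precisely the sense in which \emph{two parallel lines are transformed into a slice of a large circle} (Figure~\ref{fig:two-lines-to-circle}): the strip becomes the disk, and the two lines become the two complementary arcs of its boundary.

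Next I would push every shape through $\Phi$. Because each shape met $\ell_1$ once and $\ell_2$ once, its image meets the lower arc once and the upper arc once, hence meets $C$ exactly twice while remaining inside the disk otherwise; this is exactly the circle configuration of Definition~\ref{def:grounding-types}. Since $\Phi$ is a homeomorphism, two shapes intersect if and only if their images do, so the intersection graph is unchanged and we obtain a circle representation of the same graph.

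The step that requires care --- and the reason the statement is phrased for \emph{several} shape types rather than all --- is that $\Phi$ must send $S$-shapes to $S$-shapes. For segments this is immediate and classical (it is essentially the usual proof that permutation graphs are circle graphs, matching the 2-grounded-segments/permutation entry in Figure~\ref{fig:diagram}): the segment with one endpoint on each line becomes the chord of $C$ joining the corresponding boundary points, and two such segments cross in the strip exactly when their four endpoints interleave around $C$, that is, exactly when the chords cross. For richer families one must instead choose $\Phi$ adapted to $S$, bending each line into its arc while keeping the interior structure of the shapes under control, which is possible for shape types stable under such a bending but not in general. Verifying shape preservation for a given $S$ is the only genuine obstacle; as with Observation~\ref{obs:line-to-circle}, it is routine for the concrete families considered here, and we leave it at the informal level.
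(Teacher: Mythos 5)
Your proposal is correct and takes essentially the same route as the paper: the paper's entire justification is that the strip between the two parallel lines can be deformed into a disk of a large circle so that the lines become the two complementary boundary arcs, carrying the shapes along --- exactly what your map $\Phi$ does --- with the same caveat that shape-type preservation under the deformation is what restricts the claim to ``several'' shape types. One minor technical remark: as described, $\Phi$ collapses each vertical side of $R$ to a point and is therefore a quotient map rather than a homeomorphism, but this is harmless since the shapes stay away from those sides (truncate $R$ slightly larger, or send the vertical sides to two short arcs of the circle instead of points).
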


\begin{figure}[!h]
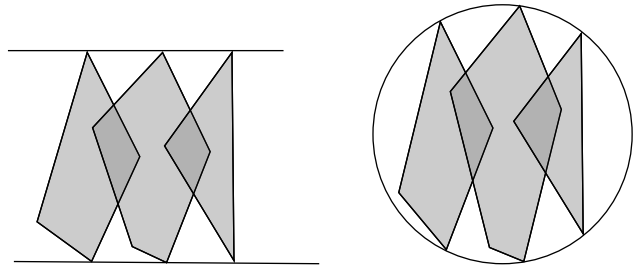

\begin{center}
\begin{tabular}{cc}
\input{two-lines-to-circle-1.tex}
&
\input{two-lines-to-circle-2.tex}
\end{tabular}
\end{center}
\caption{\label{fig:two-lines-to-circle}Illustration of Observation~\ref{obs:two-lines-to-circle} with 2-grounded-polygons and circle-polygons.}
\end{figure}

\begin{Obs}[\cite{Gavril00}]\label{obs:two-lines-one-line}
For several shape types, 2-grounded-$S$ graphs are included into bigrounded-$S$ graphs.
This is because we can take the upper line and plug it to the first line, continuously curving the shapes without creating new intersections. In this case there is a point $P$ on the line such that any shape has one grounding point on the left of $P$ and one grounding point on the right of $P$. See Figure~\ref{fig:two-lines-one-line}.
\end{Obs}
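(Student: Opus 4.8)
The plan is to show that any 2-grounded-$S$ representation can be continuously deformed into a bigrounded-$S$ representation of the same graph, by ``unrolling'' the strip between the two parallel lines onto the region above a single line. First I would fix coordinates. By Definition~\ref{def:grounding-types}, a 2-grounded configuration has all shapes inside the closed strip bounded below by the grounding line $\ell_1$ and above by a parallel line $\ell_2$, with every shape touching both $\ell_1$ and $\ell_2$. After rescaling and padding, I may assume the shapes lie in $\{(x,y): -1 < x < 1,\ 0 \le y \le 1\}$, with $\ell_1 = \{y=0\}$ and $\ell_2 = \{y=1\}$, and that no shape reaches the left edge $x=-1$.

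Next I would apply the unrolling map $\Phi(x,y) = \big((1+x)\cos(\pi y),\, (1+x)\sin(\pi y)\big)$. This sends $\ell_1$ ($y=0$) to a segment of the positive horizontal axis, $\ell_2$ ($y=1$) to a segment of the negative horizontal axis, and the whole strip into a half-disk above the horizontal axis, so that both grounding lines fuse into two opposite rays of a single line meeting at the fold point $P$ (the image of the edge $x=-1$, namely the origin). Since $1+x > 0$ on the shapes, $\Phi$ is injective there and is a homeomorphism onto its image; hence it preserves exactly which pairs of shapes intersect and which do not, so the images realize the same intersection graph. A shape that met $\ell_1$ and $\ell_2$ now meets the single grounding line at one point to the right of $P$ and one point to the left of $P$, which is precisely the structure announced in the statement (and illustrated in Figure~\ref{fig:two-lines-one-line}).

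It then remains to verify the two features that separate \emph{bigrounded} from merely \emph{grounded}: that the images are still of type $S$, and that the two grounding points of each shape are its extreme-abscissa points. For strings the first point is immediate, as a homeomorphic image of a curve is a curve; this is exactly why the statement is phrased for ``several shape types'' and not all of them, since a rigid class such as segments is not preserved by $\Phi$. For the extremality condition I would, after unrolling, deform each image curve within the half-plane so that between its two grounding points it stays inside the vertical strip they delimit, pulling any overshooting arc back toward $P$ along the line; because all shapes lie above the line and share the separating point $P$, this can be carried out one shape at a time without creating or destroying intersections.

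The step I expect to be the main obstacle is precisely this last extremality adjustment: simultaneously guaranteeing that each shape's two grounding points become its leftmost and rightmost points, that every shape stays within type $S$, and that no intersection is added or removed. This is the delicate geometric core of the argument and the reason the inclusion holds only ``for several shape types''; for strings it goes through cleanly, consistent with the identification of bigrounded strings with interval filament graphs used elsewhere in the paper.
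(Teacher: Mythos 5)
Your core construction is exactly the paper's argument: Observation~\ref{obs:two-lines-one-line} is proved in the paper only by the informal remark that one can ``take the upper line and plug it to the first line, continuously curving the shapes,'' together with Figure~\ref{fig:two-lines-one-line}, and your map $\Phi(x,y)=\bigl((1+x)\cos(\pi y),(1+x)\sin(\pi y)\bigr)$ is a faithful formalization of that folding, with the fold point $P$ at the origin. Injectivity of $\Phi$ on $\{x>-1\}$ does give preservation of the intersection pattern, and you are also right that the restriction to ``several shape types'' is about type preservation (the inclusion is genuinely false for segments, since 2-grounded segments give permutation graphs while bigrounded segments give interval graphs, and $C_4$ separates them).

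The gap is the step you yourself flag: arranging that the two grounding points of each folded shape are extremal in abscissa, as Definition~\ref{def:grounding-types} demands for ``bigrounded'' (this is what makes bigrounded strings interval filaments). The paper is silent on this point, so you were right to raise it, but your proposed repair --- pulling each overshooting arc back toward $P$, one shape at a time, ``without creating or destroying intersections'' --- is asserted rather than proved, and the justification offered (all shapes lie above the line and share $P$) does not do the work. Concretely: a string $C$ whose intersection with $A$ lies entirely inside $A$'s overshooting region forces the retraction of $A$ to re-route it so that it still meets $C$; the strip into which $A$ is retracted may be crossed by strings that $A$ must avoid; and one-at-a-time processing needs an invariant ensuring later retractions do not undo earlier re-routings. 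A cleaner way to close the gap is to pre-process before folding instead of repairing after. First, using the function-diagram form of 2-grounded strings recalled in Subsection~\ref{subsec:strings} (each string is the graph of a continuous function of the height $y$), reparametrize each string so that its abscissa is constant for $y\in[0,\delta]$ and for $y\in[1-\delta,1]$; this preserves the intersection pattern. Second, translate the whole configuration so that all abscissas lie in $[R,R+w]$, where $w$ is the horizontal extent and $R$ is large, and only then apply the fold $(x,y)\mapsto\bigl(x\cos(\pi y),x\sin(\pi y)\bigr)$. A folded point at height $y\geq\delta$ then has abscissa at most $(R+w)\cos(\pi\delta)$, which is below $R$ (hence below the abscissa of the right grounding point) as soon as $R\geq w\cos(\pi\delta)/(1-\cos(\pi\delta))$, while the points with $y<\delta$ lie on a circular arc around $P$ and cannot overshoot; the left side is symmetric. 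With this pre-processing the folded picture is already a valid bigrounded representation, and no delicate post-deformation is needed.
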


\begin{figure}[!h]
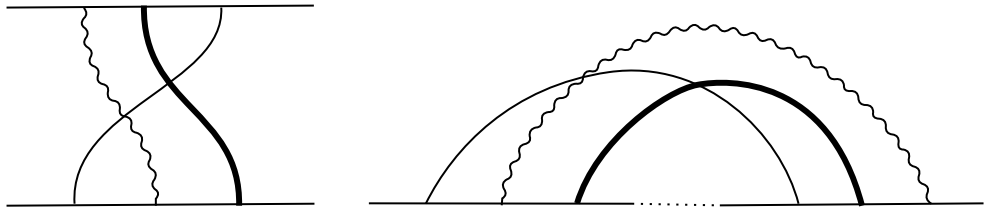

\begin{center}
\begin{tabular}{cc}
\input{two-lines-one-line-1.tex}
&
\input{two-lines-one-line-2.tex}
\end{tabular}
\end{center}
\caption{\label{fig:two-lines-one-line}
Illustration of Observation~\ref{obs:two-lines-one-line} with 2-grounded strings and bigrounded strings.}
\end{figure}

\begin{Obs}\label{obs:circle-to-line}
For several shape types, outer-$S$ graphs are included in grounded-S graphs.
This happens when one can cut the circle in one point, and then continuously transform both the circle and the shapes,  keeping the same intersections, until the circle is straighten into a line (in a  similar way as in the transformation of Observation~\ref{obs:two-lines-one-line})s.
\end{Obs}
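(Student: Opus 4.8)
The plan is to turn the geometric intuition of the statement into a single homeomorphism of the ambient region, and then to isolate exactly the property of the shape type $S$ that guarantees the image of each shape is again of type $S$.

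First I would fix an outer-$S$ representation: a grounding circle $C$ bounding a closed disk $D$, together with shapes $s_1,\dots,s_n \subseteq D$, each touching $C$ at least once. After an arbitrarily small perturbation of the representation (which changes neither the shape types nor the intersection graph), I may assume there is a point $p \in C$ belonging to none of the $s_i$; this point is the place where the circle will be cut. Such a $p$ exists because there are only finitely many shapes, and a perturbation can be used to open up a gap on $C$ if the touchings happen to cover the whole circle.

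Next I would build the straightening map. The punctured disk $D \setminus \{p\}$ is homeomorphic to the closed upper half-plane $H = \mathbb{R} \times [0,\infty)$ via a homeomorphism $\phi$ that sends the cut boundary $C \setminus \{p\}$ onto the line $\mathbb{R} \times \{0\}$ and the interior of $D$ into the open half-plane. One can moreover realize $\phi$ as the time-$1$ map of a continuous isotopy, which is precisely the ``continuously transform the circle and the shapes'' described in the statement and mirrors the plugging procedure of Observation~\ref{obs:two-lines-one-line}. Since $p \notin s_i$ for every $i$, each image $\phi(s_i)$ is a well-defined subset of $H$ that touches the grounding line at the image of a grounding point of $s_i$, so the configuration $\{\phi(s_i)\}$ is grounded. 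Because $\phi$ is a bijection, $\phi(s_i) \cap \phi(s_j) = \phi(s_i \cap s_j)$, hence two images meet if and only if the originals do, and the intersection graph is preserved exactly.

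The main obstacle — and the reason the statement is hedged with ``for several shape types'' — is that $\phi$ is only a homeomorphism: it preserves incidences and connectivity but destroys straightness, convexity, and angles. Thus the delicate final step is to verify that $\phi(s_i)$ is again of type $S$. For topologically defined classes such as strings this is immediate, since the homeomorphic image of a curve is a curve, and one recovers the inclusion at once; for rigid shape types the literal map $\phi$ need not keep the shapes in $S$, and one must instead restrict to a mild deformation in the spirit of Observation~\ref{obs:line-to-circle} or argue directly that the class is closed under the particular straightening chosen. Pinning down for which $S$ this succeeds is exactly the content that has to be supplied on a case-by-case basis.
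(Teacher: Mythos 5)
Your proposal is correct and follows essentially the same route as the paper: the paper's own (informal) argument is precisely to cut the grounding circle at a point and continuously straighten it into a line while preserving intersections, with the hedge ``for several shape types'' accounting for the fact that the deformation must keep the shapes within the class $S$. Your version merely makes this rigorous (choice of a cut point $p$ free of shapes, a homeomorphism/isotopy from the punctured disk to the half-plane, and the explicit case distinction between topological types like strings and rigid types), which is a faithful formalization rather than a different approach.
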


\subsection{Shape types}
\label{subsec:shape-types}

We now list the shapes we are going to use. 

\begin{Def}
\label{def:shape-types}
A shape type is one the following 8 cases.
\begin{enumerate}[itemsep=-0.5ex]
\item \label{i:seg} \emph{Segments}.
\item \label{i:tri} \emph{Triangles}.
\item \label{i:trap} \emph{Trapeziums}.
\item \label{i:rect} \emph{Diagonal rectangles}, that are rectangles touching a grounding line on one corner, and with faces with angles $\pi/4$ and $3\pi/4$ with respect to the grounding line. 
\item \label{i:L} \emph{L-shape} that are formed by a vertical segment touching the grounding line, and extended by an horizontal segment to the right.
\item \label{i:stairs} \emph{Stairs} which are series of horizontal and vertical segments forming stairs going to the right. 
\item \label{i:poly} \emph{(Convex) polygons}. (For this paper all polygons are convex, thus we will simply say ``polygons''.)
\item \label{i:strings} \emph{Strings} which are simply curves.
\end{enumerate}
\end{Def}

\begin{Obs}\label{obs:inclusion-shapes}
Strings are more general than all the other shapes. Convex polygons are more general than segments and rectangles.
\end{Obs}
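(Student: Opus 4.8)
The plan is to prove Observation~\ref{obs:inclusion-shapes}, which asserts two containment relations among shape types: first that strings generalize every other shape in Definition~\ref{def:shape-types}, and second that convex polygons generalize both segments and diagonal rectangles. Since a ``shape type'' enters the constructions only through which grounded intersection representations it admits, the natural interpretation is that shape type $S$ is \emph{more general than} shape type $S'$ when every $S'$-representation can be converted into an $S$-representation realizing the same intersection graph. So the whole statement reduces to exhibiting, for each pair, a way of viewing the more special shape as a legal instance of the more general shape, preserving all pairwise intersections and non-intersections.

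For the first claim, I would argue that a string is by Definition~\ref{def:shape-types}(\ref{i:strings}) simply a curve, with no convexity or straightness constraint. Each of the other seven shape types is, geometrically, either a curve itself or the boundary (or a connected polyline) of a bounded region: a segment is a curve; the boundaries of triangles, trapeziums, diagonal rectangles, and convex polygons are closed curves; an L-shape and a stair are connected polylines, hence curves. Thus I would take each shape in a given representation and replace it by the curve consisting of its relevant boundary, checking that two shapes intersect exactly when their associated curves do. The only subtlety is shapes with interior: two filled convex regions may intersect while their boundary curves do not (one nested inside the other). I would address this by noting that in a grounded configuration all shapes touch the grounding line, so no shape can be strictly contained in another's interior, which rules out the nested case and makes boundary-intersection faithful to region-intersection. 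This lets every shape-type representation be read as a string representation under the same grounding.

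For the second claim, I would observe that a segment is a degenerate convex polygon (a ``polygon'' with the two faces coinciding, i.e.\ a flat two-vertex shape), or, to stay within genuinely two-dimensional convex polygons, that any segment can be thickened into an arbitrarily thin convex polygon (a long thin triangle or quadrilateral) without altering which pairs intersect, since intersections are preserved under sufficiently small perturbations. Similarly a diagonal rectangle is literally a convex polygon (a quadrilateral), so it is a special case of Definition~\ref{def:shape-types}(\ref{i:poly}) with no modification needed. Hence convex polygons subsume both.

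The statement is essentially definitional, so I do not expect a deep obstacle; the only step requiring genuine care is the region-versus-boundary issue in the first claim. The cleanest way to dispatch it is the grounding argument above (no shape is nested inside another because all touch the line), but if one wanted to handle the non-grounded or outer cases uniformly, the honest move is to treat each filled shape as the curve tracing its boundary together with, if needed, an interior path, and to remark that for the purposes of defining the intersection graph one may always work with the closed region's boundary; I would flag this as the one place where a careful reader should check that ``curve'' is interpreted generously enough (self-touching or space-filling curves are permitted) to encode two-dimensional regions faithfully. Everything else is immediate from the definitions.
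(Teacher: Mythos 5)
Your proposal is correct and takes essentially the same approach as the paper, which justifies this observation (in the Strings subsection, Section~\ref{subsec:strings}) in one line: segments, stairs and L-shapes are literally strings, while rectangles and convex polygons can be replaced by their boundary curves since ``only the boundary matters for intersections.'' Your nesting analysis via the grounding line is precisely the justification for that last assertion, which the paper leaves implicit.
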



\section{Survey of grounded intersection graphs}\label{sec:survey}

In this section, we make a survey of grounded intersection graphs. The goal is to have a more detailed survey than in previous papers~\cite{JelinekT19, DaviesKMW21}, with a special focus on pattern characterization. It can be seen as an extended related works section.

We study the combinations of the shape types (Definition~\ref{def:shape-types}) and grounding types (Definition~\ref{def:grounding-types}).   
Some combinations are not mentioned, because they have not been studied, and are not used in this paper either.
This section contains eight subsections, one for each shape type of Definition~\ref{def:shape-types}. Each subsection contains a list of classes (with some additional remarks, such as pattern characterization) and a list of inclusions with other classes of the survey. 

\subsection{Segments}
\label{subsec:segments}

\paragraph{Classes}
Grounded segment graphs have been defined in \cite{CardinalFMTV18}. 
It is proved in \cite{CardinalFMTV18} that this class is the same as the one of \emph{downward rays intersections}, where a downward ray is a halfline pointing downward. 
 
The 2-grounded segment graphs are known to be the same as the \emph{permutation graphs}. 
A permutation graph is a graph that can be built from a permutation the following way: the vertices represent the elements of a permutation, and the edges represent pairs of elements that are reversed by the permutation.
Permutation graphs are characterized by a couple of forbidden patterns \cite{EvenPL72}, see Figure~\ref{fig:permutations}.

\begin{figure}[!h]	
	\begin{center}
	\begin{subfigure}[b]{0.35 \textwidth}
		\centering
		\includegraphics[scale=1]%
		{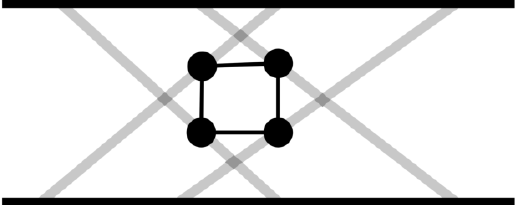}
		{}
		{\caption{
			Intersection graph}}
	\end{subfigure}
		\begin{subfigure}[b]{0.35 \textwidth}
			\centering
			\begin{tabular}{c}
			\scalebox{1.0}{
			\input{
			comparability.tex}}\\
			\scalebox{1.0}{
			\input{
			cocomparability.tex}}
			\end{tabular}
			\vspace{0.2cm}
			\caption{\label{fig:patterns-permutation}
			Forbidden patterns.}
		\end{subfigure}
	\caption{\label{fig:permutations}
	Two characterizations of permutation graphs.} 
			\end{center}
	\end{figure}
 
The bigrounded segment graphs are the \emph{interval graphs}. 
Indeed, if the segments touch the grounding line by two points, then they are completely contained in the grounding line. As already mentioned in the introduction, interval graphs have pattern characterization, with the pattern of  Figure~\ref{fig:pattern-drawing}.

The circle segment graphs are known as \emph{circle graphs}. It is known that outerplanar graphs are circle graphs~\cite{Unger88}.

\emph{Outer segment graphs} are introduced in \cite{CardinalFMTV18}, where it is proved that this class strictly contains the intersection graphs of rays (that is, of half-lines).

\paragraph{Inclusions.}

\begin{itemize}[itemsep=-0.5ex]
\item Grounded segments graphs contain 2-grounded segment graphs (permutation graphs) and  bigrounded segment graphs (interval graphs), by Observation~\ref{obs:inclusions-grounding}. These inclusions are strict (see \emph{e.g.}~\cite{JelinekT19}).
\item Grounded segments strictly contains grounded L-shapes \cite{JelinekT19}.
\item Outer-segment graphs contain circle segment graphs, by Observation~\ref{obs:inclusions-grounding}.
\item Grounded segment graphs are included into outer-segment graphs, by Observation~\ref{obs:line-to-circle}, and this inclusion is strict~\cite{CardinalFMTV18}.
\item 2-grounded segment graphs are included in circle segment graphs, by Observation~\ref{obs:two-lines-to-circle}, and the inclusion is strict (as witnessed by $C_6$).
\item By Observation~\ref{obs:inclusion-patterns}, interval graphs are included in outerplanar graphs.
\end{itemize}

\subsection{Triangles}

\paragraph{Classes.}
For triangles, we only consider the 2-grounded case, which actually contains two subcases.
The \emph{point interval graphs} (PI-graphs) are the intersection graphs of 2-grounded triangles, where every triangle has an edge on the bottom line.  
The PI$^*$-graph class is the generalization where every triangle has an edge on one of the two lines. 
Interestingly, PI-graphs can be recognized in polynomial-time~\cite{Mertzios15}, but PI$^*$-graphs are NP-complete to recognize~\cite{Mertzios12}.

\paragraph{Inclusions.}

PI-graphs contain bigrounded segment (interval) and 2-grounded segment (permutation) graphs, by generalization of the shapes, and the inclusions are strict (see~\cite{Mertzios12}).

\subsection{Trapeziums}

\paragraph{Classes}
The 2-grounded trapezium graphs, where the trapeziums have the two parallel faces on the two lines, are known as \emph{trapezoid graphs}  \cite{DaganGP88}. 
 
Circle-grounded trapezium graphs can be defined as intersections of regions of the circle defined by two non-crossing cords. These are known as \emph{circle trapezoid graphs}~\cite{FelsnerMW97}.
 
\paragraph{Inclusions}
The 2-grounded trapezium graphs are included in the circle trapezium graphs, via Observation~\ref{obs:two-lines-to-circle}, and the inclusion is strict~\cite{FelsnerMW97}.
The 2-grounded trapezium graphs contain PI$^*$ graphs by the preliminary observations, and the inclusion is strict~\cite{Corneil97}.

\subsection{(Diagonal) rectangles}
\label{subsec:rectangles}

\paragraph{Classes.}

Grounded diagonal rectangles graphs are known under various names as already mentioned in the introduction: diagonal-touched rectangle graphs \cite{CorreaFPS15}, hook graphs \cite{Hixon13}, max point-tolerance graphs \cite{CatanzaroCFHHHS15}, and p-box(1) graphs \cite{SotoC15}. 
In particular this class can be also obtained as intersection of diagonal orthogonal triangles or even diagonal L-shapes.
This class is characterized by the pattern $P_{b}$ (see Figure~\ref{fig:rectangle-patterns}).

\paragraph{Inclusions.}

Grounded diagonal rectangles graphs strictly contain interval graphs and outerplanar graphs~\cite{CatanzaroCFHHHS15}. 

\subsection{L-shapes}
\label{subsec:L}

\paragraph{Class.}

Grounded L-shapes were introduced in \cite{McGuinness96}. It was proved in \cite{JelinekT19} that these graph have a pattern characterization with the two patterns of Figure~\ref{fig:L-shapes}. 
The first of the two patterns says that on a set of nodes $1<2<3<4$, it is not possible to have edges $(1,3)$ and $(2,4)$, but not $(1,2)$ and $(2,3)$ (this is what we call $P_{ab}$). The second pattern says that it is not possible to have edges $(1,2)$, $(2,3)$ and $(1,4)$ but not $(1,3)$. 

\begin{figure}[h!]
	\begin{subfigure}{0.5 \textwidth}
			\centering
				\includegraphics[scale=1]				{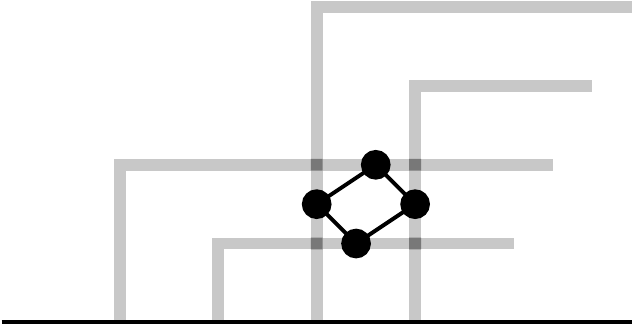}
			\vspace{0.2cm}
			\caption{
			\label{fig:L-shapes-cycle}
			Grounded L-shapes and the associated grounded-L graph.}
		\end{subfigure}
	\begin{subfigure}{0.5 \textwidth}
			\centering
				\begin{tabular}{c}
				\input{
				L-grounded-1.tex}\\
				\input{
				L-grounded-2.tex}
				\end{tabular}
			\vspace{0.2cm}
			\caption{\label{fig:patterns-L-shapes}
			Forbidden patterns}
		\end{subfigure}
	\caption{\label{fig:L-shapes}
	Definition and pattern characterizations of grounded L-graphs.}
\end{figure}

\paragraph{Inclusions.}
\begin{itemize}
\item Because the L-shapes can only be grounded on the lowest point, the outer-L-shape graphs are the same as the grounded-L-shape. 
\item The circle-L-shape graphs (where every L touches the circle by it lowest and right-most points), are the same as the circle-segment graphs (or simply circle graphs) \cite{JelinekT19}.
\item 2-grounded segments graphs, that is interval graphs, are included in grounded L-shape: simply use L's that all have the horizontal segments at the same altitude. 
\item  Both grounded L-shape and grounded rectangles are included in in $C_{ab}$ by Observation~\ref{obs:inclusion-patterns}. As these two classes are not comparable~\cite{JelinekT19}, these inclusions are strict.
\end{itemize}

\subsection{Stairs}
\label{subsec:stairs}

\paragraph{Class.}
For stairs, only the 2-grounded case has been considered.
The 2-grounded stairs graphs are exactly the cocomparability graphs. 
Indeed, it is known that cocomparability graphs are intersection graphs of continuous functions between two lines (see~\cite{GolumbicRU83} or the explanation we give in Subsection~\ref{subsec:strings}), and we can make these functions strictly increasing, and them discretize them to get stairs.
Comparability graphs are characterized by the pattern of Figure~\ref{fig:cocomp}. 

\begin{figure}[!h]	
		\begin{subfigure}[b]{0.6 \textwidth}
			\centering
			
				\includegraphics[scale=1.0]%
				{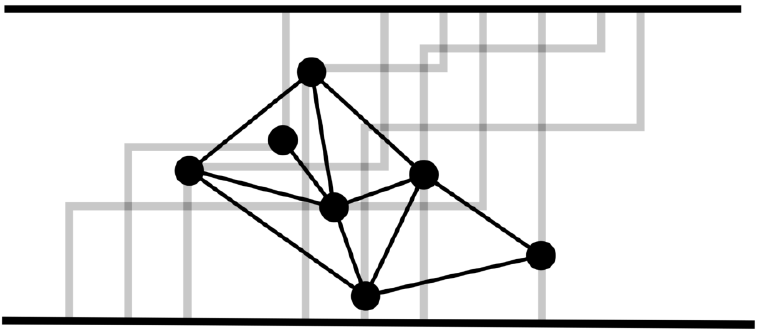}
			{}	
			\vspace{0.2cm}
			\caption{
			Intersection graph.}
		\end{subfigure}
		\begin{subfigure}[b]{0.4 \textwidth}
			\centering
			\scalebox{1.0}{
			\input{
			cocomparability.tex}}
			\vspace{0.2cm}
			\caption{ 
			Forbidden patterns.}
		\end{subfigure}
	\caption{\label{fig:cocomp}Two characterizations of cocomparability graphs.}
	\end{figure}

\paragraph{Inclusions}
In several settings stairs generalize segments. First suppose that the segments are pointing upward to the right. Then one can transform those into stairs: just discretize the segments into stairs, with a fine enough discretization to keep the same intersections. 
As a matter of fact, we can often assume that the segments are pointing to the right: if it is not the case, we can tilt all the segments to the right, until they are all pointing to the right.

This techniques works for example for 2-grounded stairs, that are included into 2-grounded segments (which is not a surprise as these classes boil down to permutation and cocomparability graphs).
It also allows to prove the following observation, that we use in Figure~\ref{fig:diagram}. 

\begin{Obs}
\label{obs:seg-stairs}
Grounded segment graphs are contained in grounded stairs graphs. 
\end{Obs}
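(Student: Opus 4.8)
The plan is to carry out the two steps sketched just before the statement — first make every segment point up and to the right, then discretize it into a staircase — while taking care to argue that neither step alters the intersection graph. Let $G$ be a grounded segment graph with a grounded representation $R$, so each vertex is a segment with its lower endpoint (its grounding point) on the grounding line and the rest of the segment strictly above the line. After an arbitrarily small perturbation of the endpoints I may assume $R$ is in general position: the grounding points are distinct, no endpoint of one segment lies on another segment, and any two intersecting segments cross transversally at an interior point.

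First I would tilt the segments by the shear map $\phi_c \colon (x,y) \mapsto (x + c y, y)$. Since $\phi_c$ is an affine bijection it preserves intersection and non-intersection of any pair of sets, so the image of $R$ represents the same graph $G$; moreover $\phi_c$ fixes every point of the grounding line $y = 0$ and leaves the $y$-coordinate unchanged, so the image is again a grounded segment representation with the same distinct grounding points. A segment from a grounding point $(x_0,0)$ to a top endpoint $(x_1,y_1)$ with $y_1 > 0$ is sent to a segment of direction $(x_1 - x_0 + c y_1,\, y_1)$; choosing $c$ larger than $\max (x_0 - x_1)/y_1$ over the finitely many segments makes the first coordinate positive for all of them, so after the shear every segment is monotone increasing, i.e. points up and to the right. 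The sheared representation is still in general position, so there is a constant $\delta > 0$ lower-bounding the distance between any two disjoint segments.

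Second I would replace each monotone segment by a staircase approximating it within $\varepsilon$, using alternating upward and rightward steps and starting with an \emph{upward} step from the grounding point, so that the staircase leaves the line immediately and, being monotone, never returns to it. Because each segment is monotone increasing, such a right-going staircase can be made to lie within any prescribed $\varepsilon$ of it; it touches the grounding line only at the grounding point; and it is a valid stairs shape in the sense of Definition~\ref{def:shape-types}. Taking $\varepsilon < \delta/3$ and the discretization fine enough keeps every pair of previously disjoint shapes at positive distance, while a sufficiently fine staircase through a transversal crossing still crosses its partner. Hence the resulting stairs representation realizes exactly the graph $G$, exhibiting $G$ as a grounded stairs graph.

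The shear computation and the approximation are both clean; the one point needing care is the claim that fine discretization preserves intersections \emph{exactly}. I expect this to be the main (though mild) obstacle: it rests on the general-position assumption, via the standard fact that transversal crossings are stable under small perturbations while two disjoint compact sets stay disjoint. Since there are only finitely many shapes, the two finiteness-based choices — $c$ large and $\varepsilon$ small — can be made consistently, which completes the inclusion.
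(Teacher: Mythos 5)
Your proof is correct and follows exactly the paper's own argument, which is given only as a sketch in the paragraph preceding the observation: tilt all segments so they point up and to the right, then discretize each into a staircase fine enough to preserve all intersections and non-intersections. Your write-up simply makes rigorous the two steps the paper leaves informal (the explicit shear map fixing the grounding line, and the general-position/$\varepsilon$--$\delta$ argument for why a fine discretization preserves the intersection graph).
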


\subsection{Convex Polygons}
\label{subsec:polygons}

Circle-polygon graphs were studied in \cite{KostochkaK97}. Grounded convex polygons contain grounded segments and grounded rectangles by Observation~\ref{obs:inclusion-shapes}.

\subsection{Strings}
\label{subsec:strings}

\paragraph{Classes}
Grounded string graphs are the same as outer-string graphs (by Observations~\ref{obs:circle-to-line} and~\ref{obs:line-to-circle}, as noticed in \cite{CardinalFMTV18}). Outerstring graphs are well studied. See Figure~\ref{fig:outer-string_inter} for a grounded representation of such a graph.
 
The 2-grounded string graphs are co-comparability graphs~\cite{GolumbicRU83}.
Let us give some detail on this topic.
Any shape between 2 lines separates the vertices (or shapes) in between two lines as left part and right part (shapes totally to the left or totally to the right). Therefore the relation : being totally to the left (resp. the right) is a partial order  and therefore its complement is a comparability graph.
This implies that every 2-grounded class  is included in cocomparablity graphs, and for 2-grounded string graphs this is an equality.

Bigrounded string graphs are known as \emph{interval filament graphs}~\cite{Gavril00}. See Figure~\ref{fig:filament_inter} for a representation.

\begin{figure}[h!]
\begin{center}
	\begin{subfigure}[b]{0.45 \textwidth}
			\centering
			\includegraphics%
			[width=\textwidth]%
			{
			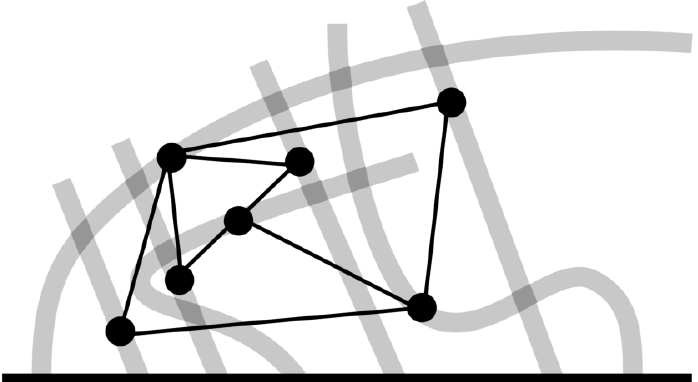} 
			{}
		\caption{\label{fig:outer-string_inter} An outer-string representation and graph.}
		\end{subfigure}
		\hspace{0.5cm}
		\begin{subfigure}[b]{0.45 \textwidth}
			\centering
		\includegraphics%
		[width=\textwidth]%
		{
				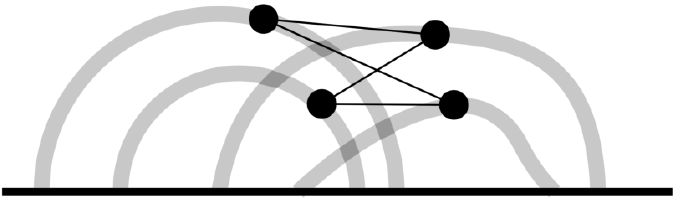} 
			{}
		\caption{\label{fig:filament_inter} 
		An interval filament representation and graph.}
		\end{subfigure}	
		\caption{Representations of an outer-string graph and an interval filament graph.}
\end{center}	
\end{figure}		

\paragraph{Inclusions}
\begin{itemize}
\item It was shown in \cite{Gavril00} that bigrounded string graphs contain 2-grounded string graphs (by what we called Observation~\ref{obs:two-lines-one-line}), and circle-polygon graphs, and that these inclusions are strict.
\item Outer-segment are strictly included in outer-string \cite{CabelloJ17}.
\item Strings generalize all the shapes we use. This is clear for segments, stairs and L-shape which are special strings, but it is also true for rectangles and convex polygons, as their boundaries are strings, and only the boundary matters for intersections.
\end{itemize}


\section{Touching grounded intersection characterization}
\label{sec:touching}

\begin{Theo}
\label{thm:touching-L-shapes}
The touching grounded L-shape graphs are the forests.
\end{Theo}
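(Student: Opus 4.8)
The statement is an equality of graph classes, so the plan is to prove both inclusions. Throughout I would encode an L-shape $i$ by three numbers: the abscissa $x_i$ of its foot on the grounding line, its height $h_i$ (the top of its vertical segment), and the abscissa $r_i>x_i$ of the right end of its horizontal segment; thus shape $i$ is the union of the vertical segment from $(x_i,0)$ to $(x_i,h_i)$ and the horizontal segment from $(x_i,h_i)$ to $(r_i,h_i)$. The feet are pairwise distinct by the grounding convention.

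First I would prove that every touching grounded L-shape graph is a forest. Fix a touching representation and an edge $\{i,j\}$ with $x_i<x_j$. Since $j$'s horizontal segment lies entirely to the right of $x_j>x_i$ it cannot reach $i$'s vertical, and two verticals never meet, so the contact must involve $i$'s horizontal segment meeting $j$'s vertical segment at the point $(x_j,h_i)$ (or, when $h_i=h_j$, the two horizontals), which forces $x_i<x_j\le r_i$ and $h_i\le h_j$. The key step is that being a \emph{touching} rather than a transversal crossing forces $x_j=r_i$: if $x_j<r_i$ the point $(x_j,h_i)$ is interior to $i$'s horizontal, so the contact is proper only if it is the top endpoint of $j$'s vertical, i.e.\ $h_i=h_j$; but then the two horizontal segments, both at height $h_i=h_j$, overlap on $[x_j,\min(r_i,r_j)]$ with nonempty interior, contradicting interior-disjointness. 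Hence $x_j=r_i$ for every edge. As feet are distinct, at most one shape has its foot at $r_i$, so each shape is the left endpoint of at most one edge. Orienting every edge from its smaller-abscissa endpoint to its larger one yields an orientation that is acyclic (abscissas strictly increase along it) and of out-degree at most $1$; the leftmost vertex of any cycle would have both of its cycle-neighbours to its right, hence out-degree $2$, a contradiction. Therefore the graph is acyclic.

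Next I would show that every forest is representable; it suffices to realize a single tree and to place the trees of a forest in disjoint vertical slabs. Root the tree and realize it so that each non-root $v$ satisfies $r_v=x_{p(v)}$, i.e.\ its horizontal segment ends exactly at the foot of its parent $p(v)$, with $h_v<h_{p(v)}$, so that $v$ touches $p(v)$ at $(x_{p(v)},h_v)$ on the interior of the parent's vertical, a proper touching. Concretely I would lay the subtrees of the children of each vertex out from left to right and place the parent to the right of all of them (so abscissas decrease as one descends and each subtree occupies a contiguous interval), and assign heights by a preorder visiting children in that same left-to-right order, with earlier-visited vertices higher. This makes every vertex higher than all its descendants and makes each child subtree $T_{c_i}$ entirely higher than every later sibling subtree $T_{c_j}$, $j>i$. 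The point is then that the horizontal of a child $c_i$, running rightwards at height $h_{c_i}$ over the regions of $c_{i+1},\dots,c_m$, stays strictly above all of their shapes and thus creates no spurious intersection, so the only contacts produced are the intended parent--child ones.

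The main obstacle is the backward construction: one must certify that this recursive layout produces \emph{exactly} the tree edges and nothing more. This reduces to maintaining the height-ordering invariant (parent above all descendants, leftmost sibling highest, each subtree entirely above every later sibling subtree), after which a short case analysis of the relative positions of two shapes shows their segments cannot meet unless they are parent and child. On the forward side the only delicate point is the dichotomy in the touching analysis, where ruling out the equal-height case through the horizontal-overlap argument is precisely what upgrades ``acyclic orientation'' to ``out-degree at most one,'' and hence to a forest.
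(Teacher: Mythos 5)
Your proof is correct and takes essentially the same approach as the paper's: the same key geometric observation that two grounded L-shapes can only touch via the left one's rightmost point meeting the right one's vertical segment (hence each vertex has at most one neighbour to its right in the grounding order), and essentially the same recursive construction for the converse, with children placed to the left of their parent at decreasing heights and their horizontals ending exactly at the parent's foot. The differences are cosmetic: you derive acyclicity by an explicit edge-orientation argument where the paper invokes the known forbidden-pattern characterization of forests, and you fix heights globally by a preorder where the paper maintains an incremental free-rectangle invariant, but the resulting drawings coincide.
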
 

Note that forests are also characterized by a pattern on three vertices (See Table~\ref{table:one-pattern-3}).

\begin{proof}

Before we prove both inclusions, let us
introduce a notation and make an observation.
First, a grounded L-shape $S$ can be characterized by three numbers: 
\begin{enumerate}
\item The abscissa of its rooting point, that we denote $x(S)$.
\item The abscissa of its right-most point, that we denote $x'(S)$.
\item The ordinate of it horizontal segment, that we denote $y(S)$.
\end{enumerate}

 When $S$ corresponds to a vertex $v$ in a graph, we will abuse notation, and write $x(v)$, $x'(v)$ and $y(v)$.  
Now, in a touching L-shape representation, two L-shapes $S$ and~$S'$, with $x(S)<x(S')$ can touch in only one way: $S$ touches the vertical part of of $S'$, with its right-most point.

Remember that forests are characterized by the pattern on three nodes with edges $(1,2)$ and $(1,3)$ (see Table~\ref{table:one-pattern-3}).
Now, consider the grounding order of an arbitrary set of touching grounded L-shapes. Take three shapes $S_1$, $S_2$ and $S_3$ in that order.
If $S_1$ and $S_2$ are touching, then this has to be on the right-most point of $S_1$, and on the vertical part of $S_2$. Then $S_1$ cannot go further right than the grounding position of $S_2$, because then it would cross $S_2$. 
(Remember that for string-like shapes, we defined ``touching'' as ``touching even if we were to make the strings have non-zero thickness'', thus crossings are forbidden for touching L-shapes.)
Consequently, $S_1$ cannot reach~$S_3$. 
Therefore, the grounding ordering of the shapes avoids the forbidden pattern of forests, and touching grounded L-shape graphs are included in forests. 

For the other direction, we build a touching L-shape representation for any forest. 
First, note that we can assume without loss of generality that the forest is actually a tree $T$, because the representations of the different connected component can be placed side by side on the line. 

Our construction is top-down, that is, we place the L-shape of the root first, and we go down in the tree towards the leaves. 
Consider the tree $T'$ of the vertices that have been added so far to the drawing. 
At first $T'$ is empty, and at the end $T'=T$.
Our construction will take place in the unit square $[0,1]\times [0,1]$. 
We first represent the root~$r$ as a vertical segment between positions $(1,0)$ and $(1,1)$. 
At that point $T'=\{r\}$. 
We explain how we can take a leaf from $T'$, and add all its children in $T$ to~$T'$, by adding them to the drawing. 
By induction, this describes a touching L-shape representation of the tree.

We maintain the following invariant. 
For any vertex $v$ that is a leaf in $T'$, there exists an abscissa $x^*(v)<x(v)$, and an ordinate $y^*(v)\leq y(v)$, such that the rectangle $(x^*(v),x(v)) \times [0,y^*(v)]$ intersects none of the shapes placed so far.
Note that this invariant is satisfied when we have just the root in $T'$, with $x^*(v)=0$ and $y^*(v)=1$.
Suppose, the invariant holds for an arbitrary leaf $v$ of $T'$. 
Now, consider the children of $v$ in $T$: $v_1,\dots,v_k$ (in an arbitrary order). 
For each $v_i$, we build the L-shape with the following characteristics (see Fig.~\ref{fig:touching-L}):  \[
x(v_i)=x^*(v)+i\cdot \frac{x(v)-x^*(v)}{k+1}\hspace{1cm}
x'(v_i)=x(v)\hspace{1cm}
y(v_i)= (k-i+1)\cdot \frac{y^*(v)}{k+1}.
\]

\begin{figure}[!h]
\begin{center}
\scalebox{1.2}{
\input{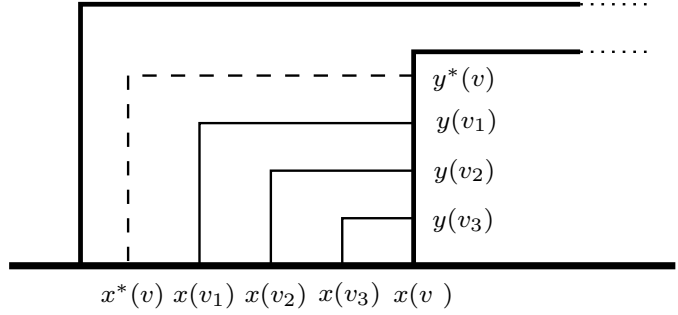}}
\caption{\label{fig:touching-L} Illustration of the construction of the proof of Theorem~\ref{thm:touching-L-shapes}.}.
\end{center}
\end{figure}

One can check that these shapes do not intersect, and that they all touch the shape of $v$. Also, because they are in the rectangle of our invariant, they touch no other shape. Finally, our invariant is preserved, since for any new leaf $v_i$, the rectangle with the following coordinates is empty: $x^*(v_1)=x^*(v)$ and $x^*(v_i)=x(v_{i-1})$ for $i>0$, and $y^*(v_i)=y(v_i)$, for all $i$.
\end{proof}
 
\begin{Theo}
\label{thm:touching-strings}
The touching grounded strings graphs are the outerplanar graphs.
\end{Theo}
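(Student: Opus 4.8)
The plan is to prove the two inclusions separately, mirroring the structure of the proof of Theorem~\ref{thm:touching-L-shapes}. For the forward direction, I would take an arbitrary touching grounded string representation and show that the associated graph is outerplanar. The natural idea is to use the grounding line together with the grounding ordering to produce an outerplanar embedding directly. Concretely, I would place every vertex on the grounding line at its grounding point, so that all vertices lie on the outer face by construction. It then remains to route each edge of the graph as a curve staying (say) above the line without crossings. Since each edge corresponds to a touching point of two strings, I would follow the two strings from their grounding points to their common contact point, obtaining a drawing of the edge. The key claim to verify is that these edge-curves can be made non-crossing: here I would invoke the \emph{touching} hypothesis, which (as explained in the preliminaries) forbids genuine crossings of the thickened strings, so two edge-curves built from four pairwise non-crossing strings cannot cross either. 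This yields a planar embedding with all vertices on the outer face, i.e. an outerplanar embedding.

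For the reverse inclusion, I would show that every outerplanar graph admits a touching grounded string representation. I would first reduce to the connected case (placing components side by side as in the previous proof), and then, since adding edges only makes the representation harder to realize by touching, reduce to maximal outerplanar graphs, which are triangulations of a polygon and have a well-understood recursive structure (their weak dual is a tree). The construction would be inductive on this triangulation tree: I would fix the cyclic order of the vertices given by the outer face and place the strings so that their grounding points appear along the line in that cyclic order. Each string would be routed so that it touches exactly the strings of its neighbours, using the nested/laminar structure of the triangulation to keep everything non-crossing. A clean way to organize this is to process triangles from an ear inward, maintaining an invariant (analogous to the empty-rectangle invariant of Theorem~\ref{thm:touching-L-shapes}) guaranteeing free space in which the next string can reach all its required contacts while touching nothing else.

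I expect the main obstacle to be the reverse direction, specifically managing the simultaneous geometric constraints of \emph{touching} (intersection only on the boundary, no crossing) while realizing exactly the prescribed adjacencies. Unlike the L-shape case, strings are flexible, which helps, but the triangulation forces each internal string to touch several others in a prescribed cyclic pattern, and one must ensure that realizing one contact does not accidentally create an extra intersection or block a later contact. I would therefore invest most of the effort in setting up the inductive invariant carefully and in verifying that the newly added string stays within its allotted free region. The forward direction should be comparatively routine once the non-crossing claim is phrased correctly in terms of the thickened-string notion of touching introduced earlier.
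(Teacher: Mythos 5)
Your proposal is correct in outline but takes a genuinely different route from the paper. The paper never constructs a planar embedding and never touches a triangulation: it proves that touching grounded string graphs coincide with the pattern class $C_{\emptyset}$, and then invokes the folklore fact (stated in its preliminaries) that $C_{\emptyset}$ is exactly the class of outerplanar graphs. Concretely, for the inclusion of outerplanar graphs into touching grounded strings, the paper starts from a vertex ordering avoiding $P_{\emptyset}$ and gives a one-shot, non-inductive construction: each vertex becomes a \emph{rectangle} whose left and right corners are determined by the diagonals through its leftmost and rightmost neighbours in the ordering; an unwanted contact between non-adjacent $a<b$ would force neighbours $d<a$ and $c>b$, hence an occurrence of $P_{\emptyset}$, a contradiction. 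For the converse inclusion, the paper shows that the grounding order avoids $P_{\emptyset}$ by the same trapped-region (Jordan-curve) argument that underlies your non-crossing claim. Your plan replaces both uses of the pattern characterization by direct topological work: an explicit outerplanar drawing (vertices at grounding points, edges routed along the strings, then perturbed apart using the thickened-string definition of touching) in one direction, and an induction over a polygon triangulation with a free-space invariant in the other. What the paper's route buys is brevity and a bonus corollary it states after the proof, namely that touching grounded \emph{rectangles} and \emph{polygons} already capture all outerplanar graphs; what your route buys is independence from the folklore equivalence $C_{\emptyset}=$ outerplanar, which the paper uses without proof.

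One step of your reverse direction needs more justification than you give it: the reduction to maximal outerplanar graphs. A touching representation of a supergraph is \emph{not} a representation of the subgraph, so ``adding edges only makes it harder'' is not by itself a valid reduction; you must argue that an unwanted contact can be destroyed, e.g.\ by retracting one string slightly near the contact point, which works precisely because contacts are touchings rather than crossings. With that sentence added, and with the inductive invariant carried out as you describe (taking care of the cut between the cyclic outer-face order and the linear grounding order), your plan goes through; it is simply longer than the paper's, where the rectangle-and-diagonal construction eliminates the need for any induction.
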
 
 
As already mentioned outerplanar graphs are also characterized by the pattern~$P_{\emptyset}$.

\begin{proof}
Consider an outerplanar graph, and let us build a representation by touching rectangles (which are a special case of touching strings). 
For a point $p$ of the grounding line, let its \emph{left diagonal} be the half-infinite line that starts at $p$, and goes in the north-west direction. Similarly define the \emph{right diagonal} for the north-east direction. 
Consider an ordering of the vertices of the graph that avoids the pattern $P_{\emptyset}$. 
Plant the corresponding rectangles regularly on the line in this ordering. 
For a given vertex~$v$, let $v_{\ell}$ and $v_r$ its left-most and right-most neighbors in the ordering (considering that $v$ is a neighbor to itself). 
Now the rectangle of $v$ is defined in the following way: 
\begin{itemize}
\item Its left-most corner is at the intersection of its left diagonal and of the right diagonal of $v_{\ell}$.
\item Its right-most corner is at the intersection of its right diagonal and of the left diagonal of $v_r$.
\end{itemize}

Clearly, this intersection graph will have all the intersection of the original graph. 
Let us prove that there are no other intersection. 
Suppose there is an intersection between the rectangles of vertices $a$ and $b$, with $a<b$, but $a$ and $b$ are not neighbors in the graph. 
Then, by construction it must be that $a$ has a neighbor $c$ such that $b<c$, and $b$ has a neighbor $d$ such that $d<a$. 
Consequently, the nodes $d<a<b<c$ form $P_{\emptyset}$, the forbidden pattern of outerplanar graphs, which is a contradiction. 

In the other direction: suppose we have a touching string representation, let us prove that the ordering of the shapes along the line corresponds to an ordering of the vertices that avoids the pattern. 
For the sake of contradiction, suppose we have four strings $a$, $b$, $c$ and $d$, such that $a$ and $c$ are in contact, and $b$ and $d$ are in contact. 
Strings $a$ and $c$ must touch at some position~$p$. 
Now the grounding point of $b$ is in the region delimited by the grounding line and the parts of $a$ and $c$ that starts on the grounding line and end in~$p$. 
The grounding point of $d$ is outside this region, and none of $b$ or $d$ can cross the boundaries of the region, thus $b$ and $d$ cannot meet, a contradiction.
\end{proof}

We can derive from the proof that outerplanar graphs also correspond to  touching grounded rectangle graphs, and touching grounded polygon graphs. 

\section{Interval filament graphs are between $C_a$ and $C_{ab}$}

In this section, we prove that the geometric class of interval filament graphs is between the pattern classes $C_a$ and $C_{ab}$.

\begin{Theo}\label{thm:bigrounded}
The following holds:
 $C_a \subseteq$ Interval filaments $\subseteq C_{ab}$.
\end{Theo}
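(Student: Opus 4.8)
The plan is to prove the two inclusions separately, using the pattern definitions of $C_a$ and $C_{ab}$ from Definition~\ref{def:Pabcd} together with the geometric definition of interval filament graphs (bigrounded string graphs). Recall that $C_a$ forbids $P_a$ (edges $(1,3),(2,4)$; non-edge $(1,2)$) and $C_{ab}$ forbids $P_{ab}$ (edges $(1,3),(2,4)$; non-edges $(1,2),(2,3)$). Since $P_a$ is included in $P_{ab}$ (it has fewer non-edges), Observation~\ref{obs:inclusion-patterns} already gives $C_a \subseteq C_{ab}$ directly, so the content of the theorem lies in situating the geometric class strictly between them. Thus I must prove $C_a \subseteq$ Interval filaments and Interval filaments $\subseteq C_{ab}$.

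For the inclusion Interval filaments $\subseteq C_{ab}$, I would take a bigrounded string (interval filament) representation, whose grounding ordering I claim avoids $P_{ab}$. Each filament together with its grounding interval on the line bounds a region; the key structural fact for interval filaments is that two filaments either are nested, are disjoint, or their grounding intervals properly overlap. Suppose four vertices $1<2<3<4$ in grounding order realize $P_{ab}$, so $(1,3)$ and $(2,4)$ are edges while $(1,2)$ and $(2,3)$ are non-edges. I would argue geometrically that the filament of $2$, grounded strictly between the endpoints of filament $1$'s interval and inside the region cut out by filaments $1$ and $3$, is trapped: since $(1,3)$ is an edge, filaments $1$ and $3$ touch and enclose a region containing $2$'s grounding point, and the absence of edges $(1,2)$ and $(2,3)$ forces $2$ to lie wholly inside this enclosed region, contradicting that $(2,4)$ is an edge with $4$ lying outside. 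Making ``inside/outside'' precise via a Jordan-curve argument on the closed curve formed by filaments $1$, $3$ and the grounding line is the technical core here.

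For the inclusion $C_a \subseteq$ Interval filaments, I would take a graph $G$ with an ordering avoiding $P_a$ and build an interval filament representation whose grounding order matches this ordering. The natural attempt is to use intervals on the line for the grounding footprints, placing for each vertex $v$ an interval whose extent reflects $v$'s neighborhood, and then routing filaments above the line so that two filaments intersect exactly when the vertices are adjacent. Avoidance of $P_a$ (no $(1,3),(2,4)$ edges with $(1,2)$ a non-edge) should guarantee that the required filaments can be drawn without forced spurious crossings; the pattern-avoidance condition is precisely what rules out the obstruction to consistent nesting/overlap. I would proceed by induction on the number of vertices in grounding order, inserting each new filament so that it connects to exactly its earlier neighbors, using the $P_a$-avoidance to show a valid insertion position always exists.

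The main obstacle I expect is the constructive direction $C_a \subseteq$ Interval filaments: translating the purely combinatorial pattern-avoidance condition into an explicit, provably-correct geometric drawing requires carefully specifying the filament curves and verifying both that all required intersections occur and that no extra ones are forced, which is where a naive construction tends to break. By contrast, the direction Interval filaments $\subseteq C_{ab}$ is a cleaner forbidden-configuration argument. I would therefore invest most of the effort in setting up a robust inductive insertion scheme for the constructive inclusion, and I would double-check tightness (that neither inclusion is an equality) separately, likely via explicit witness graphs as referenced elsewhere in the paper.
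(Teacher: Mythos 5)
Your decomposition into two inclusions matches the paper's, and you correctly predict which half is hard, but both halves have genuine gaps. For Interval filaments $\subseteq C_{ab}$, the Jordan-curve argument you call the technical core does not close: in $P_{ab}$ the pairs $(1,4)$ and $(3,4)$ are \emph{undecided}, so nothing forbids the filament of $4$ from crossing filament $1$ or filament $3$, entering the region they enclose, and meeting filament $2$ there. ``$2$ inside, $4$ outside'' is only a contradiction when the region's boundary is uncrossable by $4$, which is exactly the situation for $P_{abcd}$ and general grounded strings (this is the paper's proof of Theorem~\ref{thm:grounded-strings}), not the situation here. What rescues the statement for interval filaments is the bigrounded structure, which you mention (the nested/disjoint/overlapping trichotomy) but never invoke in the argument: each filament is confined to the vertical strip over its interval, so intersecting filaments have overlapping intervals, and properly overlapping intervals force an intersection. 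The paper's proof runs on this: writing $\alpha_i,\beta_i$ for the left and right grounding points, the edges $(1,3),(2,4)$ and non-edges $(1,2),(2,3)$ force $\alpha_1<\alpha_2<\alpha_3<\alpha_4<\beta_2<\beta_1$, after which filament $3$ is trapped inside the region under filament $2$ while filament $1$ must stay outside it (both by non-adjacency with $2$), contradicting the edge $(1,3)$. Your variant can be repaired, but only by adding the step you omit: trapping filament $2$ in your region forces $\beta_2<\alpha_3<\alpha_4$, so the intervals of $2$ and $4$ are disjoint, and it is strip-confinement, not the Jordan curve, that then kills the edge $(2,4)$.

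For $C_a\subseteq$ Interval filaments, what you offer is a plan rather than a proof: you propose intervals ``reflecting the neighborhood'' and an inductive insertion of filaments, and you state that $P_a$-avoidance ``should guarantee'' insertions exist, but you give no mechanism and no point where the pattern hypothesis is actually used. This direction is the bulk of the paper's proof, and it is not routine: the paper builds a first layer of square filaments over $[u,r_u]$ (with $r_u$ the rightmost neighbor), where $P_a$-avoidance shows every crossing of squares is a genuine edge; then adds ``spikes'' at left endpoints to realize the edges the squares miss; then reroutes plateaus around spikes with ``chimneys'' to delete the spurious crossings the spikes introduce, invoking $P_a$-avoidance a second time to show the obstructing configuration would itself be an occurrence of $P_a$. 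Without a concrete construction of this kind together with a correctness argument, the inclusion $C_a\subseteq$ Interval filaments remains unproven in your proposal.
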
 

We think that it is insightful to have both inclusions into one theorem, but for convenience, we split the theorem into two lemmas for the proofs.

\begin{lemma}
Interval filaments $\subseteq C_{ab}$.\end{lemma}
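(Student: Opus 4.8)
The plan is to take an interval filament representation of $G$ and read off a vertex ordering from it. Recall that in such a representation each vertex $v$ is a curve $f_v$ lying in the closed strip above an interval $I_v=[a_v,b_v]$, with its two endpoints on the grounding line at $(a_v,0)$ and $(b_v,0)$ and the rest in the open upper half-plane, and $u$ is adjacent to $v$ iff $f_u\cap f_v\neq\emptyset$. I may assume all $2n$ endpoints are distinct. I would order the vertices by increasing left endpoint $a_v$, and claim this ordering avoids $P_{ab}$, the pattern with edges $(1,3),(2,4)$ and non-edges $a=(1,2)$ and $b=(2,3)$.

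The argument rests on three standard facts about interval filaments, each a consequence of the Jordan curve theorem applied to the closed curve $J_v:=f_v\cup I_v$. (i) If $I_u$ and $I_v$ are disjoint then $u\not\sim v$, since the curves live in disjoint vertical strips. (ii) If $I_u,I_v$ overlap partially, say $a_u<a_v<b_u<b_v$, then $u\sim v$: the curve $f_v$ starts at $(a_v,0)$, a point interior to the base of $J_u$, but must reach $(b_v,0)$, which lies outside $J_u$, while staying above the line, so it is forced to cross $f_u$. (iii) If $u\not\sim v$ and $I_v\subset I_u$, then $f_v$ is contained in the closed region bounded by $J_u$ and meets $J_u$ only at its two base endpoints; consequently $\operatorname{int}(J_v)\subseteq\operatorname{int}(J_u)$.

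With these in hand I would argue by contradiction. Suppose four vertices $p<q<r<s$ (so $a_p<a_q<a_r<a_s$) realize $P_{ab}$, i.e.\ $(p,r)$ and $(q,s)$ are edges while $(p,q)$ and $(q,r)$ are non-edges. By (i) and (ii), a non-edge between two intervals with distinct left endpoints forces them to be either disjoint or nested. For $(p,q)$: if $I_p,I_q$ were disjoint then $b_p<a_q<a_r$, so $I_p$ and $I_r$ would be disjoint and $(p,r)$ a non-edge, a contradiction; hence $I_q\subset I_p$. For $(q,r)$: if $I_q,I_r$ were disjoint then $b_q<a_r<a_s$, so $I_q$ and $I_s$ would be disjoint and $(q,s)$ a non-edge, again a contradiction; hence $I_r\subset I_q$. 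Thus $I_r\subset I_q\subset I_p$.

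Finally I would apply (iii) twice along this nesting chain. The non-edge $(p,q)$ with $I_q\subset I_p$ gives $\operatorname{int}(J_q)\subseteq\operatorname{int}(J_p)$ with $f_q$ touching $J_p$ only on its base; the non-edge $(q,r)$ with $I_r\subset I_q$ puts $f_r$ inside the closed region of $J_q$, hence inside that of $J_p$, meeting $J_p$ only at the base points $(a_r,0),(b_r,0)$, which are interior to $I_p$ and therefore distinct from $f_p\cap I_p=\{(a_p,0),(b_p,0)\}$. So $f_r$ cannot meet $f_p$, contradicting the edge $(p,r)$. This shows the left-endpoint ordering avoids $P_{ab}$, proving that interval filament graphs are contained in $C_{ab}$. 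I expect the only delicate point to be the rigorous justification of fact (iii) and the nesting $\operatorname{int}(J_q)\subseteq\operatorname{int}(J_p)$ of Jordan interiors, since this is a purely topological step that must accommodate arbitrarily wild filaments; the combinatorial case analysis on the intervals is then entirely routine.
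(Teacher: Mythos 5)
Your proposal is correct and follows essentially the same route as the paper's proof: order by left grounding points, use the fact that partial interval overlap forces an intersection to turn the non-edges $(1,2)$ and $(2,3)$ into nesting of the underlying intervals, and then conclude with a Jordan-curve trapping argument showing the filament of vertex $3$ is confined under the filament of vertex $2$ and hence cannot reach the filament of vertex $1$, contradicting the edge $(1,3)$. The only cosmetic difference is bookkeeping: the paper pins down the endpoint chain $\alpha_a<\alpha_b<\alpha_c<\alpha_d<\beta_b<\beta_a$ using the two edges plus one non-edge, while you derive the double nesting $I_r\subset I_q\subset I_p$ from the two non-edges, ruling out disjointness via the edges.
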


\begin{proof}

Consider an interval filament representation. 
We claim the order of the left grounding points of the interval filaments avoids the pattern~$P_{ab}$.
For the sake of contradiction, suppose that the pattern appears in this ordering, on some vertices $a<b<c<d$. 
Let $\alpha_a,\alpha_b,\alpha_c,\alpha_d$ be the left endpoints, and $\beta_{a},\beta_{b},\beta_{c},\beta_{d}$ be the right endpoints of the filaments corresponding to $a$, $b$, $c$ and $d$. 
Because $b$ and $d$ are adjacent, it must be that $\alpha_d<\beta_b$. 
Similarly, because $a$ and $c$ are adjacent, $\alpha_c<\beta_a$. 
Since $a$ and $b$ are not adjacent, we cannot have $\alpha_a<\alpha_b<\beta_a<\beta_b$, therefore, given the inclusions above, we must have $\alpha_a < \alpha_b < \alpha_c <  \alpha_d <\beta_b<\beta_a$. 
Consequently, $\alpha_c$ is in the region of the plane delimited by the grounding line and the filament of $b$. 
As $c$ is not adjacent to $b$, it cannot reach the filament of $a$ (whose extremities are outside the interval of $b$), which is a contradiction.
\end{proof}
 
We move on to the second inclusion.

\begin{lemma}
$C_a \subseteq$ Interval filaments.
\end{lemma}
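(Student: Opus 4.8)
The plan is to take a connected graph $G$ in $\CC_a$, fix an ordering $v_1 < v_2 < \dots < v_n$ of its vertices that avoids the pattern $P_a$ (the pattern with edges $(1,3)$, $(2,4)$ and the single non-edge $(1,2)$), and from this ordering directly construct an interval filament representation. Recall that $P_a$ forbids, for $a < b < c < d$, the configuration where $(a,c)$ and $(b,d)$ are edges while $(a,b)$ is a non-edge. I would first record the contrapositive combinatorial consequence of avoiding $P_a$: whenever $(a,c)$ and $(b,d)$ are edges with $a<b<c<d$, it must be that $(a,b)$ is also an edge. This is the structural fact that will let the filaments nest properly.

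\textbf{The construction.}
I would place the left grounding points of the $n$ filaments on the grounding line in the order $v_1,\dots,v_n$. The natural first attempt is to make each filament an interval-like arc whose ground footprint $[\alpha_i,\beta_i]$ is chosen so that $v_i v_j$ is an edge iff the footprints overlap in the interval-graph sense; but $G$ need not be an interval graph, so this cannot work directly, and the filament (the curve allowed to rise above the line) is exactly what buys us the extra freedom. The idea is to let the right endpoint $\beta_i$ of filament $i$ reach as far right as its right-most neighbor, and then route the curve at a carefully chosen height so that it meets exactly the filaments of its neighbors and no others. Concretely, for each vertex $v_i$ let $r(i)$ be the index of its right-most neighbor in the ordering; I would assign heights to the filaments (say, decreasing in the index, or according to a nesting order derived from the adjacency) so that when two footprints $[\alpha_i,r(i)]$ and $[\alpha_j,r(j)]$ overlap with $i<j$, the curves actually cross if and only if $v_iv_j \in E(G)$, using the height to separate a genuine edge from a ``pass-under.''

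\textbf{Verifying correctness.}
The verification has two halves. For completeness (every edge is represented) I would check that if $v_iv_j$ is an edge with $i<j$, then $j \le r(i)$, so the footprint of $i$ covers the grounding point of $j$, and the assigned heights force the curves to intersect. For soundness (no spurious intersection) I would suppose two filaments $i<j$ intersect although $v_iv_j \notin E(G)$; since they intersect, the footprint of $i$ must reach past $\alpha_j$, meaning $i$ has a neighbor $c \ge j$, i.e. $(i,c)$ is an edge with $j \le c$. Combined with the non-adjacency of $i$ and $j$ this is precisely the kind of configuration that, together with any neighbor of $j$ lying to the left of $i$, would produce the pattern $P_a$; the structural fact recorded at the start rules this out. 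This soundness argument is the mirror image of the argument used in the preceding lemma (Interval filaments $\subseteq C_{ab}$) and of the region-separation argument in the proof of Theorem~\ref{thm:touching-strings}, so I would model the bookkeeping on those.

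\textbf{Main obstacle.}
I expect the hard part to be the explicit height assignment: one must guarantee simultaneously that every adjacent pair crosses and that every non-adjacent pair whose footprints overlap merely ``passes under'' without touching. This requires that the avoidance of $P_a$ imposes enough nesting structure on the overlapping-but-non-adjacent pairs that a consistent ordering of heights exists — essentially that these pairs do not create an odd cyclic conflict forcing two curves to cross against our wishes. The cleanest route is probably to define the height of filament $i$ from a linear order obtained by processing the vertices and using the $P_a$-freeness to show this order is well-defined (acyclic); making that order precise and proving the required intersection pattern follows from it is where the real work lies.
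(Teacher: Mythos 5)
Your proposal sets up the same skeleton as the paper's proof --- ground the filaments in the $P_a$-avoiding order, let each footprint extend to just past the vertex's right-most neighbor $r_i$, and control crossings of overlapping footprints --- but it stops exactly where the paper's actual work begins. The paper's construction has three stages: a first representation $R_1$ by rectangle-shaped filaments (where a first claim shows, via $P_a$-freeness, that every intersection in $R_1$ is an edge); then, because nested \emph{adjacent} pairs do not intersect in $R_1$, a second representation $R_2$ that adds a ``spike'' at the left end of each filament, rising to the plateau height of its highest missed left neighbor; and finally a representation $R_3$ in which the unwanted intersections created by the spikes are removed by local ``chimney'' detours, with a second application of $P_a$-freeness showing that the chimneys can always be built. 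Your ``main obstacle'' paragraph is precisely this content, and you leave it unproven. Note that a single global height per filament will not do the job as you describe it: raising a nested filament above its outer neighbor's plateau (to realize an edge) raises it over its entire footprint, and thereby couples that decision with every other filament whose footprint it meets; one must either prove that the resulting system of height constraints is acyclic (nontrivial, and not done in your sketch) or, as the paper does, use purely local modifications (spikes and chimneys) whose harmlessness is argued separately. So the proposal is a plan sharing the paper's strategy, not a proof.

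There is also a concrete error in your soundness step. Suppose filaments $i<j$ intersect but $(i,j) \notin E$. You propose to combine the edge $(i,c)$ with $c \ge j$ and ``any neighbor of $j$ lying to the left of $i$''; but if $d < i$ is such a neighbor, the four vertices $d<i<j<c$ carry the edges in positions $(1,3)$ and $(2,4)$ and the known non-edge $(i,j)$ in position $(2,3)$ --- this is an occurrence of $P_b$, not of $P_a$, and $P_b$ is not forbidden for graphs in $\CC_a$. The correct argument (the paper's first claim) looks to the \emph{right}: footprint overlap together with non-adjacency forces $i < j < r_i < r_j$, and then the edges $(i,r_i)$, $(j,r_j)$ and the non-edge $(i,j)$, now sitting in position $(1,2)$, form $P_a$, giving the contradiction.
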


\begin{proof}

Consider a graph $G$ in $C_a$. We show how to compute a proper interval filament representation of $G$.
To do so, we first create a set of interval filaments that is not a proper representation of $G$ in general, and then modify it several times, until it becomes a proper intersection representation of $G$. 
The outcome of our construction is illustrated on an example in Figure~\ref{fig:filaments}. 

\begin{figure}[!h]
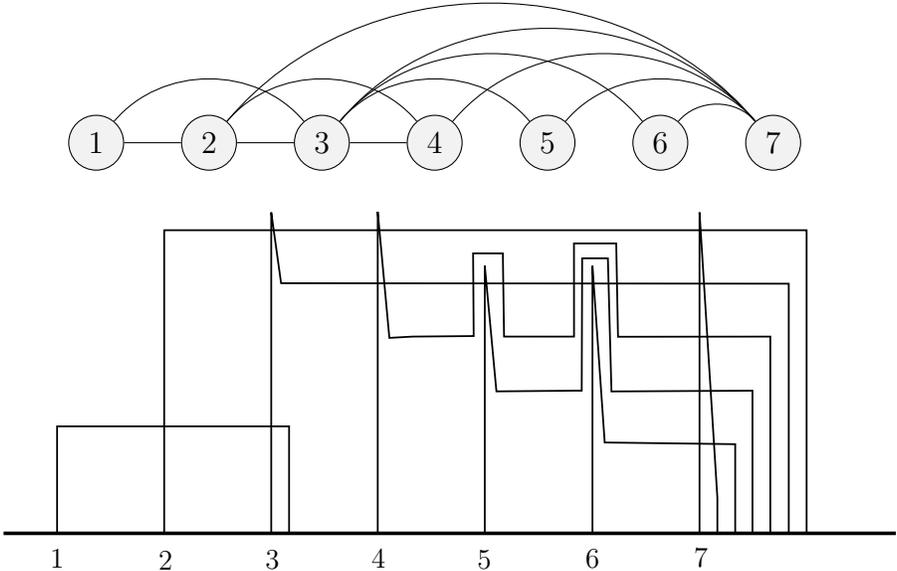

\begin{center}
\input{
			filament-graph.tex}
\vspace{0.5cm}

\scalebox{0.9}{			
\input{fig-filaments}}
\end{center}
\caption{
\label{fig:filaments}
On the top picture, a graph whose vertices are ordered to avoid $P_a$. 
On the bottom picture, the interval filament representation built by our construction. 
(A few modifications of the construction have been used for readability, including using a smaller scale on the vertical axis than on the horizontal axis.)}
\end{figure}

We start with a first representation that we call $R_1$.
Consider an order of the vertices of $G$ avoiding~$P_a$.
We will abuse notation and write $u$ instead of ``the rank of $u$ in the ordering''. 
For every the vertex $u$, we denote its right-most neighbor by $r_u$ (with $r_u=u$ when $u$ has no neighbor on the right).
Let $\varepsilon_u=\frac{n-u+1}{n+1}$. 
We note that for two vertices $u<v$, $0<\varepsilon_v<\varepsilon_u<1$.
In $R_1$, the filament of $u$ starts at position $(u,0)$, goes up to point $(u,r_u-u+\varepsilon_u)$, then goes right to the position $(r_u+\varepsilon_u,r_u-u+\varepsilon_u)$, and then goes down to the grounding line at $(r_u+\varepsilon_u,0)$. Note that at this point each filament forms a square with the grounding line. For further reference, we denote the altitude of horizontal segment as $h_u=r_u-u+\varepsilon_u$. 

\medskip

\begin{claim}
In $R_1$, all the intersections between filaments correspond to edges in the graph, and at any intersection point, there are exactly two filaments intersecting, one that is locally vertical and one that is locally horizontal. 
\end{claim}

\medskip

\begin{claimproof}
Let us start by proving the first part of the claim.
Consider the filaments of two vertices $u<v$. We make a case analysis depending on the placement of $r_u$ and~$r_v$.
\begin{itemize}
\item If $u\leq r_u<v\leq r_v$, then the two filaments do not intersect.
\item If $u<v \leq r_v\leq r_u$, then again the filaments do not intersect. Indeed, the square corresponding to $v$, starts at position $v$ ($>u$) and stops at $r_v+\varepsilon_v$ ($<r_u+\varepsilon_u$), thus is fully included into the square of $u$.
\item If $u<v<r_u<r_v$, then the filaments intersect in $R_1$, but in this case, since by definition $(u,r_u)$ and $(v,r_v)$ are edges in $G$, either $(u,v)$ is an edge in $G$, or we get that $u<v<r_u< r_v$ form a $P_a$ in the ordering, which is a contradiction.
\end{itemize} 

Now, for the rest of the claim, note that any horizontal segment is at an altitude of the form $k+\varepsilon_u$, where $k$ is an integer, and $\varepsilon_u\in (0,1)$ is specific to the node $u$ where the filament starts. Thus, no two horizontal parts of two different filaments can intersect. 
The analogue reasoning holds for the vertical parts. 
The claim follows. 
\end{claimproof}

In general, $R_1$ is not a proper representation of $G$, because there are edges in the graphs that do not correspond to filament intersections.
We now modify this first representation $R_1$ into a second representation $R_2$.

For a vertex $v$, we select (when it exists) the vertex $w$ such that:
\begin{itemize}
\item $w$ is on the left to $v$,
\item $w$ is adjacent to $v$ in $G$,
\item the filaments of $w$ and $v$ do not intersect in $R_1$
\item among the vertices satisfying the three points above, $w$ is the one that has the highest horizontal segment (that is, the largest $h_w$).  
\end{itemize} 

We modify the shape of $v$ with a spike at the beginning: the first vertical line goes up to $h_w$, and then goes down to $h_v$, and continues the shape as before. 
(If we want the filament to not self-intersect, we can go down with some steep enough slope, like in Figure~\ref{fig:filaments}.)

\medskip

\begin{claim}
For every edge of the graph, there is an intersection of filaments in $R_2$. 
Moreover, if the filaments corresponding to some vertices $y<v$ intersect in $R_2$, and $y$ and $v$ are not adjacent in $G$, then there must exist a vertex $x$, with $x<v$, such that $(x,v)\in G$ and $h_y<h_x$. 
\end{claim}

\medskip

\begin{claimproof}
First, note that all filaments intersections from $R_1$ also appear in $R_2$. Also, note that the spikes appear on different zones of $x$-axis, thus they do not interfere.
Now, consider a pair of nodes $v<u$, such that $(v,u)\in G$, but the two corresponding filaments were not intersecting in $R_1$. 
By definition of $r_v$, $u\leq r_v$, thus $v<u <r_v+\varepsilon_v$. 
Therefore, in $R_1$ the horizontal part of the filament of $v$ goes above the left grounding point of $u$.
The vertex $u$ satisfies the three first conditions of the vertex $w$ defined above, and because $h_w$
is chosen to be maximal, we get that the spike of $u$ necessarily touches the filament of $v$.
Thus the intersection between the filaments of $v$ and $u$ appears in $R_2$, which proves the first part of the claim.

Now, if there is an unwanted intersection, that is an intersection in $R_2$ that does not appear in $G$, it must be because of a spike of some vertex $v$, that intersects the horizontal part of the filament of a vertex $y$ with $(v,y)\notin E$, while going up to touch~$u$. 
This can only happen if the conditions of the claim are satisfied. 
\end{claimproof}

We now define $R_3$, which will be our final filament representation.
Let us first define an $\varepsilon$-\emph{chimney}. 
Consider a horizontal segment $[(x_1,y),(x_2,y)]$ and a vertical segment $[(x,y_1),(x,y_2)]$ that intersect at position $(x,y)$. 
Then an $\varepsilon$-chimney is a modification of the horizontal segment, to avoid the vertical segment. More precisely, it is the piece-wise segment going through the following points: $(x_1,y), (x-\varepsilon,y), (x-\varepsilon, y_2+\varepsilon), (x+\varepsilon, y_2+\varepsilon), (x+\varepsilon, y), (x_2,y)$.

Let us call \emph{a conflict} an intersection between two filaments that does not correspond to an edge in $G$.
As mentioned earlier, a conflict is necessarily between the spike of a filament and the horizontal plateau of another filament. 
For any given filament $u$, we show how to solve all the conflicts that involve the spike of $u$, without loosing the intersection that are needed in $G$. 
We do so by only modifying the filaments in a small vertical column around the abscissa of $u$. 
We will maintain the invariant that when we consider a filament $v$, every filament~$w$ above it has been treated: either $(u,w)\in E$ and the intersection has been preserved, or $(u,w)\notin E$ and the intersection has been removed.  
We consider the filaments that intersect the spike of $u$ ordered by the altitude of their horizontal plateaus, from the highest to the lowest.
For each such filament $v$, we do the following:
\begin{itemize}
\item If $(u,v)\in E$, do nothing. 
\item If $(u,v)\notin E$:
\begin{itemize} 
\item If there is a filament $w$ above $v$, such that $(v,w)\notin E$, and $w$ intersects the spike of $u$, then we claim that there is a contradiction. 
Indeed, if $(v,w)$ is not in~$G$, and  $h_v<h_w$, necessarily, $w<v<u<r_v<r_w$. Then $w<v<u<r_v$ is an occurrence of $P_a$, because $(w,v)\notin E$ (by hypothesis), $(w,u)\in E$ (by the invariant) and $(v,r_v)\in E$ (by definition).   
\item If there is no filament $w$ above $v$, such that $(v,w)\notin E$, and $w$ intersects the spike of $u$, then we transform the filament of $v$, by building a chimney around the spike of $u$. 
The $\varepsilon$ of this chimney is taken to be smaller than all the $\varepsilon$ used for chimneys of filaments modified so far around the spike of $u$. 
If this is the first chimney, then we take $\varepsilon$ to be the minimum between 1/4, and the space between the top of the spike of $u$ and the next plateau above (if it exists).  
\end{itemize}
\end{itemize}

We claim that this construction gives a proper interval filament representation of~$G$. 
First note that, the conflict resolution is independent at all vertices, because the chimney we build are taken with a sufficiently small $\varepsilon$ around the spikes. 
Hence, it is sufficient to show for each spike that: we have not lost correct intersections, we have removed all the incorrect intersections, and have not created new ones.

The correct intersections that were present in $R_2$ are of two types: the ones coming from~$R_1$ and the ones between a spike and horizontal plateaus. 
We have only modified the graphs around the spikes, thus the first type is safe. 
Moreover, when we had an intersection corresponding to an edge $(u,v)\in E$, we have not modified the filament, hence the second type is also safe.

All the incorrect intersections at some given spike in $R_2$ have been removed thanks to our chimney constructions.

Finally, we have not created any new intersection. Indeed, while building a chimney, the only new incorrect intersection that we might have created is with another chimney, but this cannot happen as we have taken care of reducing the $\varepsilon$ parameter at every chimney.
\end{proof}

\section{Grounded stairs graphs between $C_{ab}$ and $C_{abc}$}
\label{sec:grounded-stairs}

\begin{Theo}\label{thm:grounded-stairs} The following holds:
 $C_{ab} \subseteq$ grounded stairs $\subseteq C_{abc}$. 
\end{Theo}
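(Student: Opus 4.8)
The plan is to prove the two inclusions $C_{ab} \subseteq$ grounded stairs and grounded stairs $\subseteq C_{abc}$ separately, following the same style as the proof of Theorem~\ref{thm:bigrounded}. Recall that a stair is a monotone staircase going up and to the right, so a grounded stair is determined by its grounding point on the line and the sequence of horizontal/vertical steps leading away from it. The key geometric fact I would exploit is that, just as for interval filaments, the grounding order of the stairs gives a natural candidate ordering of the vertices, and I would check which patterns among $a,b,c,d$ that ordering must avoid.

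For the inclusion grounded stairs $\subseteq C_{abc}$, I would take a grounded stairs representation, order the vertices by their grounding points, and argue that this ordering avoids $P_{abc}$. Suppose $P_{abc}$ appears on vertices $1<2<3<4$: edges $(1,3)$ and $(2,4)$ are present while $(1,2)$, $(2,3)$, $(3,4)$ are all non-edges. The idea is that since stairs are monotone and grounded, the stair of $1$ and the stair of $3$ intersect, so together with the grounding line they bound a region; the grounding point of $2$ sits between those of $1$ and $3$, so the stair of $2$ starts inside this region. Because $2$ is adjacent to $4$ (which is grounded to the right of $3$) but not to $1$ or $3$, the staircase of $2$ would have to escape the region by crossing the stair of $1$ or of $3$, forcing an unwanted intersection and hence an edge $(1,2)$ or $(2,3)$, a contradiction. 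The monotonicity of stairs is exactly what prevents $2$ from sneaking out, and this is where the proof differs from the (more permissive) string case; I expect that carefully justifying this ``trapping'' argument, handling the fact that $4$ lies to the right of $3$ and using the non-edge $(3,4)$, will be the delicate part.

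For the inclusion $C_{ab} \subseteq$ grounded stairs, I would mimic the explicit-construction strategy used for $C_a \subseteq$ interval filaments. Starting from a graph $G$ with an ordering avoiding $P_{ab}$, I would place the vertices along the grounding line in that order and build, for each vertex $u$, an initial staircase that climbs up and extends to the right as far as its right-most neighbor $r_u$, with heights perturbed by vertex-specific offsets $\varepsilon_u$ (as in the $R_1$ construction) so that no two horizontal plateaus and no two vertical segments share an altitude. I would first verify that this base representation realizes all the intersections forced by the $P_{ab}$-avoidance and creates no ``forbidden'' crossings that would themselves yield an occurrence of $P_{ab}$; then I would add localized spikes and $\varepsilon$-chimneys, exactly as in the proof of Theorem~\ref{thm:bigrounded}, to install the missing edges and to remove the spurious intersections, using the avoidance of $P_{ab}$ (rather than just $P_a$) to rule out conflicts when resolving them.

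The main obstacle I anticipate is in the construction direction: stairs are more rigid than filaments because every segment must be axis-parallel and the overall shape must stay monotone, so the spike-and-chimney surgery must be redone within the staircase model without destroying monotonicity, and one must check that every conflict that arises still corresponds to an occurrence of $P_{ab}$ (the stronger non-edge pattern, with both $a=(1,2)$ and $b=(2,3)$ forbidden) so that it can be safely eliminated. I expect the bookkeeping to parallel the interval-filament proof closely, with the extra non-edge $b$ being exactly what is needed to derive the contradiction in the conflict-resolution step, but verifying that the monotone staircase constraint can accommodate all the local modifications is the step most likely to require care.
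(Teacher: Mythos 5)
Your sketch of the second inclusion (grounded stairs $\subseteq C_{abc}$) is essentially the paper's own argument: order by grounding points, take the region bounded by the grounding line and the stairs of $1$ and $3$ below their first intersection point $(x,y)$, observe that the stair of $2$ is trapped in that region, and use the non-edge $(3,4)$ to force the stair of $4$ to stay to the right of the stair of $3$ up to ordinate $y$ and hence in the quarter-plane above and to the right of $(x,y)$, where it can never meet the stair of $2$. That part is fine.

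The gap is in the first inclusion, $C_{ab} \subseteq$ grounded stairs, which the paper explicitly calls the main part of the theorem. You propose to transplant the $R_1$/spike/chimney surgery from the interval-filament proof, and you flag monotonicity as a risk to be checked --- but that risk is fatal, not a matter of bookkeeping. A spike goes up to $h_w$ and then back down to $h_v$; a chimney goes up, across, and back down around a vertical segment. Both moves are intrinsically non-monotone, while a stair by definition never descends: once it reaches a height it stays at or above it forever. So after "spiking" a stair up to $h_w$ you cannot return to $h_v$, and the stair would then cross every plateau between $h_v$ and $h_w$ lying to its right, creating exactly the unwanted intersections that the filament proof eliminates by descending; trying to fix those with further (also non-monotone) chimneys only cascades the problem. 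The paper avoids this entirely with a construction that never needs to descend: it processes vertices left to right, introduces each new vertex $i$ as a provisionally infinite vertical segment at abscissa $i$, and extends the stairs of each already-placed neighbor $j$ of $i$ rightward and upward until they reach that segment --- extensions in those directions preserve the stair property automatically. The crux is a blocking argument: if the stair of $j$ cannot reach abscissa $i$, it is stopped at a crossing point of two stairs $\ell$ (horizontal there) and $r$ (vertical there), both non-adjacent to $j$, with $\ell < j < r < i$; the edges $(\ell,r)$, $(j,i)$ and non-edges $(\ell,j)$, $(j,r)$ then form exactly $P_{ab}$, a contradiction. Your plan for this direction would need to be replaced by an argument of this kind rather than patched.
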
 

The main part of this theorem is the first inclusion. 
The second inclusion can actually be derived from a result in \cite{PachT20}. More precisely, in \cite{PachT20} it is proved that the disjointness graph of $x$-monotone curves are semi-comparability graphs. This is related to our result because: the disjointness graph is the complement of the intersection graph, $x$-monotone graphs are a generalization of grounded stairs graphs, and semi-comparability are the graphs that avoid the complement of $C_{abc}$.

\begin{proof}
Let us start with the first inclusion, and consider a graph $G=(V,E)$ in $C_{ab}$. 
We build a grounded stairs representation incrementally, by adding the stairs one by one, from left to right, in the ordering of $G$ avoiding the pattern~$P_{a,b}$. 
Actually, for every new vertex $i$, allow ourselves to modify the stairs 1 to $i-1$, such that at the end of the process, the adjacency of the vertices 1 to $i$ in the graph is the same as the one in the grounded stairs configuration. 

For any new vertex $i$, we create an infinite vertical line, grounded at position $i$. We will cut the line into a finite segment at the end of this step of the construction, and maintain the invariant that all the stairs are finite.
We will maintain the invariant that the representation of vertices 1 to $k$ only uses abscissas in $[1,k]$, thus, at that point, the line has no intersection. 
Consider a stair $j$, with $j<i$ in the ordering, and such that $(i,j)\in E$. 
We claim that it is possible to modify the stairs of $j$, so that they intersect the segment of $i$, without creating an unwanted intersection with the stairs of any vertex $k$, with $(j,k)\notin E$.
Note that proving that we can make the stairs of $j$ and $i$ intersect simply means that we can extend the stairs of $j$ beyond the abscissa $i$.
 
For the sake of contradiction, suppose there are some stairs $j$, such that the extension to $i$ is no possible.
It means that there is a position $x$ on the horizontal axis, such that the stair $j$ cannot be extended beyond abscissa $x$. 
We extend the stairs $j$ to a position $p=(x-\varepsilon,t)$ for a small $\varepsilon$. 
Now, as by invariant, all the stairs of index smaller than $i$ are finite, there must be an ordinate $y$, such that if we make stairs $j$ go up from $p$, it cannot go further than $y$. (If such an ordinate $y$ would not exist, then we could just go high enough to be above all the stairs except $i$, and then go straight to $i$, which would contradict that fact that we cannot go beyond abscissa $x$.)

In this situation, the position $(x,y)$ must be the intersection of two stairs $\ell$ and $r$, $\ell$ having an horizontal segment at height $y$ , and $r$ having a vertical segment at abscissa $x$. And necessarily $\ell$ and $r$ are not adjacent to $j$ in $G$.
We claim that in the ordering, $\ell < j < r <i $. 
Indeed, $i$ has the right-most stairs by definition, and given the configuration described above, any other ordering would imply an intersection between the stairs $\ell$ or $r$, and the stairs of $j$. 
Then, the vertices $\ell < j < r <i $ correspond the pattern $P_{a,b}$ because we have edges $(\ell,r)$ and $(j,i)$, and non-edges $(\ell,j)$ and $(j,r)$. This is a contradiction. Therefore, we can always extend the stairs of $j$ to the stairs of $i$, and get the same adjacency in the representation and in the graph for vertices from 1 to $i$. Finally, we cut the infinite vertical line of $i$ just above its last intersection to make it a finite segment. This finishes the proof of the first inclusion.

Let us now prove that grounded stairs graphs are included in $C_{abc}$. 
Consider a grounded stairs representation. 
We claim that the grounding ordering of the associated graph avoids the pattern~$P_{abc}$. 
Suppose that it is not the case, and that the pattern appears on some vertices $a<b<c<d$. 
Consider the zone of the plane between the grounding line, the stairs of $a$, and the stairs of $c$, below their (first) intersection, whose coordinates will be denoted $(x,y)$.
This zone contains the grounding point of $b$. 
Because intersections between the stairs of $b$ and the stairs of $a$ and $c$ are forbidden, the stairs of $b$ are fully contained in this zone.
Now, the stairs of $d$ must appear strictly on the right of the stairs of $c$ until ordinate $y$, because these two do not intersect either. 
Finally, the rest of the stairs of $d$ is in the quarter of the plane that is on the top right of $(x,y)$. For both parts of the stairs of $d$, the intersection with the stairs of $b$ is impossible. This is a contradiction, and proves the inclusion.
\end{proof}

\section{Inclusions of grounded convex and string}
\label{sec:convex-string}

In this section, we show two easy inclusions of geometric classes into pattern classes. 

\begin{Theo}
\label{thm:convex-Cabc}
Grounded convex graphs are included in $C_{abc}$.
\end{Theo}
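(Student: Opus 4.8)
The plan is to show that the grounding order of any grounded convex representation avoids the pattern $P_{abc}$, which by Observation~\ref{obs:inclusion-patterns} and the definition of $C_{abc}$ suffices. Recall that $P_{abc}$ is the pattern on $1<2<3<4$ with edges $(1,3)$ and $(2,4)$, and non-edges $(1,2)$, $(2,3)$, $(3,4)$ (no constraint on $(1,4)$). So I would argue by contradiction: suppose the grounding order contains four vertices $a<b<c<d$ realizing $P_{abc}$, meaning the convex shapes of $a$ and $c$ intersect, the shapes of $b$ and $d$ intersect, but the pairs $(a,b)$, $(b,c)$ and $(c,d)$ are all non-intersecting.

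The key geometric observation is the same ``region'' argument sketched in the introduction for strings. First I would consider the shapes of $a$ and $c$: since they intersect and are both grounded (touching the line at grounding points $g_a<g_b<g_c<g_d$ from left to right), together with the grounding line they enclose a bounded region $R$ of the upper half-plane. The crucial point, which is where convexity enters, is to pin down on which side of the shape of $c$ the grounding point $g_d$ lies. Because the shapes are convex and $d$ is grounded to the right of $c$ with $c$ and $d$ disjoint, the shape of $d$ lies entirely outside $R$ on the far (right) side of $c$'s shape; symmetrically, since $b$ is grounded between $a$ and $c$ and is disjoint from both, the grounding point $g_b$ lies inside $R$. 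Then the shape of $b$, being disjoint from both $a$ and $c$, cannot cross the boundary of $R$ (its grounding segment, the part of $a$, or the part of $c$ bounding $R$), so it is trapped inside $R$; likewise the shape of $d$ is confined to the complement of $R$. Since $R$ and its complement are separated by the boundary curves of $a$ and $c$, and $b$ cannot touch either $a$ or $c$ while $d$ cannot touch $c$, the shapes of $b$ and $d$ are separated and cannot intersect, contradicting the edge $(b,d)$.

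I expect the main obstacle to be making the separation rigorous, specifically establishing that the region $R$ bounded by two intersecting convex shapes and the grounding line is a well-defined topological disk whose boundary the shape of $b$ cannot escape. With general strings the boundary parts ``from the grounding point to the first intersection $p$'' are simple curves, and Theorem~\ref{thm:touching-strings} already used exactly this kind of Jordan-curve argument; here convexity should only make the picture cleaner, since the boundary of each convex shape meets the grounding line and the intersection point in a controlled way. The delicate case is when the shapes of $a$ and $c$ intersect more than once or tangentially—one should take $p$ to be an appropriate (say the first, in grounding order) intersection point so that the enclosed region and its boundary are unambiguous. Once the region $R$ is correctly defined, the confinement of $b$ inside $R$ and of $d$ outside $R$ follows from the disjointness hypotheses together with the Jordan curve theorem, and the contradiction is immediate. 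In fact this inclusion also follows from the more general Theorem~\ref{thm:grounded-strings} (grounded strings are in $C_{abcd}$) refined by the stronger non-edge structure here, or from~\cite{PachT19,PachT20}, but I would give the direct convex argument for completeness.
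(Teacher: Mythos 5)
Your setup is correct, and you even state explicitly that $(1,4)$ is unconstrained in $P_{abc}$ — but the core of your argument then quietly ignores this. The region $R$ is bounded by three pieces: the grounding line, an arc of the boundary of $a$, and an arc of the boundary of $c$. The Jordan-curve reasoning you invoke (``the confinement of $b$ inside $R$ and of $d$ outside $R$ follows from the disjointness hypotheses together with the Jordan curve theorem'') keeps $d$ out of $R$ only if $d$ is forbidden from crossing \emph{every} piece of $\partial R$. But $d$ is allowed to intersect $a$, since $(a,d)$ is not a non-edge of $P_{abc}$, so $d$ may perfectly well cross the $a$-portion of $\partial R$; topology alone does not trap it. This is precisely the difference between $P_{abc}$ and $P_{abcd}$: for general strings the statement you need is simply false (a curve avoiding only $c$ and the line can dive into $R$ through the interior of $a$), which is why Theorem~\ref{thm:grounded-strings} only yields membership in $C_{abcd}$. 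For the same reason your closing remark is backwards: grounded convex $\subseteq$ grounded strings $\subseteq C_{abcd}$ gives the \emph{weaker} conclusion ($C_{abc}\subseteq C_{abcd}$), so the string theorem cannot be ``refined'' into the present one; the whole content of the theorem is what convexity buys beyond that.

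The missing step is exactly where the paper makes convexity work, and where your proposal only gestures (``which is where convexity enters''). The paper's proof takes $P$ to be a lowest point of the convex set $a\cap c$; then the shape of $b$ cannot reach above the ordinate of $P$, and since the straight segment from $P$ to the grounding point $g_c$ lies inside $c$ (convexity of $c$), the shape $d$, which avoids $c$, must pass \emph{above} that segment; hence $b$ and $d$ could only meet at $P$ itself, which is excluded because $P$ already belongs to $a$ and $c$ and no point lies in more than two shapes. Equivalently, in your language: if $d$ contained a point $q\neq P$ of $\overline{R}$, then by convexity of $d$ the segment $[q,g_d]$ lies in $d$; both endpoints are strictly below the ordinate of $P$, so this segment cannot go around the top of the chord $[P,g_c]$ and must cross it, contradicting $d\cap c=\emptyset$. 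Some such argument, using convexity of $c$ \emph{and} of $d$ to replace the missing non-edge $(a,d)$, is indispensable; with it your region argument closes, and without it the proof has a genuine gap.
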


\begin{proof}
Consider the grounded convex representation, and suppose that the pattern of $C_{abc}$ is present with vertices $a<b<c<d$, when considering  the graph in the grounding ordering. 
Note first that the intersection of the convex shapes $a$ and $c$ is itself a convex shape. 
Let $P$ be a lowest point of this convex intersection.
By similar arguments as in the previous section, the ordinate of $P$ is the highest ordinate that the shape of $b$ can reach.
Now, as the shape $d$ must avoid the shape $c$, it must in particular pass ``above'' the segment between $P$ and the grounding point of $c$. 
Therefore $b$ and $d$ can only touch at $P$, and this is impossible as we required that there cannot be a contact of more than two shapes on each point (except for interval graphs). 
This is contradiction, it proves the theorem. 
\end{proof}

\begin{Theo}
\label{thm:grounded-strings}
Grounded string graphs are included in $C_{abcd}$.
\end{Theo}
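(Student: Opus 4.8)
The plan is to prove the single inclusion by showing that the \emph{grounding ordering} of any grounded string representation avoids the pattern $P_{abcd}$; since $C_{abcd}$ is exactly the class of graphs admitting an ordering that avoids $P_{abcd}$, this yields the theorem. This is the rigorous version of the informal argument sketched in Section~\ref{sec:overview}. I would fix a grounded representation, taking the strings to be simple curves lying in the closed upper half-plane and meeting the grounding line $L$ only at their grounding point(s), and write $g_v$ for the leftmost grounding point of the string $s_v$ of a vertex $v$. Assume for contradiction that four vertices $1<2<3<4$ in the grounding order realize $P_{abcd}$: then $s_1\cap s_3\neq\emptyset$ and $s_2\cap s_4\neq\emptyset$, while the four pairs $(1,2),(2,3),(3,4),(1,4)$ are non-edges, so the corresponding strings are pairwise disjoint.

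The first step is to build a separating Jordan curve out of $s_1$ and $s_3$. Traversing $s_1$ from $g_1$, let $p$ be the first point of $s_1$ that lies on $s_3$; then $s_1[g_1,p]$ and $s_3[p,g_3]$ meet only at $p$, each lies above $L$ except at its grounding endpoint, and together with the segment $[g_1,g_3]\subseteq L$ they form a simple closed curve $J$ bounding a region $R$. Using the Jordan curve theorem via a downward-ray parity argument, I would verify that a point just above $g_2$ lies in the interior of $J$ (a single crossing, since $g_1<g_2<g_3$ places $g_2$ on the bottom edge of $J$), whereas a point just above $g_4$ lies in the exterior (zero crossings, since $g_4>g_3$ and the only part of $J$ on $L$ is $[g_1,g_3]$).

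The core step is a confinement argument. The curve $s_2$ starts in the interior of $R$ and is disjoint from $s_1$ and $s_3$ (non-edges $(1,2)$ and $(2,3)$), so it cannot cross the two curved sides of $J$; and as it never descends below $L$, it can only touch the bottom segment tangentially, not cross it. Hence $s_2\subseteq\overline{R}$. Symmetrically, $s_4$ starts in the exterior, is disjoint from $s_1$ and $s_3$ (non-edges $(1,4)$ and $(3,4)$), and, its leftmost grounding point being $g_4>g_3$, never meets $[g_1,g_3]$; therefore $s_4$ stays in the open exterior of $R$. Consequently $s_2$ and $s_4$ are disjoint, contradicting the edge $(2,4)$, so $P_{abcd}$ cannot occur.

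I expect the main obstacle to be the topological bookkeeping rather than any new idea: ensuring $J$ is genuinely a \emph{simple} closed curve even when $s_1$ and $s_3$ cross many times (handled by choosing $p$ as the first crossing along $s_1$), and justifying that a string touching $L$ at a grounding point inside $(g_1,g_3)$ does not thereby escape $R$ (it is a tangential touch that keeps the curve in $\overline{R}$, since strings never cross to the lower half-plane). These are precisely the points where the informal ``innerface/outerface'' picture must be made precise, and where the Jordan curve theorem together with the parity of transversal crossings does the real work.
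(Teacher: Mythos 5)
Your proof is correct and follows essentially the same route as the paper's: both build a Jordan curve from the strings of vertices $1$ and $3$ together with the grounding line, and conclude that the string of vertex $2$ is trapped in the closed inner region while the string of vertex $4$ stays outside, contradicting the edge $(2,4)$. Your write-up merely supplies the technical care (choosing the first intersection point $p$ to make the curve simple, and handling tangential touches of the grounding line) that the paper's shorter argument leaves implicit.
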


This statement was already proved in \cite{PachT19}, and it corresponds to the proof sketch we gave in Section~\ref{sec:overview}. We reprove it here formally for completeness.

\begin{proof}
We again reason by contradiction, and consider the grounding ordering in a representation by grounded strings. Suppose we have four vertices $a<b<c<d$, such that the only edges between these vertices are $(a,c)$ and $(b,d)$. 
Then, consider the closed region of the plane that is inside the Jordan curve defined by the string of $a$, the string of  $c$, and the grounding line. 
The shape $b$ must be inside this region because its root is, and it cannot intersect any of the boundaries, by Jordan curve theorem. 
On the other end $d$ must be outside this region, because likewise its root is outside and it cannot cross the curve. Then $b$ and $d$ cannot intersect, which is a contradiction. 
\end{proof}

Note that another way to see Theorem~\ref{thm:grounded-strings} is that the complement of grounded string graphs avoids the complement of $P_{abcd}$, which is an induced cycle in the natural order. 
This is actually not very surprising, as it is known that the class of complement of triangle-free intersection graphs of grounded curves is exactly the class of Hasse diagrams~\cite{MiddendorfP93}, 
and that even without the triangle-free assumption, these complement graphs avoid an infinite family of ordered cycles~\cite{Sinden66}. 

\section{Strict inclusions and open problems}
\label{sec:imcomparable}

In this section, we focus on inequalities between included classes, and open problems.
In particular, we finish giving the results that are used to establish the diagram of Figure~\ref{fig:diagram}, given in the overview section of the paper.

We start with new inequalities between pattern classes.
Remember that inclusion of patterns imply inclusion of classes (Observation~\ref{obs:inclusion-patterns}), thus we have the following inclusions: $C_{\emptyset} \subseteq C_a  \subseteq C_{ab} \subseteq C_{abc} \subseteq C_{abcd}\subseteq ALL$, where $ALL$ stands for the set of all graphs. Also $C_{\emptyset} \subseteq C_b  \subseteq C_{ab}$.

Two inclusions are known to be strict:

\begin{Obs}
\label{obs:Cempty-Ca-Cb}
The class $C_{\emptyset}$, is strictly included into $C_a$ and $C_b$. 
For example $K_4$ has no ordering avoiding $P_{\emptyset}$, but all orderings avoid $P_a$ and $P_b$.
\end{Obs}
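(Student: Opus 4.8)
The plan is to prove the two inclusions first, and then verify strictness through the single example $K_4$, handling the $C_a$ and $C_b$ cases in parallel since they are symmetric.

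For the inclusions $C_{\emptyset} \subseteq C_a$ and $C_{\emptyset} \subseteq C_b$, I would simply invoke Observation~\ref{obs:inclusion-patterns}. Indeed, $P_{\emptyset}$, $P_a$ and $P_b$ all carry the same vertex set and the same two edges $(1,3)$ and $(2,4)$; they differ only in their non-edge sets, with $N(P_{\emptyset})=\emptyset$, $N(P_a)=\{(1,2)\}$ and $N(P_b)=\{(2,3)\}$. Since $\emptyset$ is contained in both $\{(1,2)\}$ and $\{(2,3)\}$, we have $P_{\emptyset}\subseteq P_a$ and $P_{\emptyset}\subseteq P_b$ in the sense of Observation~\ref{obs:inclusion-patterns}, and the claimed inclusions follow at once.

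For strictness, the key point is that the behaviour of $K_4$ on these three patterns is dictated entirely by the fact that $K_4$ has no non-edge. First I would check $K_4 \notin C_{\emptyset}$: since every pair of vertices of $K_4$ is adjacent, for any ordering and any four vertices $v_1<v_2<v_3<v_4$ the pairs $(v_1,v_3)$ and $(v_2,v_4)$ are both edges, so $P_{\emptyset}$ (whose only requirements are precisely these two edges) is realized. Hence no ordering of $K_4$ avoids $P_{\emptyset}$. Conversely, $K_4 \in C_a$ and $K_4 \in C_b$: both $P_a$ and $P_b$ require a non-edge in any realization --- namely $(1,2)$ for $P_a$ and $(2,3)$ for $P_b$ --- and since $K_4$ contains no non-edge, neither pattern can be realized on any four vertices, under any ordering. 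As $K_4$ is connected, every ordering of it therefore avoids $P_a$ and $P_b$, placing $K_4$ in both classes. Combined with $K_4\notin C_{\emptyset}$, this gives the two strict inclusions.

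The step that demands the most care --- though it is hardly a genuine obstacle --- is applying the definition of pattern occurrence correctly: a realization of a pattern must reproduce all of the pattern's edges \emph{and} all of its non-edges, while the undecided pairs stay free. The argument for $K_4\notin C_{\emptyset}$ relies on $P_{\emptyset}$ having \emph{no} non-edge constraint, and the arguments for $K_4\in C_a$ and $K_4\in C_b$ rely on $P_a$ and $P_b$ each carrying a non-edge constraint that the complete graph $K_4$ cannot meet.
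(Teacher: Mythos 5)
Your proof is correct and follows essentially the same route as the paper: the inclusions come from the pattern-inclusion observation (Observation~\ref{obs:inclusion-patterns}), and strictness is witnessed by $K_4$, which cannot avoid $P_{\emptyset}$ (no non-edge constraint to violate) yet trivially avoids $P_a$ and $P_b$ (each demands a non-edge that the complete graph lacks). The paper states this argument more tersely, but the content is identical.
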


We prove three other inequalities between included classes.

\begin{Theo}
\label{thm:pattern-strict}
$C_a \subsetneq C_{ab}$, $C_b \subsetneq C_{ab}$, and $C_{abcd} \subsetneq ALL$
\end{Theo}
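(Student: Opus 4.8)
The plan is to exhibit explicit separating graphs for each of the three strict inclusions. For each claimed strict containment $\CC_X \subsetneq \CC_Y$, it suffices to produce one graph $G$ together with the two facts: (i) $G \in \CC_Y$, and (ii) $G \notin \CC_X$. Recall from Observation~\ref{obs:inclusion-patterns} that the inclusions $\CC_a \subseteq \CC_{ab}$, $\CC_b \subseteq \CC_{ab}$ and $\CC_{abcd} \subseteq ALL$ are already established, so only the separating examples remain.

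For the inclusion $C_a \subsetneq C_{ab}$, I would look for a graph that has an ordering avoiding $P_{ab}$ but no ordering avoiding $P_a$. Since $P_{ab}$ forbids the configuration with edges $(1,3),(2,4)$ and non-edges $a=(1,2)$, $b=(2,3)$, whereas $P_a$ only forbids edges $(1,3),(2,4)$ with the single non-edge $a=(1,2)$, a good candidate is a small dense graph where every ordering that kills the $P_a$ obstruction is forced to create a $P_a$ elsewhere, yet an ordering exists keeping $(2,3)$ always present when $(1,3),(2,4)$ and the non-edge $(1,2)$ co-occur. I expect a graph on a handful of vertices (likely a cycle $C_n$ for small $n$, or a small graph built around a $4$-cycle with a chord pattern) to work; the task is to check all orderings, which for a fixed small graph is a finite verification. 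For $C_b \subsetneq C_{ab}$, by the mirror symmetry noted in Remark~\ref{Sym} the analysis of $P_b$ reduces similarly, and I would search for a graph avoiding $P_{ab}$ but forcing $P_b = P_{(2,3)}$ in every ordering; a natural candidate is again a small cycle or the complement of one. Since grounded L-shapes equal $\CC_{ab}$ via two patterns and grounded rectangles equal $\CC_b$, one can alternatively borrow the known incomparability of grounded L-shapes and grounded rectangles (cited to~\cite{JelinekT19}) to extract separating graphs, but a direct small example is cleaner.

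For $C_{abcd} \subsetneq ALL$, I need a single graph admitting no ordering that avoids $P_{abcd}$. By Theorem~\ref{thm:grounded-strings}, every grounded string graph lies in $\CC_{abcd}$; equivalently, any graph that is \emph{not} a grounded string graph (i.e.\ not an outer-string graph) witnesses $\CC_{abcd} \neq ALL$, provided we can also argue it genuinely fails every ordering. The cleanest route is to take a known non–string graph, or more directly to argue combinatorially: the complement of $P_{abcd}$ in the natural order is an induced $C_4$ (as noted after Theorem~\ref{thm:grounded-strings}), so avoiding $P_{abcd}$ in some order is equivalent to the complement admitting an ordering avoiding an induced ordered $4$-cycle. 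I would pick a graph whose complement contains enough interleaved $4$-cycle structure that no vertex ordering can avoid the pattern, and verify the obstruction directly.

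The main obstacle will be the first two separations, $C_a \subsetneq C_{ab}$ and $C_b \subsetneq C_{ab}$: unlike the geometric containment $C_{abcd}\subsetneq ALL$, these require certifying that \emph{no} ordering of the chosen graph avoids $P_a$ (resp.\ $P_b$), which is a statement quantified over all $n!$ orderings. The work is to choose the separating graph small enough that this exhaustive check is short and can be replaced by a clean structural argument—for instance, showing that any ordering must place some triple in the forbidden relative positions because of a symmetry or a counting argument on edges—while still satisfying membership in $C_{ab}$ by exhibiting one explicit good ordering.
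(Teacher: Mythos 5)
Your overall strategy is the same as the paper's: for each strict inclusion, exhibit one separating graph and certify non-membership by checking all orderings (the paper does exactly this, using a computer enumeration). But as written, your proposal has a genuine gap: the separating graphs are never produced, and they are the entire content of the theorem. Worse, the concrete candidates you float for the first two separations provably fail. Every cycle $C_n$ is outerplanar, hence lies in $\CC_{\emptyset}$, and by Observation~\ref{obs:inclusion-patterns} we have $\CC_{\emptyset} \subseteq \CC_a \cap \CC_b \cap \CC_{abcd}$; so no cycle (and no outerplanar graph at all) can witness any of the three separations. The paper's witnesses are cycles \emph{with long chords}: an $8$-cycle with chords $(0,4)$ and $(2,6)$ for $\CC_{ab} \setminus (\CC_a \cup \CC_b)$, and a $12$-cycle with chords $(0,6)$, $(2,8)$, $(4,10)$ for the complement of $\CC_{abcd}$ — the chords are what kill all good orderings. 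Your hope that the exhaustive check can be replaced by ``a clean structural argument'' is also unfulfilled in the paper itself, which resorts to machine enumeration even for the $8$-vertex example; for the $12$-vertex example a hand check over all orderings is not realistic.

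Two further points. First, your reduction for $\CC_{abcd} \subsetneq ALL$ is logically backwards: Theorem~\ref{thm:grounded-strings} gives grounded strings $\subseteq \CC_{abcd}$, so a graph failing to be an outer-string graph tells you nothing about membership in $\CC_{abcd}$ (the implication goes the other way: non-membership in $\CC_{abcd}$ implies non-string). Your hedge ``provided we can also argue it genuinely fails every ordering'' concedes this, but then the non-string property buys nothing and you are back to needing an explicit witness. Second, your alternative route for $\CC_b \subsetneq \CC_{ab}$ via \cite{JelinekT19} does work, though not for the reason you state: grounded L-shape graphs do \emph{not} equal $\CC_{ab}$ (they are characterized by two patterns and are strictly contained in $\CC_{ab}$); what matters is only the inclusion grounded L-shapes $\subseteq \CC_{ab}$ together with $\CC_b =$ grounded rectangles, so a grounded L-shape graph that is not a grounded rectangle graph separates $\CC_b$ from $\CC_{ab}$. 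This is essentially how the paper handles that edge in Observation~\ref{obs:inequalities}. But this trick has no analogue for $\CC_a \subsetneq \CC_{ab}$ or for $\CC_{abcd} \subsetneq ALL$, where explicit new witnesses are unavoidable.
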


\begin{proof}
The proof of this theorem uses a computer program for enumerating vertex orderings. The source code  is available upon request.

For the first two inequalities, $C_a \subsetneq C_{ab}$ and $C_b \subsetneq C_{ab}$, we consider the graph of Figure~\ref{Cab-Ca}. 
This graph is in $C_{ab}$, as the second pictures shows, and a computer enumeration shows that it is not in $C_{a}$, nor in $C_{b}$. 

\begin{figure}[!h]
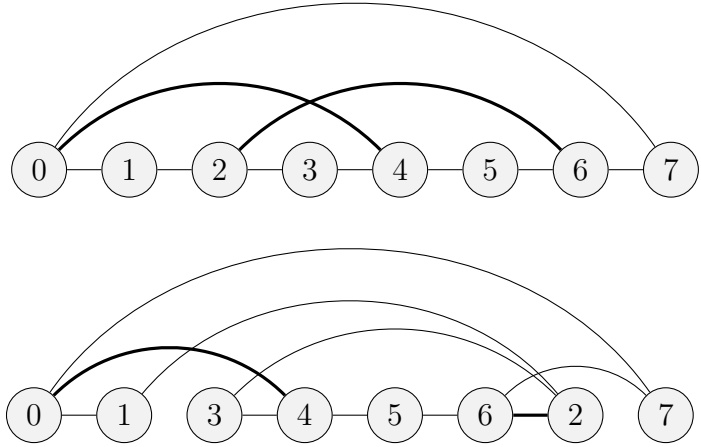

\begin{center}
\input{strict-a-ab.tex}
\input{strict-a-ab-2.tex}
\caption{\label{Cab-Ca}
Two vertex ordering of a graph that is in $C_{ab}$, but not in $C_a$, nor in $C_b$. The first pictures illustrates how we discovered this graph: taking $P_{\emptyset}$ and then forcing some order conditions with intermediate nodes and a cycle. The second picture is an ordering avoiding $P_{ab}$. (We made some edges thicker to increase readability.)}
\end{center}
\end{figure}

For the third inequality, $C_{abcd} \subsetneq ALL$, we consider the graph of Figure~\ref{Cabcd}. 
Again a computer enumeration shows that no vertex ordering of this graph avoids the pattern $P_{abcd}$, thus the graph is not in $C_{abcd}$.

\begin{figure}[!h]
\begin{center}
\input{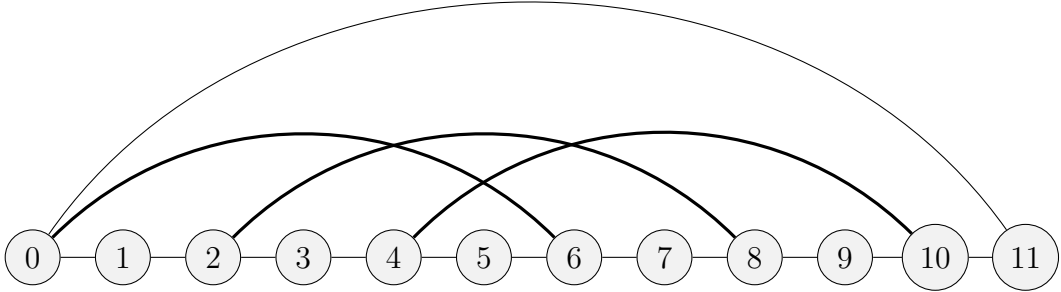}
\caption{\label{Cabcd}A graph that is not in $C_{abcd}$, by computer enumeration. (We made some edges thicker to increase readability.)}
\end{center}
\end{figure}
\end{proof}

A natural open problem here is whether all the inclusions are strict, and even more generally, the following question.

\begin{open}
Let $P$ be a pattern with a non-decided edge $(x, y)$. Let $P'$ be the pattern obtained from $P$ by adding the edge $(x,y)$. Under which conditions  does $C_P \subsetneq C_{P'}$ hold? What about adding a non-edge $(x,y)$?
\end{open}

We now give a list of inequalities between classes, that are either known, or easy to derive from known results.

\begin{Obs}
\label{obs:inequalities}
\begin{enumerate}
\item
\label{item:L-seg}
Grounded L-shape graphs are strictly included in grounded segment graphs~\cite{MiddendorfP92}.
\item 
\label{item:GR-GL-Cab}
Grounded L-shape graphs and grounded rectangle graphs are incomparable~\cite{JelinekT19}. 
This result has several consequences. 
First the inclusions of grounded rectangle graphs grounded L-shape graphs into $C_{ab}$ (which follow from the pattern inclusions) are strict. 
Second, the inclusion of grounded rectangle graphs into grounded convex graphs (which follows from the comparison of the shapes) is also strict.
\item 
\label{item:rectangle-segments}
Grounded rectangle graphs and grounded segment graphs are incomparable~\cite{JelinekT19}. As a consequence, the inclusion of grounded segment graphs into grounded convex graphs (which follows from the comparison of the shapes) is strict.
\item 
\label{item:forest-outerplanar}
We know from the literature (or alternatively from Theorem~\ref{thm:touching-L-shapes} and~\ref{thm:touching-strings}) that forest are included in outerplanar graphs. This inclusion is strict as witnessed by the cycle on four vertices, $C_4$, that is outerplanar but not acyclic.
\item 
\label{item:forest-GL}
By Theorem~\ref{thm:touching-L-shapes}, we get that forests are included in grounded L-shape graphs. 
This inclusion is strict, as witnessed by $C_4$ again (Figure~\ref{fig:L-shapes-cycle} shows that $C_4$ is a grounded L-shape graph).
\item 
\label{item:interval-inclusions}
Interval graphs are included in $C_a$, cocomparability graphs, grounded rectangle graphs, and grounded L-shape, as can be seen from the pattern inclusions (for the first three class) and from the generality of the shapes and grounding (for the fourth).
These inclusions are strict as witnessed by $C_4$ again.
\item 
\label{item:permutation-GL-GS}
The class of 2-grounded segments graphs (permutation graphs) is strictly included in the classes of grounded L-shapes and grounded segments~\cite{JelinekT19}.
\item 
\label{item:cocomp-filament}
Cocomparability graphs are included in interval filament graphs, by Observation~\ref{obs:two-lines-one-line}, and the inclusion is strict, as witnessed by $C_6$ which is not a cocomparability graph~\cite{Gallai67} and can be easily represented by interval filaments.
\end{enumerate}
\end{Obs}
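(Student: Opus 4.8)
The plan is to treat this observation as a bookkeeping result: every \emph{inclusion} asserted in the eight items has already been established earlier in the paper (from pattern containment via Observation~\ref{obs:inclusion-patterns}, from the generality of shapes via Observation~\ref{obs:inclusion-shapes}, from the line-merging construction of Observation~\ref{obs:two-lines-one-line}, or from Theorems~\ref{thm:touching-L-shapes} and~\ref{thm:touching-strings}). Consequently the only thing left to argue in each item is \emph{strictness} or \emph{incomparability}, and I would organize the proof by separating two kinds of strictness arguments: those witnessed by a single small graph, and those that are logical consequences of a cited incomparability statement.

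For the witness-based items I would exhibit the separating graphs explicitly. For items~\ref{item:forest-outerplanar} and~\ref{item:forest-GL} the witness is $C_4$: it contains a cycle, hence is not a forest, while it is outerplanar and, by the representation in Figure~\ref{fig:L-shapes-cycle}, a grounded L-shape graph. For item~\ref{item:interval-inclusions} I would reuse $C_4$ and check that it lies in all four larger classes while failing to be an interval graph: $C_4$ is outerplanar, hence in $C_{\emptyset}\subseteq C_a$; its complement $2K_2$ is transitively orientable, so $C_4$ is a cocomparability graph; outerplanarity also places it inside grounded rectangle graphs; and it is a grounded L-shape graph as above; yet $C_4$ is not chordal and therefore not an interval graph. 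For item~\ref{item:cocomp-filament} the witness is $C_6$, which is not a cocomparability graph by \cite{Gallai67} but admits an easy interval-filament drawing, combined with the inclusion of cocomparability graphs into interval filament graphs supplied by Observation~\ref{obs:two-lines-one-line}.

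For the remaining items I would invoke the published incomparability results and derive the stated consequences by short logical arguments. Items~\ref{item:L-seg} and~\ref{item:permutation-GL-GS} follow directly from \cite{MiddendorfP92, JelinekT19}. For item~\ref{item:GR-GL-Cab} I would argue that since grounded rectangle and grounded L-shape graphs are incomparable \cite{JelinekT19} yet both sit inside $C_{ab}$, neither can equal $C_{ab}$: if, say, grounded rectangle graphs were all of $C_{ab}$, then grounded L-shape graphs would be contained in grounded rectangle graphs, contradicting incomparability. This yields both strict inclusions into $C_{ab}$. Moreover, since grounded L-shapes embed into grounded convex graphs through grounded segments (item~\ref{item:L-seg} together with Observation~\ref{obs:inclusion-shapes}), an L-shape graph that is not a rectangle graph certifies that grounded rectangles form a strict subclass of grounded convex graphs. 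Item~\ref{item:rectangle-segments} is the symmetric argument, using a segment graph that is not a rectangle graph to separate grounded segments from grounded convex graphs.

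The genuinely hard content lies entirely outside the self-contained part: the incomparability statements of \cite{MiddendorfP92, JelinekT19} (underlying items~\ref{item:L-seg}, \ref{item:GR-GL-Cab}, \ref{item:rectangle-segments}, and~\ref{item:permutation-GL-GS}) require producing geometric examples that lie in one representation class but in \emph{no} ordering of the other, which is exactly the exhaustive geometric case analysis that motivated this paper, and which I would be content to cite rather than reprove. The remaining work—verifying that $C_4$ and $C_6$ have the claimed memberships and non-memberships—consists of routine small-case checks that I would carry out by hand.
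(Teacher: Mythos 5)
Your overall strategy coincides with the paper's: Observation~\ref{obs:inequalities} carries no separate proof there --- the inclusions are inherited from earlier results (pattern inclusions, shape generality, Observation~\ref{obs:two-lines-one-line}, Theorems~\ref{thm:touching-L-shapes} and~\ref{thm:touching-strings}), and the strictness claims are justified inline by the citations and by the witnesses $C_4$ and $C_6$. Your expansion of those justifications is the intended argument, including the membership checks for $C_4$ and $C_6$, the derivation that neither grounded rectangles nor grounded L-shapes can equal $C_{ab}$ from their incomparability, and the (necessary) routing of grounded L-shape graphs into grounded convex graphs \emph{via} grounded segment graphs, since convex polygons do not generalize L-shapes directly.

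There is, however, one step that fails as written: your argument for item~\ref{item:rectangle-segments}. To show that grounded segment graphs are \emph{strictly} included in grounded convex graphs, you need a graph that lies in grounded convex graphs but is \emph{not} a grounded segment graph. A ``segment graph that is not a rectangle graph'' cannot serve as this witness: such a graph belongs to the class of grounded segment graphs by hypothesis, so it separates nothing from that class --- it only re-certifies, as in item~\ref{item:GR-GL-Cab}, that grounded rectangle graphs form a strict subclass of grounded convex graphs. The correct witness is the other half of the incomparability of \cite{JelinekT19}: a grounded \emph{rectangle} graph $G$ that is not a grounded segment graph. Since rectangles are convex polygons (Observation~\ref{obs:inclusion-shapes}), $G$ is a grounded convex graph, and $G$ not being a grounded segment graph gives the strictness. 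The fix is immediate, but the derivation you stated is not a proof of that item.
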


Now that we have all the results that are used to build the Diagram of Figure~\ref{fig:diagram}, we can list some open problems. 
Of course settling the status of each inclusion (either equality or strict inclusion) is a natural direction. For example, the lowest such edge in the diagram is for the inclusion of $C_a$ in interval filament graphs. 
This question would be solved if could answer positively the following, innocent-looking problem.

\begin{open}
Are $C_a$ and cocomparability graphs incomparable? (Or are cocomparability graphs included in $C_a$, the reverse being impossible, as witnessed by $C_6$?)
\end{open}

We know that $C_{\emptyset}$ is the class of outerplanar graphs, and that $C_b$ is the class of grounded rectangle graphs, which gives us several points of view on these two classes defined by forbidden patterns. We lack such diversity for other the other pattern classes considered in the paper.

\begin{open}
Can we characterize $C_{a, b}$, $C_{a, b, c}$ and $C_{a, b, c , d}$, without patterns? Do these classes have geometric characterizations?
\end{open}

\section{Complexity of recognition}
\label{sec:recognition}

As said in the introduction, one of the motivations to study the relation between patterns and grounded intersection graphs is that it is a way to better understand patterns on four vertices. 
In this section, we survey the known results and list open problems about the complexity of the recognition of the classes that are defined by patterns on four vertices. 

Let us start by classes that are known to be recognizable in polynomial time. Remember that all the classes characterized by patterns on three vertices can be recognized in polynomial time \cite{HellMR14, FeuilloleyH21}. When we go to four vertices, the landscape is more complex, but there are still ``good cases''.

First, \emph{outerplanar graphs}, that have been mentioned a lot already, as they correspond to~$C_{\emptyset}$, can be recognized in linear time \cite{Mitchell79}.
Second, let us consider the class of \emph{cographs}, which is the smallest class of graphs that contains the unique vertex graph and is closed under series and parallel compositions. 
The cographs are also the $P_4$-free graphs \cite{Seinsche74}, hence they can be characterized by the existence of a vertex ordering avoiding the twelve orderings of $P_4$. 
But there is a more compact set of forbidden patterns~\cite{Damaschke90}, see Fig.~\ref{fig:cographs}.
The cographs can also be recognized in linear time \cite{CorneilPS85}.

\begin{figure}[!h]
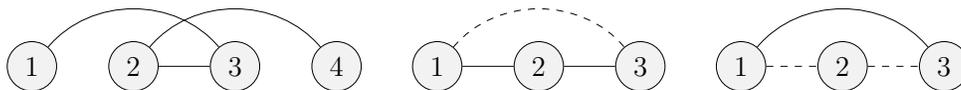

\centering
	\begin{tabular}{ccc}
	\scalebox{0.9}{
	\input{FoldedP4}}					&
	\scalebox{0.9}{
	\input{comparability}}&
	\scalebox{0.9}{
	\input{cocomparability}}
	\end{tabular}
	\caption{
	\label{fig:cographs}
	Forbidden patterns of cographs.}
\end{figure}

A less-known class is the \emph{strongly chordal graphs} in which every even cycle of length greater or equal to 6 admits an odd chord.  
These are characterized by the existence of a strong perfect  elimination ordering of the vertices, which corresponds to the patterns of Figure~\ref{fig:strongly-chordal}~\cite{Farber83}. 
They can be recognized in time $O(\min(n^2, (n + m) \log n))$~\cite{S93}.

\begin{figure}[!h]
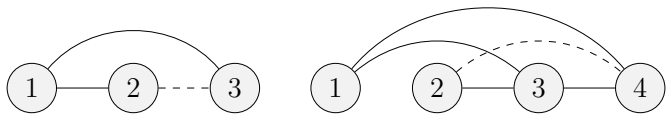

	\centering
	\begin{tabular}{cc}
	\scalebox{0.9}{
	\input{cott-1}}
	&
	\scalebox{0.9}{
	\input{strongly_chordal}}
	\end{tabular}
	\caption{\label{fig:strongly-chordal}Forbidden patterns of strongly chordal graphs.}
\end{figure}

The \emph{complement of threshold tolerance graphs (coTT for short)} is surclass of the strongly chordal graphs that have been characterized with the patterns of Figure~\ref{fig:coTT}~\cite{MonmaRT88}, and whose recognition is also polynomial \cite{GolovachHLMSSS17}.

\begin{figure}[!h]
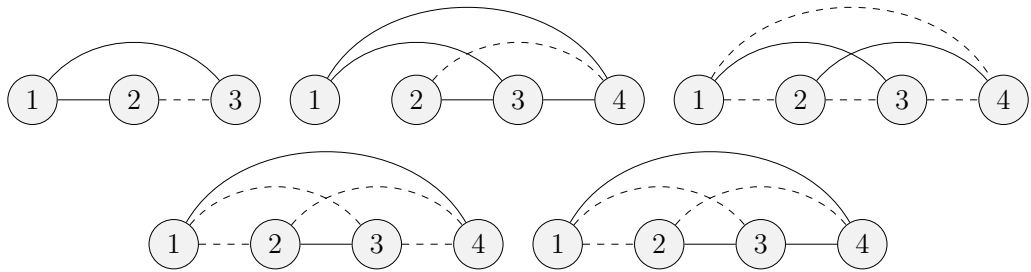

	\centering
	\scalebox{0.9}{
	\input{cott-1}}
	\scalebox{0.9}{
	\input{cott-2}}
	\scalebox{0.9}{	
	\input{cott-3}}
	\scalebox{0.9}{
	\input{cott-4}}
	\scalebox{0.9}{	
	\input{cott-5}}		
	\caption{\label{fig:coTT}
	Forbidden patterns of coTT graphs.}
\end{figure}

Finally, note that \emph{$K_4$-free graphs} (resp. their complement) are characterized by the pattern that is an ordered version of $K_4$ (resp. of $\bar{K_4}$).  
These can be recognized in time $O(m^{\frac{\alpha +1}{2}})$, where $\alpha$ is the best exponent for boolean matrix multiplication~\cite{KKM95}. 

\medskip

As hinted before, there are also classes that are characterized by patterns on four vertices and that are hard to recognize. The simplest example of this phenomenon is \emph{3-colorable graphs}, whose recognition is a classic NP-hard problem. 
These are characterized by the forbidden pattern that is simply a path on four vertices (see \emph{e.g.} \cite{FeuilloleyH21}, but this is also a special case of the Gallai-Hasse-Roy-Vitaver theorem). 

Another example is the class of \emph{graphs with queue number 1}, that can be characterized with the pattern of Figure~\ref{fig:queue-number-one}~\cite{HeathLR92}, and whose recognition is NP-hard~\cite{HeathR92}.

\begin{figure}[!h]
\centering
	\scalebox{0.9}{
	\input{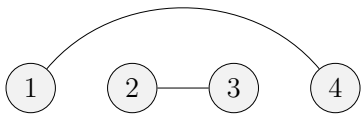}}
\caption{\label{fig:queue-number-one} Forbidden pattern of the graphs of queue number 1.}
\end{figure}

One more example is \emph{perfectly orderable graphs}, that are the graphs that admit a perfect ordering of the vertices, i.e. an ordering where there is no path on four vertices $(a-b-c-d)$, such that $a<b$ and $d< c$. 
These are characterized by the forbidden patterns of Figure~\ref{fig:perfectly-orederable}, and their recognition was proven to be NP-hard in \cite{MiddendorfP90}.

\begin{figure}[!h]
	\centering
	\scalebox{0.9}{
	\input{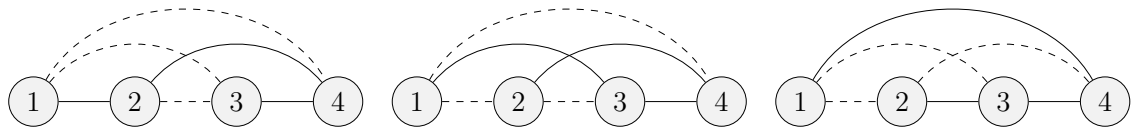}}
	\caption{\label{fig:perfectly-orederable}Forbidden patterns of perfectly orderable graphs.}
\end{figure}

Let us now move on to the interesting classes for which we do not know the complexity of the recognition. First, the recognition complexity of almost all the pattern classes on Figure~\ref{fig:diagram} is unknown. In particular, the question for grounded rectangle graphs, that is $C_b$, is still open, although it has been raised several time in the literature. 
A related class of interest is the one where the pattern has edges $(1,3)$, $(2,3)$ and $(2,4)$. This pattern correspond to $P_b$ but with an edge instead of a non-edge, and it is also a permutation of the pattern for 3-colorable graph. 
To our knowledge this class has not received attention yet. 

We will finish this section with graph classes related to two parameters: the LMIM-width and the MIM-width. 
These are interesting parameters based on the size of some maximum matchings, that are related to the other classic parameters such as treewidth, branchwidth, rankwidth and cutwidth~\cite{V12, KKST17}. 
The graph of LMIM-width 1 are known to be characterized by the two patterns of Figure~\ref{fig:LMIM}, including $P_{abcd}$. 

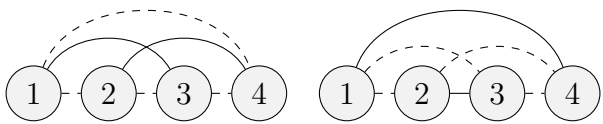
\begin{figure}[h!]
\begin{center}
\begin{tabular}{cc}
\begin{tikzpicture} 
[scale=1.0,auto=left]   
\node[circle,draw,fill=black!5] (n1) at (0,0) {1};   
\node[circle,draw,fill=black!5] (n2) at (1,0) {2};   
\node[circle,draw,fill=black!5] (n3) at (2,0)  {3};   
\node[circle,draw,fill=black!5] (n4) at (3,0)  {4};

\draw (n1) to[bend left=60] (n3); 
\draw (n2) to[bend left=60] (n4);
\draw[dashed] (n1) to (n2);
\draw[dashed] (n2) to (n3);
\draw[dashed] (n3) to (n4);
\draw[dashed] (n1) to[bend left=70] (n4);
\end{tikzpicture}
&
\begin{tikzpicture}   
[scale=1.0,auto=left,every node/.style={circle,draw,fill=black!5}]   
\node (n1) at (0,0) {1};   
\node (n2) at (1,0) {2};   
\node (n3) at (2,0)  {3};   
\node (n4) at (3,0)  {4};

\draw (n1) to[bend left=70] (n4); 
\draw (n2) to (n3);
\draw[dashed] (n1) to (n2);
\draw[dashed] (n3) to (n4);
 \draw[dashed] (n1) to[bend left=50] (n3);
\draw[dashed] (n2) to[bend left=50] (n4);
\end{tikzpicture}
\end{tabular}
\end{center}
\caption{\label{fig:LMIM}The patterns characterizing the graphs of LMIM-width  1. The first pattern is~$P_{abcd}$.}
\end{figure}

The graphs that have MIM-width 1 correspond to linear Hsu-decomposable graphs, and they can be characterized by the two forbidden patterns of Figure~\ref{fig:MIM}~\cite{V12, KKST17}. 

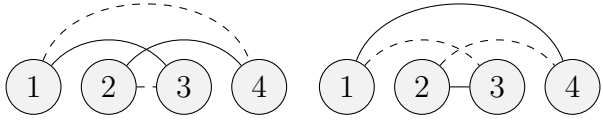
\begin{figure}[h!]
\begin{center}
\begin{tabular}{cc}
\begin{tikzpicture}   
[scale=1.0,auto=left]   
\node [circle,draw,fill=black!5](n1) at (0,0) {1};   
\node[circle,draw,fill=black!5] (n2) at (1,0) {2};   
\node[circle,draw,fill=black!5] (n3) at (2,0)  {3};   
\node[circle,draw,fill=black!5] (n4) at (3,0)  {4};       

\draw (n1) to[bend left=50] (n3); 
\draw (n2) to[bend left=50] (n4);
\draw[dashed] (n2) to (n3);
 \draw[dashed] (n1) to[bend left=70] (n4);
\end{tikzpicture}
&
\begin{tikzpicture}  
[scale=1.0,auto=left,every node/.style={circle,draw,fill=black!5}]   
\node (n1) at (0,0) {1};   
\node (n2) at (1,0) {2};   
\node (n3) at (2,0)  {3};   
\node (n4) at (3,0)  {4};

\draw (n1) to[bend left=70] (n4); 
\draw (n2) to (n3);
\draw[dashed] (n2) to[bend left=50] (n4);
\draw[dashed] (n1) to[bend left=50] (n3);
\end{tikzpicture}
\end{tabular}
\end{center}
\caption{\label{fig:MIM} The patterns characterizing the graphs of MIM-width 1. The first pattern is~$P_{bd}$.}
\end{figure}

As far as we know the complexity of the recognition of LMIM-width  1 and MIM-width 1 are still widely open. 


\vspace{0.5cm}
\noindent
\textbf{Conclusion} 

In this paper we prove that forbidden patterns can help to better understand the relationships between graph classes, but it is no clear how they could help for the classification of the complexity of the recognition of the corresponding graph classes.
Since very similar patterns may lead to either linear (such as $C_{\emptyset}$) or  perhaps NP-hard recognition algorithms (such as $C_b$).



\paragraph{Acknowledgments.}
The authors thank Istvan Tomon for pointing out to references~\cite{PachT19}, \cite{PachT20}, \cite{MiddendorfP93} and \cite{Sinden66}.

\newpage{}
\bibliography{biblio.bib}{}

\begin{thebibliography}{10}

\bibitem{CabelloJ17}
Sergio Cabello and Miha Jejcic.
\newblock Refining the hierarchies of classes of geometric intersection graphs.
\newblock {\em Electron. J. Comb.}, 24(1):P1.33, 2017.

\bibitem{CardinalFMTV18}
Jean Cardinal, Stefan Felsner, Tillmann Miltzow, Casey Tompkins, and Birgit
  Vogtenhuber.
\newblock Intersection graphs of rays and grounded segments.
\newblock {\em J. Graph Algorithms Appl.}, 22(2):273--295, 2018.

\bibitem{CatanzaroCFHHHS15}
Daniele Catanzaro, Steven Chaplick, Stefan Felsner, Bjarni~V Halld{\'o}rsson,
  Magn{\'u}s~M Halld{\'o}rsson, Thomas Hixon, and Juraj Stacho.
\newblock Max point-tolerance graphs.
\newblock {\em Discrete Applied Mathematics}, 2015.

\bibitem{Corneil97}
Derek~G Corneil.
\newblock Extensions of permutation and interval graphs.
\newblock In {\em Proc. 18th Southeastern Conference on Combinatorics, Graph
  Theory and Computing}, pages 267--276, 1987.

\bibitem{CorneilOS09}
Derek~G. Corneil, Stephan Olariu, and Lorna Stewart.
\newblock The {LBFS} structure and recognition of interval graphs.
\newblock {\em {SIAM} J. Discrete Math.}, 23(4):1905--1953, 2009.

\bibitem{CorneilPS85}
Derek~G. Corneil, Yehoshua Perl, and Lorna~K. Stewart.
\newblock A linear recognition algorithm for cographs.
\newblock {\em {SIAM} J. Comput.}, 14(4):926--934, 1985.

\bibitem{CorreaFPS15}
Jos{\'{e}}~R. Correa, Laurent Feuilloley, Pablo P{\'{e}}rez{-}Lantero, and
  Jos{\'{e}}~A. Soto.
\newblock Independent and hitting sets of rectangles intersecting a diagonal
  line: Algorithms and complexity.
\newblock {\em Discrete {\&} Computational Geometry}, 53(2):344--365, 2015.

\bibitem{DaganGP88}
Ido Dagan, Martin~Charles Golumbic, and Ron~Y. Pinter.
\newblock Trapezoid graphs and their coloring.
\newblock {\em Discret. Appl. Math.}, 21(1):35--46, 1988.

\bibitem{Damaschke90}
P.~Damaschke.
\newblock Forbidden ordered subgraphs.
\newblock {\em Topics in Combinatorics and Graph Theory, R. Bodendiek and R.
  Hennm Eds}, pages 219--229, 1990.

\bibitem{DaviesKMW21}
James Davies, Tomasz Krawczyk, Rose McCarty, and Bartosz Walczak.
\newblock Grounded l-graphs are polynomially {\(\chi\)}-bounded.
\newblock {\em CoRR}, abs/2108.05611, 2021.

\bibitem{KKST17}
D.Y.Kang, O.Kwon, T.Strømme, and J.A.Telle.
\newblock A width parameter useful for chordal and co-comparability graphs.
\newblock {\em {Theoretical Computer Science}}, 704:1--17, 2017.

\bibitem{EvenPL72}
Shimon Even, Amir Pnueli, and Abraham Lempel.
\newblock Permutation graphs and transitive graphs.
\newblock {\em J. {ACM}}, 19(3):400--410, 1972.

\bibitem{Gavril00}
Gavril Fanica.
\newblock Maximum weight independent sets and cliques in intersection graphs of
  filaments.
\newblock {\em Inf. Process. Lett.}, 73(5-6):181--188, 2000.

\bibitem{Farber83}
Martin Farber.
\newblock Characterizations of strongly chordal graphs.
\newblock {\em Discrete Mathematics}, 43(2-3):173--189, 1983.

\bibitem{FelsnerMW97}
Stefan Felsner, Rudolf M{\"{u}}ller, and Lorenz Wernisch.
\newblock Trapezoid graphs and generalizations, geometry and algorithms.
\newblock {\em Discret. Appl. Math.}, 74(1):13--32, 1997.

\bibitem{FeuilloleyH21}
Laurent Feuilloley and Michel Habib.
\newblock Graph classes and forbidden patterns on three vertices.
\newblock {\em SIAM Journal on Discrete Mathematics (SIDMA)}, 35(1):55--90,
  2021.

\bibitem{Gallai67}
Tibor Gallai.
\newblock Transitiv orientierbare graphen.
\newblock {\em Acta Mathematica Hungarica}, 18(1-2):25--66, 1967.

\bibitem{GolovachHLMSSS17}
Petr~A. Golovach, Pinar Heggernes, Nathan Lindzey, Ross~M. McConnell,
  Vin{\'{\i}}cius~Fernandes dos Santos, Jeremy~P. Spinrad, and Jayme~Luiz
  Szwarcfiter.
\newblock On recognition of threshold tolerance graphs and their complements.
\newblock {\em Discret. Appl. Math.}, 216:171--180, 2017.

\bibitem{GolumbicRU83}
Martin~Charles Golumbic, Doron Rotem, and Jorge Urrutia.
\newblock Comparability graphs and intersection graphs.
\newblock {\em Discrete Mathematics}, 43(1):37--46, 1983.

\bibitem{HeathLR92}
Lenwood~S. Heath, Frank~Thomson Leighton, and Arnold~L. Rosenberg.
\newblock Comparing queues and stacks as mechanisms for laying out graphs.
\newblock {\em {SIAM} J. Discrete Math.}, 5(3):398--412, 1992.

\bibitem{HeathR92}
Lenwood~S. Heath and Arnold~L. Rosenberg.
\newblock Laying out graphs using queues.
\newblock {\em {SIAM} J. Comput.}, 21(5):927--958, 1992.

\bibitem{HellMR14}
Pavol Hell, Bojan Mohar, and Arash Rafiey.
\newblock Ordering without forbidden patterns.
\newblock In {\em Algorithms - {ESA} 2014 - 22th Annual European Symposium,
  Wroclaw, Poland, September 8-10, 2014. Proceedings}, pages 554--565, 2014.

\bibitem{Hixon13}
Thomas~Stuart Hixon.
\newblock Hook graphs and more : Some contributions to geometric graph theory.
\newblock Master's thesis, Technische Universitat Berlin, 2013.

\bibitem{JelinekT19}
V{\'{\i}}t Jel{\'{\i}}nek and Martin T{\"{o}}pfer.
\newblock On grounded {L}-graphs and their relatives.
\newblock {\em Electron. J. Comb.}, 26(3):P3.17, 2019.

\bibitem{KKM95}
T.~Kloks, D.~Kratsch, and H.~M\"uller.
\newblock Finding and counting small induced subgraphs efficiently.
\newblock In {\em Lecture Notes in Comp. Sci. 1017}, 1995.

\bibitem{KostochkaK97}
Alexandr~V. Kostochka and Jan Kratochv{\'{\i}}l.
\newblock Covering and coloring polygon-circle graphs.
\newblock {\em Discret. Math.}, 163(1-3):299--305, 1997.

\bibitem{LekkeikerkerB62}
C~Lekkeikerker and J~Boland.
\newblock Representation of a finite graph by a set of intervals on the real
  line.
\newblock {\em Fundamenta Mathematicae}, 51(1):45--64, 1962.

\bibitem{McGuinness96}
Sean McGuinness.
\newblock On bounding the chromatic number of l-graphs.
\newblock {\em Discret. Math.}, 154(1-3):179--187, 1996.

\bibitem{Mertzios12}
George~B. Mertzios.
\newblock The recognition of triangle graphs.
\newblock {\em Theor. Comput. Sci.}, 438:34--47, 2012.

\bibitem{Mertzios15}
George~B. Mertzios.
\newblock The recognition of simple-triangle graphs and of linear-interval
  orders is polynomial.
\newblock {\em {SIAM} J. Discret. Math.}, 29(3):1150--1185, 2015.

\bibitem{MiddendorfP90}
Matthias Middendorf and Frank Pfeiffer.
\newblock On the complexity of recognizing perfectly orderable graphs.
\newblock {\em Discrete Mathematics}, 80(3):327--333, 1990.

\bibitem{MiddendorfP92}
Matthias Middendorf and Frank Pfeiffer.
\newblock The max clique problem in classes of string-graphs.
\newblock {\em Discret. Math.}, 108(1-3):365--372, 1992.

\bibitem{MiddendorfP93}
Matthias Middendorf and Frank Pfeiffer.
\newblock Weakly transitive orientations, hasse diagrams and string graphs.
\newblock {\em Discret. Math.}, 111(1-3):393--400, 1993.

\bibitem{Mitchell79}
Sandra~L. Mitchell.
\newblock Linear algorithms to recognize outerplanar and maximal outerplanar
  graphs.
\newblock {\em Inf. Process. Lett.}, 9(5):229--232, 1979.

\bibitem{MonmaRT88}
Clyde~L. Monma, Bruce~A. Reed, and William~T. Trotter.
\newblock Threshold tolerance graphs.
\newblock {\em J. Graph Theory}, 12(3):343--362, 1988.

\bibitem{Olariu91}
Stephan Olariu.
\newblock An optimal greedy heuristic to color interval graphs.
\newblock {\em Inf. Process. Lett.}, 37(1):21--25, 1991.

\bibitem{PachT19}
J{\'{a}}nos Pach and Istv{\'{a}}n Tomon.
\newblock Ordered graphs and large bi-cliques in intersection graphs of curves.
\newblock {\em Eur. J. Comb.}, 82, 2019.

\bibitem{PachT20}
J{\'{a}}nos Pach and Istv{\'{a}}n Tomon.
\newblock On the chromatic number of disjointness graphs of curves.
\newblock {\em J. Comb. Theory, Ser. {B}}, 144:167--190, 2020.

\bibitem{RamalingamR88}
G.~Ramalingam and C.P. Rangan.
\newblock A unified approach to domination problems on interval graphs.
\newblock {\em Information Processing Letters}, 27(5):271--274, 1988.

\bibitem{Seinsche74}
D.~Seinsche.
\newblock On a property of a clas of n-colorable graphs.
\newblock {\em J. of Combinatorial Theory B}, 16:191--193, 1974.

\bibitem{Sinden66}
Frank~W Sinden.
\newblock Topology of thin film rc circuits.
\newblock {\em Bell System Technical Journal}, 45(9):1639--1662, 1966.

\bibitem{SotoC15}
Mauricio Soto and Christopher~Thraves Caro.
\newblock p-{B}ox: {A} new graph model.
\newblock {\em Discrete Mathematics {\&} Theoretical Computer Science},
  17(1):169--186, 2015.

\bibitem{S93}
J.P. Spinrad.
\newblock Doubly lexical ordering of dense 0-1 matrices.
\newblock {\em Inf. Proc. Letters}, 45:229--235, 1993.

\bibitem{Unger88}
Walter Unger.
\newblock On the k-colouring of circle-graphs.
\newblock In {\em {STACS} 88, 5th Annual Symposium on Theoretical Aspects of
  Computer Science}, volume 294 of {\em Lecture Notes in Computer Science},
  pages 61--72. Springer, 1988.

\bibitem{V12}
Martin Vatshelle.
\newblock {\em New width parameters of graphs}.
\newblock PhD thesis, University of Bergen, Norway, 2012.

\bibitem{Wood06}
David~R. Wood.
\newblock Characterisations of intersection graphs by vertex orderings.
\newblock {\em Australas. {J} Comb.}, 34:261--268, 2006.

\end{thebibliography}
\bibliographystyle{plain}

\end{document}